\definecolor{beamer@blendedblue}{rgb}{0.2,0.2,0.7}      % color used in beamer
\newtheorem{theorem}{Theorem}
\newtheorem{proposition}[theorem]{Proposition}
\newtheorem{lemma}{Lemma}
\newlist{todolist}{itemize}{2}
\setlist[todolist]{label=$\square$}
\DeclareMathOperator{\tr}{Tr}
\newcommand{\ox}{\otimes}
\newcommand{\1}{\mathbbm{1}}
\newcommand*{\cF}{\mathcal{F}}
\newcommand*{\cH}{\mathcal{H}}
\newcommand*{\cO}{\mathcal{O}}
\newcommand*{\cP}{\mathcal{P}}
\newcommand*{\cS}{\mathcal{S}}
\newcommand*{\bF}{\mathbb{F}}
\newcommand*{\bP}{\mathbb{P}}
\definecolor{alizarin}{rgb}{0.82, 0.1, 0.26}
\definecolor{googleblue}{HTML}{4285F4}
\definecolor{googlered}{HTML}{DB4437}
\definecolor{googleyellow}{HTML}{F4B400}
\definecolor{googlegreen}{HTML}{0F9D58}
\definecolor{klevinblue}{HTML}{002FA7}
\definecolor{tiffanyblue}{HTML}{0ABAB5}
\newcommand\prlsection[1]{\textit{\textbf{#1}}---}
\newcommand\prlsubsection[1]{\paragraph*{#1\!\!\!\!}}
\begin{document}

% Optional titles: 
%%% Quantum state verification achiving global optimal efficiency
%%% Quantum memory assisted entangled state verification with local measurements
%%% Optimal Verification of Entangled States assisted by Quantum Memory
%%% Quantum Memory Assisted Entangled State Verification
%%% Optimal Verification of Entangled States Assisted by Quantum Memory
%%% Memory Effects in Quantum State Verification
%%% Optimal Verification of Entangled States with Quantum Memory
% Remark: Since we cannot declare the optimality in the general case, we'd better focus on 
% Advantage of quantum memory, which is commonly termed as "Memory effects in Quantum XXX":
%%% Memory Effects in Quantum Channel Discrimination
%%% Quantum channels and memory effects
%%% Memory effects in quantum processes
%%% Memory Effects in Quantum Metrology
%%% Loss of coherence and memory effects in quantum dynamics
%%% Memory Effects in Quantum Dynamics Modelled by Quantum Renewal Processes
%%% Memory effects in quantum information transmission across a Hamiltonian dephasing channel
\title{Quantum memory assisted entangled state verification with local measurements}
%%%
\author{Siyuan Chen}
\affiliation{Hefei National Research Center for Physical Sciences at the Microscale and School of Physical Sciences, University of Science and Technology of China, Hefei 230026, China}%
\affiliation{College of Physics, Jilin University, Changchun 130012, China}%
%%%
\author{Wei Xie}%
\affiliation{School of Computer Science and Technology,% 
University of Science and Technology of China, Hefei 230027, China}%
%%%
\author{Ping Xu}
\affiliation{Institute for Quantum Information \& State Key Laboratory of High Performance Computing, %
College of Computer Science and Technology, National University of Defense Technology, %
Changsha 410073, China}
%%%
\author{Kun Wang}
\email{nju.wangkun@gmail.com}%
\affiliation{Institute for Quantum Information \& State Key Laboratory of High Performance Computing, %
College of Computer Science and Technology, National University of Defense Technology, %
Changsha 410073, China}

\begin{abstract}
\noindent 
We consider the quantum memory assisted quantum state verification task,
where an adversary prepare independent multipartite entangled states and send to the local verifiers, 
who then store several copies in the quantum memory and measure them collectively to make decision.
% Result 1
We establish an exact analytic formula for optimizing two-copy state verification, where the verifiers store two copies, 
and give a globally optimal two-copy strategy for 
multi-qubit graph states involving only Bell measurements.
% Result 2
When the verifiers can store arbitrarily many copies, 
we present a dimension expansion technique that designs efficient verification strategies for this case, 
showcasing its application to efficiently verifying GHZ-like states. 
These strategies become increasingly advantageous with growing memory resources, 
ultimately approaching the theoretical limit of efficiency.
% Conclusion
Our findings demonstrate that quantum memories enhance state verification efficiency,
sheding light on error-resistant strategies and practical applications 
of large-scale quantum memory-assisted verification.
\end{abstract}
\date{\today}
\maketitle

% Increase the gap between paragraphs.
% \setlength{\parskip}{0.4\baselineskip}
%%%%%%%%%%%%%%%%%%%%%%%%%%%%%%%%%%%%%%%%%%%%%%%%%%%%%%%%%%%%%%%%%%%%%%%%%%%%%%%%%%%%%%%%%%%%%%%%%%%
%%%%%%%%%%%%%%%%%%%%%%%%%%%%%%%%%%%%%%%%%%%%%%%%%%%%%%%%%%%%%%%%%%%%%%%%%%%%%%%%%%%%%%%%%%%%%%%%%%%
%%%%%%%%%%%%%%%%%%%%%%%%%%%%%%%%%%%%%%%%%%%%%%%%%%%%%%%%%%%%%%%%%%%%%%%%%%%%%%%%%%%%%%%%%%%%%%%%%%%
\prlsection{Introduction.}The precise and efficient characterization of quantum states 
is a pivotal endeavor 
in many quantum information processing tasks such as quantum teleportation~\cite{PhysRevLett.70.1895}, 
quantum cryptography~\cite{pirandola2020advances}, 
and measurement-based quantum computation~\cite{Gottesman_1999}. 
While the tomography method theoretically possesses the capability to reconstruct the complete density matrix~\cite{D_Ariano_2002,PhysRevLett.129.133601,PhysRevA.105.032427}, its computational demands and time-consuming nature become 
particularly pronounced as the size of the quantum system increases, due to the curse of dimensionality. 
Fortunately, the need for tomography diminishes when our focus is narrowed to 
specific characteristics of quantum systems. 
Numerous statistical methods have been devised for 
quantum certification, validation, and benchmarking~\cite{eisert2020quantum,kliesch2021theory}. 
Among these, quantum state verification (QSV)~\cite{Hayashi2015Verifiable,pallister2018optimal,Zhu_2019_adver,Wang_2019,Li_2019,Yu_2019,Liu_2021,miguel2022collective,Zhang_2020,jiang2020towards}
% Liu2019Efficient,Zhu2019General,zhu2019optimal,zhu2019efficient,Li_2020,Dangniam2020Optimal,
% Zhang_2020,jiang2020towards,Han2021Optimal,li2021verification,Li2021Minimum,Liu2021Efficient} 
not only accurately estimates the quality of the quantum states but also consumes an exponentially 
smaller number of quantum state copies, thus emerging as a highly potential tool.
We refer the interested readers to~\cite{Yu_2022} and references therein.

In QSV, we consider a quantum device designed to produce a multipartite pure state $\ket{\psi}$. 
Throughout this work, we assume that $\ket{\psi}$ is $n$-partite 
and each party is $d$-dimensional, with associated Hilbert space $\cH$.
However, it might work incorrectly and outputs 
independent states $\sigma_1, \sigma_2, \ldots, \sigma_N$ in $N$ runs. 
It is guaranteed that either $\sigma_j=\proj{\psi}$ for all $j$ (good case)
or $\bra{\psi}\sigma_j\ket{\psi}\leq 1-\varepsilon$ for all $j$ (bad case).
After recieving these states, a verifier performs two-outcome measurements randomly chosen 
from a set of available measurements.
Each two-outcome measurement $\{T_\ell,\1-T_\ell\}$ is specified by some operator $T_\ell$
and is performed with probability $p_\ell$, corresponding to passing the test.
In the bad case, the maximal probability that $\sigma_j$ passes the 
test satisfies~\cite{pallister2018optimal}
\begin{align}\label{eq:pallister2018optimal}
  \max_{\bra{\psi}\sigma_j\ket{\psi}\leq1-\varepsilon}\tr[\Omega\sigma_j]
= 1 - (1-\lambda_2(\Omega))\varepsilon,
\end{align}
where $\Omega=\sum_\ell p_\ell T_\ell$ is a verification strategy
and $\lambda_2(\Omega)$ is the second largest eigenvalue of $\Omega$.
In the bad case, all the $N$ sampled quantum states can pass the test 
with probability at most $[1 - (1-\lambda_2(\Omega))\varepsilon]^N$.
Hence to achieve certain fixed worst-case failure probability $\delta$, it suffices to take
\begin{align}\label{eq:onecpN}
    N(\Omega) = \frac{\ln\delta}{\ln[1 - (1-\lambda_2(\Omega))\varepsilon]}
\approx \frac{1}{(1-\lambda_2(\Omega))\varepsilon} \ln \frac{1}{\delta},
\end{align}
where $\ln$ denotes the natural logarithm and 
the approximation holds when $\varepsilon$ is small. 
We call $N(\Omega)$ the sample complexity 
of the verification strategy $\Omega$ in abuse of notation.
Specially, the globally optimal strategy $\{\proj{\psi},\1-\proj{\psi}\}$
has sample complexity $N_{\rm glob}\approx1/\varepsilon\ln1/\delta$. 
We say strategy $\Omega$ achieves globally optimal efficiency if, when $\varepsilon$ is small enough, $N(\Omega)$ has the same asymptotic behavior as $N_{\rm glob}$. 
While the globally optimal strategy offers exceptional efficiency, 
its reliance on entangled measurements poses challenges in experimental implementation. 
Thus, we focus on designing efficient strategies that leverage only local measurements 
and classical communication, making them amenable to practical applications.

%%%%%%%%%%%%%%%%%%%%%%%%%%%%%%%%%%%%%%%%%%%%%%%%%%%%%%%%%%%%%%%%%%%%%%%%%%%%%%%%%%%%%%%%%%%%%%%%%%%
%%%%%%%%%%%%%%%%%%%%%%%%%%%%%%%%%%%%%%%%%%%%%%%%%%%%%%%%%%%%%%%%%%%%%%%%%%%%%%%%%%%%%%%%%%%%%%%%%%%
\prlsection{Motivation.}Quantum memories, analogous to the digital memory used in classical computers, 
have been realized in diverse physical systems~\cite{Heshami_2016}.
For example, Bhaskar \emph{et al.}~\cite{Bhaskar_2020} demonstrated an integrated single 
solid-state spin memory for implementing asynchronous photonic Bell-state measurements, 
a crucial element in quantum repeaters. 
Advances in quantum memories offer substantial benefits to burgeoning quantum technologies 
such as quantum key distribution~\cite{Bhaskar_2020} and quantum control~\cite{roget2020quantum},
and fundamentally revolutionize our understanding of 
physical phenomena like the uncertainty principle~\cite{coles2017entropic}.
Given these promising developments, the question naturally arises: 
Can we harness quantum memories to enhance quantum state verification?

In this Letter, we first propose the $(n,d,k)$-verification strategy.
By fixing the copy number $k$, we quantitatively limit the capability of quantum memories for the verifiers.
In the case of $k = 2$, verifiers have the minimum quantum memory.
For such a two-copy verification strategy $\Omega$, we delineate the intrinsic value $\lambda_{\star}(\Omega)$ and $\varepsilon_{\max}\left(\Omega\right)$ that underpin its verification efficiency.
Using these intrinsic values, we find a globally optimal $(n,d,2)$-strategy for graph states with only one measurement setting.
Furthermore, we propose a dimension expansion method for the case of $k > 2$.
We show that for all GHZ-like states, a group of local strategies exists with $k = \{1, 2, 3, \ldots\}$
such that efficiency finally approaches global optimality as $k \to \infty$.
This work takes a crucial step in demonstrating how to construct the most efficient QSV strategy under the constraints of limited quantum memories.

Several preceding studies have demonstrated the potential of quantum memory to improve the efficiency of quantum state verification, 
albeit from different viewpoints and were applied in different ways. 
For example, Liu~\emph{et al.}~\cite{Liu_2021} constructed a universally optimal protocol for verifying entangled states by employing quantum nondemolition measurements. 
However, this protocol's practicality is limited because ancilla qubits have to be transported between different parties.
Miguel-Ramiro \emph{et al.}~\cite{miguel2022collective} introduced collective strategies for the efficient, local verification of ensembles of Bell pairs. 
However, their strategies are limited to Bell states and GHZ states with Werner-type noise and require complex error number gates (ENG).
We provide a detailed comparison of our work with these works with illustrative examples in Appendix~\ref{appx:comparision}, 
highlighting the esential differences.

%%%%%%%%%%%%%%%%%%%%%%%%%%%%%%%%%%%%%%%%%%%%%%%%%%%%%%%%%%%%%%%%%%%%%%%%%%%%%%%%%%%%%%%%%%%%%%%%%%%
%%%%%%%%%%%%%%%%%%%%%%%%%%%%%%%%%%%%%%%%%%%%%%%%%%%%%%%%%%%%%%%%%%%%%%%%%%%%%%%%%%%%%%%%%%%%%%%%%%%
%%%%%%%%%%%%%%%%%%%%%%%%%%%%%%%%%%%%%%%%%%%%%%%%%%%%%%%%%%%%%%%%%%%%%%%%%%%%%%%%%%%%%%%%%%%%%%%%%%%
\prlsection{Quantum memory assisted state verification.}In this verification strategy,
$n$ spatially disparate verifiers conduct a test as follows:
First, they store $k$ copies of $d$-dimensional qudits in their local quantum memories;
Then, they measure their local copies in $\cH^k\equiv\cH^{\otimes k}$ 
using (possibly entangled) measurements and make a decision based on the outcomes.
This ``store-and-measure'' strategy is vividly illustrated 
in Fig.~\ref{fig:multicopy-verification-framework} for $k=2$.
The test will be repeated $M$ times and the total number of consumed states is $Mk$.
We designate this quantum memory-assisted strategy as an \emph{$(n, k, d)$ verification strategy}.
Its crucial distinction from standard verification strategies,
comprehensively reviewed in~\cite{Yu_2022}, 
lies in the latter's absence of quantum memory assistance. 
In our notation, these standard strategies fall under the category of $(n,1,d)$ strategies.
In the good case, the overall state stored in the quantum memories
admits a tensor product structure: $\ket{\Psi}:=\bigotimes_{r = 1}^k \ket{\psi}^{(r)}$, 
where the superscript $r$ represents the $r$-th copy in the quantum memory. 
The verifiers perform a local binary measurement $\{T_\ell,\1-T_\ell\}$ 
such that state $\ket{\Psi}$ passes the test with certainty. 
In the bad case, we assume that the $k$ states produced by the quantum device
are independent, indicating that the fake state 
in the composite space $\cH^{nk}$ has the form
\begin{align} \label{eq:Hugefakestate}
    \xi =  \bigotimes_{r = 1}^{k} \sigma^{(r)},
\end{align}
where each $\sigma^{(r)}$ satisfies $\bra{\psi}\sigma^{(r)}\ket{\psi}\leq 1 - \varepsilon$. 
Correspondingly, the maximal probability that the fake state $\xi$ 
in the bad case can pass the test is
\begin{align}\label{eq:pm}
    p(\Omega) 
:= \max_{\bra{\psi} \sigma^{(r)} \ket{\psi} \leq 1 -\varepsilon}
    \tr\left[\Omega\left (\bigotimes_{r = 1}^{k} \sigma^{(r)}\right)\right].    
\end{align}
The minimum required number of measurements to saturate the worst-case failure probability, 
denoted as $M_m(\Omega)$, is given by $M_m(\Omega) = \ln \delta / \ln p(\Omega)$. 
Thus, the total number of copies consumed by the verification strategy $\Omega$ satisfies
\begin{align}
   N_m(\Omega) = kM_m(\Omega) = \frac{k \ln \delta}{\ln p(\Omega)}. \label{eq:Nm}
\end{align}
The verifiers' objective is to design efficient memory-assisted strategies $\Omega$ 
that minimize the number of copies consumed.

%****************************************************************************%
\begin{figure}[t]  % t: Put the figure on the top.
    \centering
    \includegraphics[width=0.96\linewidth,keepaspectratio]{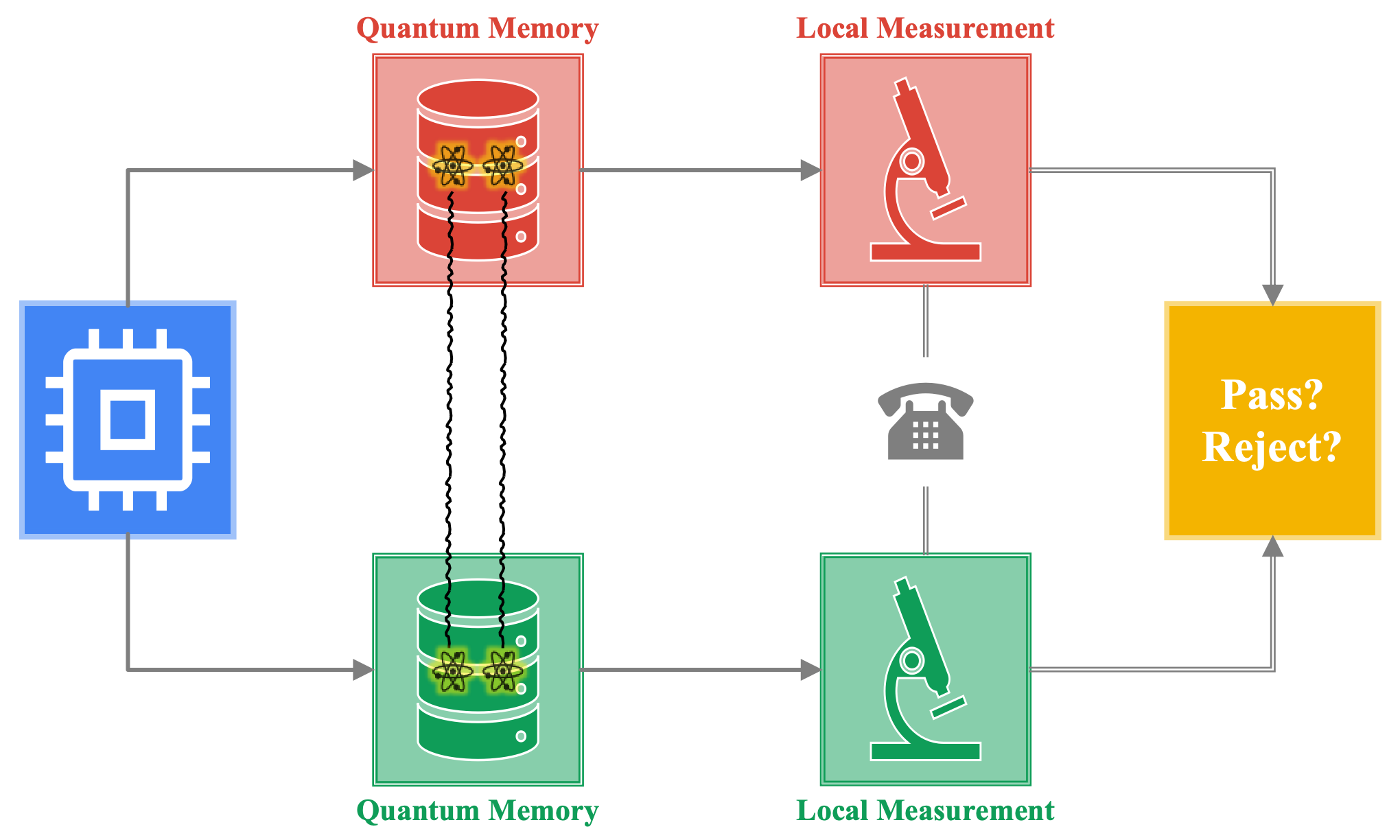}
    \caption{Schematic view of quantum memory assisted state verification.
            In this $(2,2,d)$ strategy, 
            the verifiers store two copies of quantum states (represented by atoms) 
            in their local quantum memories. 
            They then agree on local measurements via classical communication 
            and perform these measurements on their respective qudits. 
            Finally, they make a ``pass/reject'' decision from the measurement outcomes.}
    \label{fig:multicopy-verification-framework}
\end{figure}
%****************************************************************************%

%%%%%%%%%%%%%%%%%%%%%%%%%%%%%%%%%%%%%%%%%%%%%%%%%%%%%%%%%%%%%%%%%%%%%%%%%%%%%%%%%%%%%%%%%%%%%%%%%%%
%%%%%%%%%%%%%%%%%%%%%%%%%%%%%%%%%%%%%%%%%%%%%%%%%%%%%%%%%%%%%%%%%%%%%%%%%%%%%%%%%%%%%%%%%%%%%%%%%%%
%%%%%%%%%%%%%%%%%%%%%%%%%%%%%%%%%%%%%%%%%%%%%%%%%%%%%%%%%%%%%%%%%%%%%%%%%%%%%%%%%%%%%%%%%%%%%%%%%%%
\prlsection{Two-copy verification strategy.}We analytically solve the maximization problem 
in Eq.~\eqref{eq:pm} for the case of $k=2$, 
yielding an exact analytic formula for optimizing two-copy state verification.
First of all, we simplify the form of the optimisation in Eq.~\eqref{eq:pm}.
Regarding the permutation invariant nature of the verifiers, we show that 
it is best to consider verification strategies that are symmetric with respect to 
the two state copies; i.e., $\bF_{1\leftrightarrow2}\Omega\bF_{1\leftrightarrow2}=\Omega$,
where $\bF_{1\leftrightarrow2}$ is the swap operator between the first and second copy.
Regarding Eq.~\eqref{eq:pm}, we make the following useful observations:
(a) It suffices to optimize over pure fake states; and
(b) If the quantum device is not too bad, i.e., 
there exists an \emph{insurance infidelity} $\varepsilon_{\max}\left(\Omega\right)\geq\varepsilon$ ,depending on verification strategy,
such that $\bra{\psi}\sigma\ket{\psi}\geq 1-\varepsilon_{\max}\left(\Omega\right)$ for all $\sigma$, 
it is then suffices to consider fake states $\sigma$ for which $\bra{\psi}\sigma\ket{\psi}=1-\varepsilon$.
We prove these observations in Appendix~\ref{app:two-cp_optimization}, 
where we elaborate the significance and bounds of 
the insurance infidelity parameter $\varepsilon_{\max}\left(\Omega\right)$.
Note that for $\Omega$ satisfying certain conditions, such as the strategy for graph states described below, $\varepsilon_{\max}(\Omega) \gg \varepsilon$, thus this insurance can be tested almost at no cost.
We introduce the following two projectors:
\begin{align}\label{eq:projectors}
    \bP_s := \frac{\bF_{1\leftrightarrow2} + \mathbb{I}_{12}}{2},\quad
    \bP_\psi := \proj{\psi} \otimes (\mathbb{I} - \proj{\psi}),
\end{align}
which are useful in deriving the analytic formula. 
Note that $\mathbb{P}_s$ is the projector onto the symmetric subspace of $\cH^n\otimes\cH^n$.
For any symmetric two-copy verification strategy $\Omega$, 
define the doubly projected operator $\Omega_\star := 2\bP_\psi\bP_s\Omega \bP_s\bP_\psi$.
Let $\lambda_{\star}(\Omega)$ be the maximal eigenvalue of the projected operator $\Omega_\star$. 
We show that, $\lambda_{\star}$ is the intrinsic property of $\Omega$ which underpins 
$\Omega$'s verification efficiency, as elucidated in the ensuing theorem.
The proof can be found in Appendix~\ref{app:two-cp_optimization}. 

\begin{theorem}
\label{theorem:optimization_target}
When $\lambda_{\star}(\Omega)<1$ and 
the existence of insurance fidelity $\varepsilon_{\max}\left(\Omega\right)$ is guaranteed, it holds that
\begin{align}\label{eq:two-copy-probability}
p(\Omega) = 1 -2(1 - \lambda_{\star}(\Omega))\varepsilon + \cO(\varepsilon^{1.5}).
\end{align}
Correspondingly, the sample complexity of $\Omega$ is given by
\begin{align}\label{eq:tensorNm}
   N_m(\Omega) 
= \frac{2\ln \delta}{\ln p(\Omega)} 
\approx \frac{1}{(1-\lambda_{\star}(\Omega))\varepsilon } \ln \frac{1}{\delta}.
\end{align}
\end{theorem}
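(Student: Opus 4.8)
The plan is to evaluate the bad-case maximum in Eq.~\eqref{eq:pm} perturbatively in the small infidelity $\varepsilon$ and to identify its leading coefficient with $\lambda_\star(\Omega)$. First I would use observations (a) and (b): it suffices to take the fake state to be a pure product $\ket{\phi_1}\otimes\ket{\phi_2}$ with each factor saturating the fidelity constraint, so I write $\ket{\phi_r}=\sqrt{1-\varepsilon}\,\ket{\psi}+\sqrt{\varepsilon}\,\ket{\chi_r}$ with $\langle\psi|\chi_r\rangle=0$ and $\langle\chi_r|\chi_r\rangle=1$. Expanding the product yields
\begin{align}
\ket{\phi_1}\otimes\ket{\phi_2}
&=(1-\varepsilon)\ket{\Psi}+\sqrt{\varepsilon(1-\varepsilon)}\,\ket{w}+\varepsilon\,\ket{\chi_1}\!\otimes\!\ket{\chi_2},
\end{align}
with $\ket{\Psi}=\ket{\psi}\otimes\ket{\psi}$ and $\ket{w}:=\ket{\psi}\otimes\ket{\chi_2}+\ket{\chi_1}\otimes\ket{\psi}$ the first-order (single-excitation) perturbation.

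Next I would expand $\langle\phi_1\phi_2|\Omega|\phi_1\phi_2\rangle$ order by order. The crucial input is that the good state passes with certainty, $\Omega\ket{\Psi}=\ket{\Psi}$, hence $\langle\Psi|\Omega=\langle\Psi|$; since both $\ket{w}$ and $\ket{\chi_1}\otimes\ket{\chi_2}$ are orthogonal to $\ket{\Psi}$, every cross term containing $\ket{\Psi}$ drops out, leaving
\begin{align}
\langle\phi_1\phi_2|\Omega|\phi_1\phi_2\rangle
&=(1-\varepsilon)^2+\varepsilon\,\langle w|\Omega|w\rangle+\cO(\varepsilon^{3/2}).
\end{align}
The residual $\cO(\varepsilon^{3/2})$ originates from the overlap between the first-order vector $\ket{w}$ and the second-order vector $\ket{\chi_1}\otimes\ket{\chi_2}$, which explains the exponent $1.5$ in Eq.~\eqref{eq:two-copy-probability}. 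Using $(1-\varepsilon)^2=1-2\varepsilon+\cO(\varepsilon^{3/2})$, the problem reduces to computing $\max\langle w|\Omega|w\rangle$ over admissible perturbation directions.

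The heart of the argument is the identity $\max\langle w|\Omega|w\rangle=2\lambda_\star(\Omega)$. For the symmetric choice $\ket{\chi_1}=\ket{\chi_2}=\ket{\chi}$ the perturbation becomes $\ket{w}=\ket{\psi}\otimes\ket{\chi}+\ket{\chi}\otimes\ket{\psi}=2\bP_s\bP_\psi(\ket{\psi}\otimes\ket{\chi})$, so that
\begin{align}
\langle w|\Omega|w\rangle
&=4\langle\psi\chi|\bP_\psi\bP_s\Omega\bP_s\bP_\psi|\psi\chi\rangle
=2\langle\psi\chi|\Omega_\star|\psi\chi\rangle.
\end{align}
Maximizing over unit $\ket{\chi}\perp\ket{\psi}$, i.e. over unit vectors in the range of $\bP_\psi$ on which $\Omega_\star$ is supported, returns $2\lambda_\star(\Omega)$. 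This yields the lower bound $p(\Omega)\ge 1-2(1-\lambda_\star(\Omega))\varepsilon+\cO(\varepsilon^{3/2})$, which already reproduces Eq.~\eqref{eq:two-copy-probability} once the matching upper bound is in hand; substituting into $N_m(\Omega)=2\ln\delta/\ln p(\Omega)$ and expanding $\ln p(\Omega)\approx-2(1-\lambda_\star(\Omega))\varepsilon$ then gives Eq.~\eqref{eq:tensorNm}.

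I expect the matching upper bound $\max\langle w|\Omega|w\rangle\le 2\lambda_\star(\Omega)$ to be the main obstacle. Writing $\ket{w}$ as the sum of its swap-symmetric and swap-antisymmetric parts, which $\Omega$ cannot mix because $\bF_{1\leftrightarrow2}\Omega\bF_{1\leftrightarrow2}=\Omega$, splits $\langle w|\Omega|w\rangle$ into a symmetric piece governed by $\lambda_\star(\Omega)=\lambda_{\max}(\Omega_\star)$ and an antisymmetric piece generated by $\ket{\chi_1}-\ket{\chi_2}$. Since $\Omega_\star$ only records the action of $\Omega$ on the symmetric single-excitation subspace, the delicate point is to show that an asymmetric fake state ($\ket{\chi_1}\ne\ket{\chi_2}$) cannot produce a strictly larger leading coefficient. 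This is precisely where I would invoke the swap-symmetry of $\Omega$ together with the insurance-infidelity hypothesis $\varepsilon_{\max}(\Omega)\ge\varepsilon$: they are what pin the worst-case perturbation to the symmetric sector, so that $\lambda_\star(\Omega)$ indeed controls $p(\Omega)$ and the error stays $\cO(\varepsilon^{3/2})$.
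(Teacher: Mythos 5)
Your setup and the lower-bound half are sound, and they follow the same perturbative route as the paper's own proof (Lemma~\ref{lemma:pure-states}, Proposition~\ref{prop:exist_insurance}, and Theorem~\ref{theorem:optimization_target-app}): reduce to pure product fake states saturating the constraint, expand to first order, and recognize the symmetric quadratic form as $\Omega_\star$. Indeed, your interference term $2\,\mathrm{Re}\bra{\chi_1\psi}\Omega\ket{\psi\chi_2}$ is written more honestly than the paper's, which at the corresponding step replaces it by the diagonal combination $\bra{\psi^\perp\psi}\Omega\ket{\psi\psi^\perp}+\bra{\psi'^\perp\psi}\Omega\ket{\psi\,\psi'^\perp}$, i.e.\ tacitly assumes the worst perturbation has $\ket{\psi'^\perp}=\ket{\psi^\perp}$. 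The genuine gap is exactly the step you flag and defer: the upper bound $\max_w\bra{w}\Omega\ket{w}\le 2\lambda_\star(\Omega)$. The route you sketch for it cannot work. Swap symmetry only block-diagonalizes $\Omega$ between the symmetric and antisymmetric sectors; it does nothing to suppress the antisymmetric block, and the insurance-infidelity hypothesis constrains where the maximum over $(\varepsilon_r,\varepsilon_r')$ sits, not the direction $(\chi_1,\chi_2)$ of the worst perturbation. Carrying your own decomposition $\ket{w}=\ket{w_s}+\ket{w_a}$ to the end gives
\begin{align}
\max_{w}\bra{w}\Omega\ket{w}
= 2\max\{\lambda_\star(\Omega),\,\mu_a(\Omega)\},
\qquad
\mu_a(\Omega):=\lambda_{\max}\bigl(2\,\bP_\psi\bP_a\Omega\bP_a\bP_\psi\bigr),
\quad \bP_a:=\tfrac{1}{2}\bigl(\mathbb{I}_{12}-\bF_{1\leftrightarrow2}\bigr),
\end{align}
and nothing in the stated hypotheses bounds $\mu_a$ by $\lambda_\star$.

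That this is a real obstruction and not a technicality is shown by the following example. Take $\ket{a}\perp\ket{\psi}$, $\ket{\phi}=(\ket{\psi}\ket{a}-\ket{a}\ket{\psi})/\sqrt{2}$, and $\Omega=\ketbra{\psi\psi}{\psi\psi}+c\,\ketbra{\phi}{\phi}$ with $0<c<1$. This $\Omega$ is swap symmetric, accepts $\ket{\psi}\ket{\psi}$ with certainty, and has $\lambda_\star(\Omega)=0$ because $\bP_s\ket{\phi}=0$; moreover $\gamma_\star(\Omega)=0$ and $\xi_\star(\Omega)=c/4$, so the insurance condition of Proposition~\ref{prop:exist_insurance} holds, and the exact passing probability is maximized at $(\varepsilon_r,\varepsilon_r')=(\varepsilon,\varepsilon)$ for every perturbation direction when $c<1$. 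Yet the admissible product attack $\ket{\phi_1}\ox\ket{\phi_2}$ with $\ket{\phi_{1,2}}=\sqrt{1-\varepsilon}\,\ket{\psi}\mp\sqrt{\varepsilon}\,\ket{a}$ (the antisymmetric direction $\chi_1=-\chi_2$) passes with probability $(1-\varepsilon)^2+2c\,\varepsilon(1-\varepsilon)=1-2(1-c)\varepsilon+\cO(\varepsilon^2)$, contradicting Eq.~\eqref{eq:two-copy-probability}. So the inequality you hoped to extract from swap symmetry plus the insurance hypothesis is false in general: to close the argument one must either add a hypothesis that dominates the antisymmetric block --- e.g.\ $\Omega\bP_\psi=0$, which holds for the paper's graph-state strategy and forces $\mu_a=\lambda_\star=0$, or directly $\mu_a(\Omega)\le\lambda_\star(\Omega)$, as holds for tensor-product strategies --- or restate the conclusion with $\max\{\lambda_\star,\mu_a\}$ in place of $\lambda_\star$. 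As written, your proposal establishes only the lower bound $p(\Omega)\ge 1-2(1-\lambda_\star(\Omega))\varepsilon+\cO(\varepsilon^{1.5})$, and the missing upper bound is not a routine step but the point where both your argument and the paper's need repair.
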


Comparing Eqs.~\eqref{eq:onecpN} and~\eqref{eq:tensorNm}, 
we see that it is $\lambda_{\star}(\Omega)$, instead of $\lambda_2(\Omega)$, 
that determines the sample complexity of $\Omega$ in the memory assisted scenario.
For the tensor product of single-copy globally optimal strategies, 
$\Omega_g = \proj{\psi}^{\ox 2}$, we find that $\lambda_{\star}(\Omega_g)=0$, 
implying a sample complexity of $1/\varepsilon\ln1/\delta$.
This confirms that, in this specific case, quantum memory assistance cannot surpass 
the ultimate bound established by entangled measurements. 
Similarly, for a tensor product strategy $\Omega = \Omega_l\otimes\Omega_l$, 
where $\Omega_l$ is any single-copy local verification strategy and quantum memories are absent,
$\lambda_{\star}(\Omega)=\lambda_2(\Omega_l)$, reducing precisely to the single-copy case. 
These examples demonstrate the alignment of our findings with existing results.
Extending Theorem~\ref{theorem:optimization_target} for arbitrary $k$ 
is possible through generalized versions of $\mathbb{P}_s$ and $\mathbb{P}_{\psi}$.
However, two-copy verification strategies already showcase the potential to achieve 
globally optimal efficiency as we will show in the following examples.
Moreover, the fidelity and coherence time requirements of quantum memory devices 
become increasingly stringent with larger $k$, 
potentially hindering their feasibility for practical applications beyond a certain threshold.

%%%%%%%%%%%%%%%%%%%%%%%%%%%%%%%%%%%%%%%%%%%%%%%%%%%%%%%%%%%%%%%%%%%%%%%%%%%%%%%%%%%%%%%%%%%%%%%%%%%
%%%%%%%%%%%%%%%%%%%%%%%%%%%%%%%%%%%%%%%%%%%%%%%%%%%%%%%%%%%%%%%%%%%%%%%%%%%%%%%%%%%%%%%%%%%%%%%%%%%
\prlsubsection{Graph states.}
As paradigmatic examples of quantum states which exhibit genuine multipartite entanglement, 
graph states are hold central importance in quantum computation and 
information due to their unique entanglement structure~\cite{raussendorf2001one,guhne2005bell,Broadbent2009Universal,perseguers2013distribution}.
A graph state is associated with a graph $G = (V,E)$. 
It can be prepared through Hadamard gates on qubit vertices in $V$ followed by control-Z gates on edges in $E$.
A simple example of graph state is $G_0$: 
\tikz \graph [nodes={circle,fill,inner sep=0.01pt,scale=0.6}, edges={-}] {1 -- 2 -- 3};, 
whose corresponding graph state is:
\begin{align}
    \ket{G_0} &= \frac{1}{\sqrt{8}} (\ket{000} + \ket{100} +\ket{010} \notag \\
              &\qquad - \ket{110} + \ket{001} + \ket{101} - \ket{011} + \ket{111}).
\end{align}

We leverage Theorem~\ref{theorem:optimization_target} to construct 
a two-copy verification strategy for arbitrary multi-qubit graph state $\ket{G}$,
demonstrating that even moderate quantum memory usage can boost the QSV efficiency to global optimality.

\begin{figure}[!hbtp]
    \centering
    \includegraphics[width=1.0\linewidth,keepaspectratio]{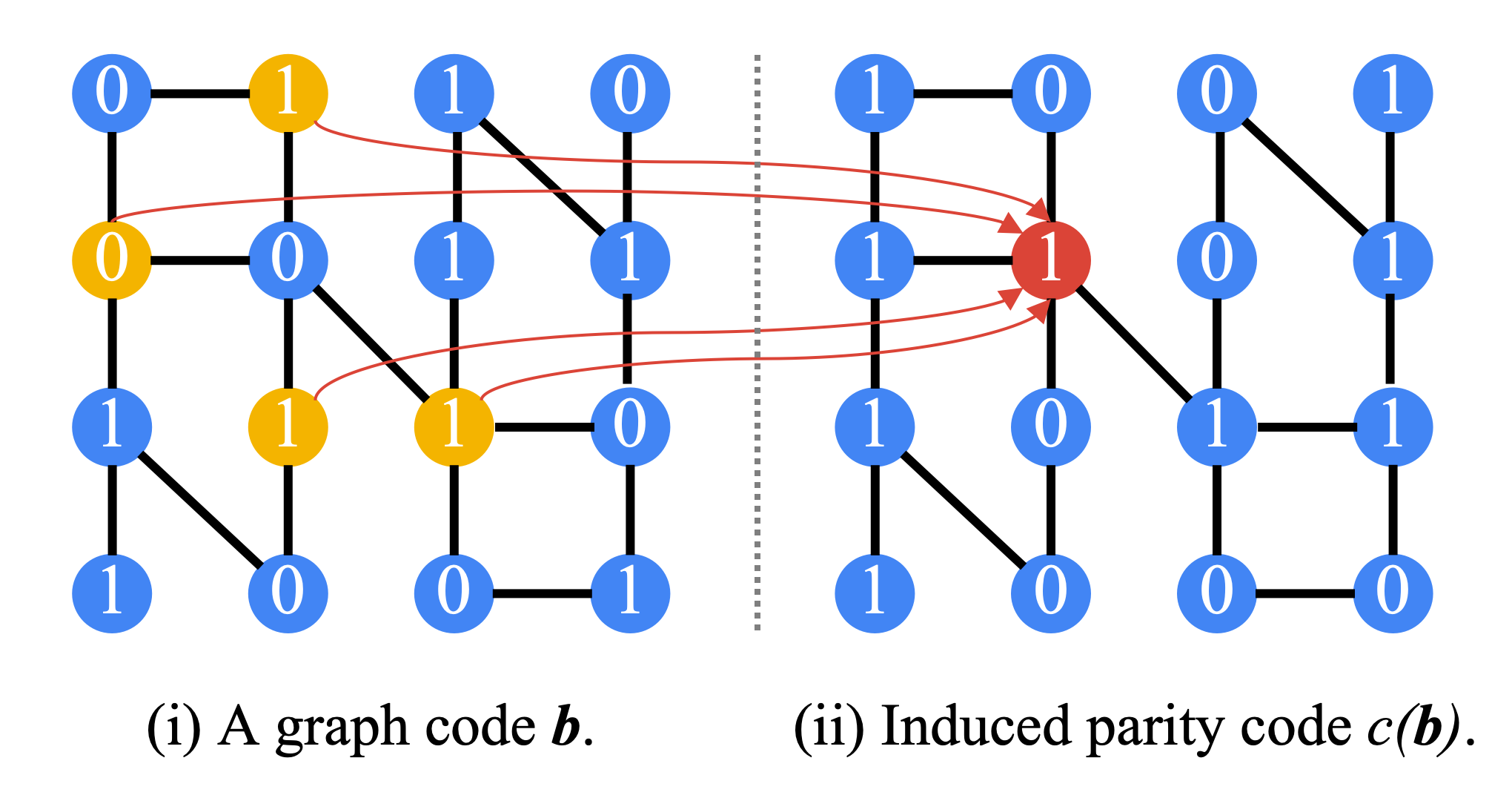}
   \caption{Schematic view of a graph code $\bm{b}$ of a graph 
            and its induced parity code $c(\bm{b})$. 
            The binary value of a vertex (red vertex) in the induced parity code 
            is given by the summation modulus $2$ of the values of its adjacent 
            vertices (yellow vertices) in the graph code $\bm{b}$.}
    \label{fig:codeparity}
\end{figure}

To formally describe our two-copy verification strategy for graph states, 
we begin by introducing the concept of graph code of a graph $G=(V,E)$.
Let $n=\vert V\vert$ be the number of vertices.
A \emph{graph code} $\bm{b}\in\{0,1\}^n$ is an $n$-bit binary string 
that assigns the binary value $\bm{b}_v\in \{0,1\}$ to vertex $v\in V$. 
Fig.~\ref{fig:codeparity}(i) visualizes a graph code of $G$ for example. 
Each graph code $\bm{b}$ uniquely induces a \textit{parity code} $c(\bm{b})\in\{0,1\}^n$,
where the binary string map $c:\{0,1\}^n\to\{0,1\}^n$ is defined as
$c_u(\bm{b}) := \sum_{v\in V, u\sim v} \bm{b}_v\;(\mathrm{mod}\;2)$,
$c_u$ is the value of vertex $u$, and $u\sim v$ means that $u$ is adjacent to $v$. 
An illustrative example is presented in Fig.~\ref{fig:codeparity}(ii). 
Let $\ket{\Phi_{00}}:=(\ket{00}+\ket{11})/\sqrt{2}$ be the standard two-qubit Bell state.
A binary code pair $(z, x)$ induces a locally transformed Bell state via
\begin{align}\label{eq:transformed-Bell-state}
    \ket{\Phi_{zx}} := (\mathbb{I}\otimes X^{x}Z^{z})\ket{\Phi_{00}},
\end{align}
where $X$ and $Z$ are the Pauli operators.
Our two-copy strategy for $\ket{G}$ involves only 
one binary measurement $\{\Omega_g,\mathbb{I}-\Omega_g\}$,
where $\Omega_g$ corresponding to passing the test is defined as
\begin{align}
\label{eq:left_graph_strategy}
\Omega_g &= \sum_{\bm{b}\in\{0,1\}^n} 
            \bigotimes_{j=1}^{n} 
            \proj{\Phi_{c_j(\bm{b})\bm{b}_j}}_{O_jO_j'},
\end{align}
where $O_j,O_j'$ represent two qubits held by the $j$-th verifier.
The verification strategy carries out as follows. 
In each test, the verifiers first store two copies of the states.
Then, the $j$-th verifier measures his qubits $O_jO_j'$ with the 
Bell measurement $\{\proj{\Phi_{zx}}\}_{x,z\in\{0,1\}}$ 
and records the outcome as $\bm{b}_j = x$ and $\bm{b}'_j = z$. 
Finally, they classically communicate the outcomes  
and obtain two graph codes $\bm{b},\bm{b}'$ of the graph $G$.
The states pass the test if and only if $\bm{b} = c(\bm{b}')$.

 \begin{figure}[!hbtp]
    \centering
    \includegraphics[width=0.96\linewidth]{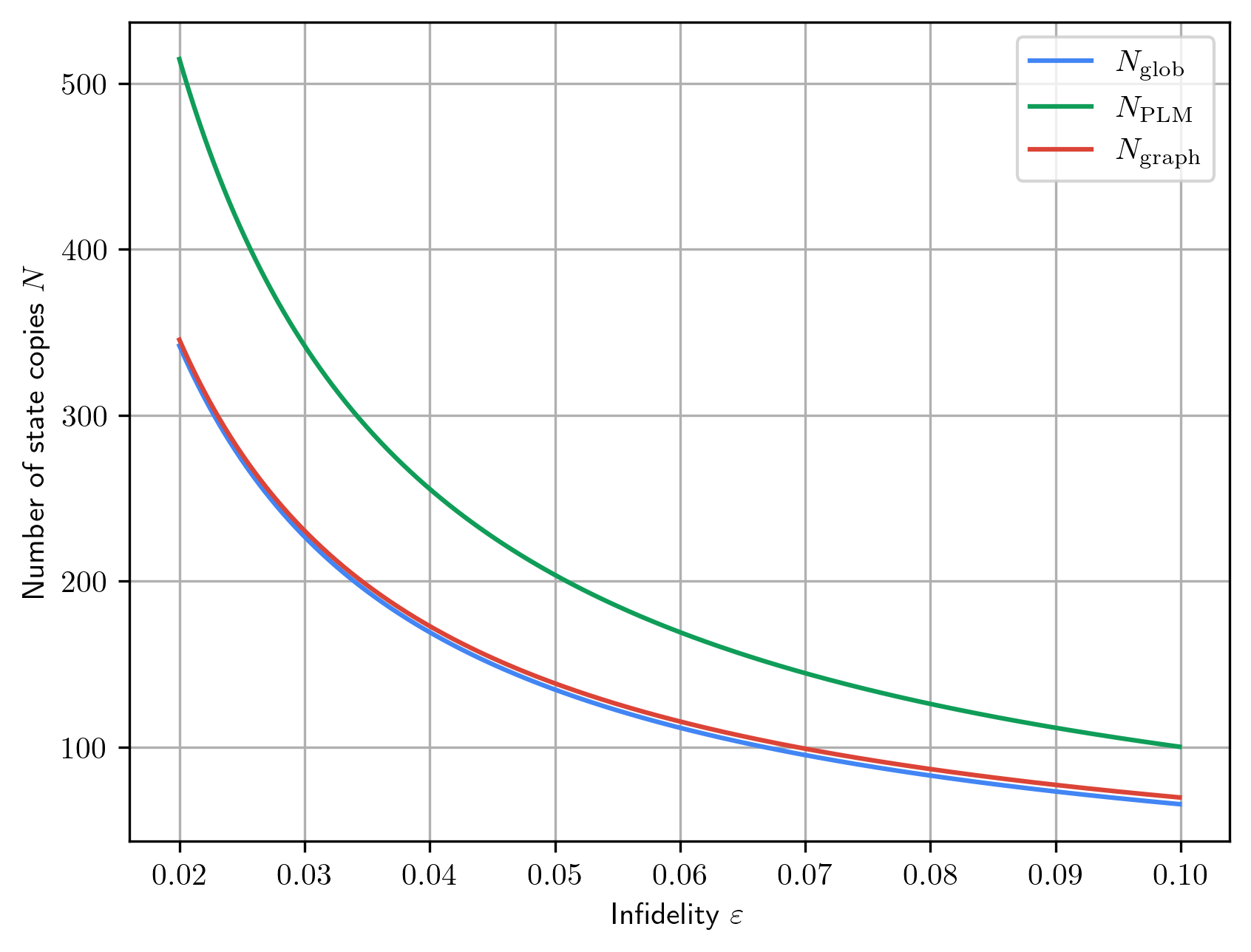}
    \caption{Comparison of the total number of state copies required to verify 
        the Bell state for different strategies as a function of the infidelity $\varepsilon$,
        where $\delta=0.001$. Here, 
        $N_{\rm graph}$ is the sample complexity of 
        our proposed two-copy graph verification strategy,
        $N_{\rm PLM}$ is the sample complexity of the optimal strategy 
        by Pallister \emph{et al.}~\cite{pallister2018optimal},
        and $N_{\rm glob}$ is the sample complexity of the globally optimal strategy.} 
    \label{fig:comparestrategy}
\end{figure}
Regarding the performance of our two-copy verification strategy $\Omega_g$,
we prove in Appendix~\ref{app:graph_disentangled} that $\lambda_\star(\Omega_g)=0$ and 
thus its optimal efficiency is achieved with a sample complexity of 
$N_{\rm graph}(\Omega_g) \approx 1/\varepsilon \ln1/\delta$ using Eq.~\eqref{eq:tensorNm},
indicating that $\Omega_g$ achieves globally optimal efficiency.
To showcase its significant advantage, 
we compare its efficiency with the optimal single-copy verification strategy 
by Pallister \emph{et al.}~\cite{pallister2018optimal} 
on verifying the canonical Bell state $\ket{\Phi_{00}}$.
As shown in Fig.~\ref{fig:comparestrategy}, 
our two-copy strategy rapidly converges towards 
the globally optimal solution in the small $\varepsilon$ regime, 
reducing the sample complexity by $50\%$ compared to 
the optimal single-copy verification strategy. 
This demonstrates a remarkable improvement in 
verification efficiency assisted by quantum memory.
Note that our two-copy verification strategy for the Bell state 
bears similarities with the celebrated 
entanglement-swapping protocol~\cite{zukowski1993event,Halder2007Entangling},
an important component of quantum networks.

Several remarks are in order. 
First, the construction of the above two-copy verification strategy for graph states, 
whose details can be found in Appendices~\ref{app:channel}, \ref{app:proofeqGraph}, and \ref{app:graph_disentangled}, is conceptually insightful and potentially extensible. 
Briefly, we begin by establishing a equivalence between information-preserving channels 
and optimal strategies, converting the verification problem to a state discrimination problem. Subsequently, we demonstrate that graph states can be leveraged to 
locally implement control-$Z$ gates, capitalizing on their inherent entanglement structure. 
This allows us to construct a quantum channel which induces the aforementioned strategy.
Second, 
the consistent Bell measurement across different verifiers, a key feature of our two-copy strategy, offers significant advantages for conducting state verification in neutral atom-based quantum systems~\cite{Bluvstein2023}. 
This consistency simplifies the verification process as a global laser can be employed, 
leveraging the Rydberg blockade radius, to parallelly
execute Bell measurements on all qubit pairs without single addressing~\cite{Wang_singleadressing}.
Third, we illustrate in Appendix~\ref{appx:fidelity-estimation} that, 
the verification strategy can be adapted to accomplish fidelity estimation.
Let $\sigma$ and $\sigma'$ be the unknown states produced in two device calls.
If the target quantum device is guaranteed to produce independent states, 
it holds that
\begin{align}
p_s = \tr[\Omega_g(\sigma \otimes \sigma')] 
    = \bra{G}\sigma\ket{G}\bra{G}\sigma'\ket{G} + \cO(\varepsilon^2).
\end{align}
Thus, when $\varepsilon$ is sufficiently small, 
the average fidelity $\cF$ of the states $\sigma$ with the target state $\ket{G}$
can be estimated from the statistical average of the passing frequency $p_s$ 
via $\cF=\sqrt{p_{s}}$.

%%%%%%%%%%%%%%%%%%%%%%%%%%%%%%%%%%%%%%%%%%%%%%%%%%%%%%%%%%%%%%%%%%%%%%%%%%%%%%%%%%%%%%%%%%%%%%%%%%%
%%%%%%%%%%%%%%%%%%%%%%%%%%%%%%%%%%%%%%%%%%%%%%%%%%%%%%%%%%%%%%%%%%%%%%%%%%%%%%%%%%%%%%%%%%%%%%%%%%%
%%%%%%%%%%%%%%%%%%%%%%%%%%%%%%%%%%%%%%%%%%%%%%%%%%%%%%%%%%%%%%%%%%%%%%%%%%%%%%%%%%%%%%%%%%%%%%%%%%%
\prlsection{Dimension expansion.}In the two-copy verification, 
we analytically solved the maximization problem in Eq.~\eqref{eq:pm},
relating the verification efficiency to an intrinsic property of $\Omega$.
However, it is demanding to generalize the result to larger $k$. 
Inspired by the observation that every $k$-tensor state $\ket{\Psi}$ 
can be equivalently viewed as a single $n$-partite state with local dimension $d^k$, we present the dimension expansion method that construct $(n,k,d)$-QSV protocol 
 according to existing $(n,1,d^k)$-QSV protocol with unchanged effeciency. 
This ``dimension expansion'' from $d$ to $d^k$ leverages quantum memory 
and establish an equivalence between an $(n,1,d^k)$ verification strategy and an $(n,k,d)$ strategy.
Concretely, we relax the maximization problem in Eq.~\eqref{eq:pm}
by considering any quantum state $\xi$ in $\cH^{nk}$ 
satisfying the fidelity constraint $\bra{\Psi}\xi\ket{\Psi}\leq(1-\varepsilon)^k$, 
providing an upper bound for the worst-case passing probability $p(\Omega)$:
\begin{align}
    p(\Omega) 
\leq \max_{\bra{\Psi} \xi \ket{\Psi} \leq  (1- \varepsilon)^k} \tr[\Omega \xi]
= 1 -  (1-\lambda_2(\Omega))\varepsilon',\label{eq:Nb_pre}
\end{align}
where $\varepsilon':=1 - (1 - \varepsilon)^k$ 
and the equality follows from Eq.~\eqref{eq:pallister2018optimal}.
Because $\ln \delta < 0$, according to Eq.~\eqref{eq:Nm}, we obtain an upper bound on $N_m(\Omega)$:
\begin{align}\label{eq:Nb_pre2}
N_m(\Omega) \leq  \frac{1}{ (1-\lambda_2(\Omega)) \varepsilon} \ln \frac{1}{\delta} =: N_{{\rm de},k}(\Omega).
\end{align}
Interestingly, $N_{{\rm de},k}(\Omega)$ is completely 
determined by $\lambda_2(\Omega)$, analogous to the single-copy state verification case. 

When investigating quantum memory assisted state verification, 
we have imposed two critical properties: 
(i) Locality: the fake states generated by the quantum device are independent; and 
(ii) Trust: the quantum memories are faithful without experimental error.
If either property is violated, the $k$-copy fake state 
might possess quantum correlation. 
In this correlated case, a weaker verification task determines whether \(\tr(\xi \ket{\Psi}\!\bra{\Psi}) < (1 - \varepsilon)^k\) or \(\xi = \ket{\Psi}\!\bra{\Psi}\), as discussed in Appendix~\ref{appx:comparision}.
The constraint then relaxes to $\bra{\Psi}\xi\ket{\Psi}\leq(1-\varepsilon)^k$,
leading to $N_m=N_{{\rm de},k}$ as evident from 
Eqs.~\eqref{eq:Nb_pre} and~\eqref{eq:Nb_pre2}. 
This signifies $N_{{\rm de},k}$ as a fundamental upper bound on 
the efficiency of quantum memory assisted state verification.

%%%%%%%%%%%%%%%%%%%%%%%%%%%%%%%%%%%%%%%%%%%%%%%%%%%%%%%%%%%%%%%%%%%%%%%%%%%%%%%%%%%%%%%%%%%%%%%%%%%
%%%%%%%%%%%%%%%%%%%%%%%%%%%%%%%%%%%%%%%%%%%%%%%%%%%%%%%%%%%%%%%%%%%%%%%%%%%%%%%%%%%%%%%%%%%%%%%%%%%
\prlsubsection{GHZ-like states.}
We demonstrate the power of the dimension expansion technique in 
constructing verification strategies for a broad class of GHZ-like states, 
encompassing arbitrary bipartite qudit states and GHZ states as special cases.
Mathematically, a multi-qudit GHZ-like state is defined as
\begin{align}\label{eq:GHZ-like}
    \ket{\psi_{\rm GHZ}} :=\sum_{j = 0}^{d - 1}s_j\ket{j_1} \otimes \cdots \otimes \ket{j_n},
\end{align}
where $\{\ket{j_r}\}_j$ is an orthonormal basis of the $r$-th qudit, and
the non-negative coefficients $s_j$ are decreasingly sorted and satisfy $\sum_js_j^2=1$. 
Whenever $s_0<1$, the GHZ state is entangled.
Note that the $k$-th tensor of a GHZ-like state is still a GHZ-like state, but with different coefficients.

\begin{figure}[!hbtp]
    \centering
    \includegraphics[width=0.96\linewidth]{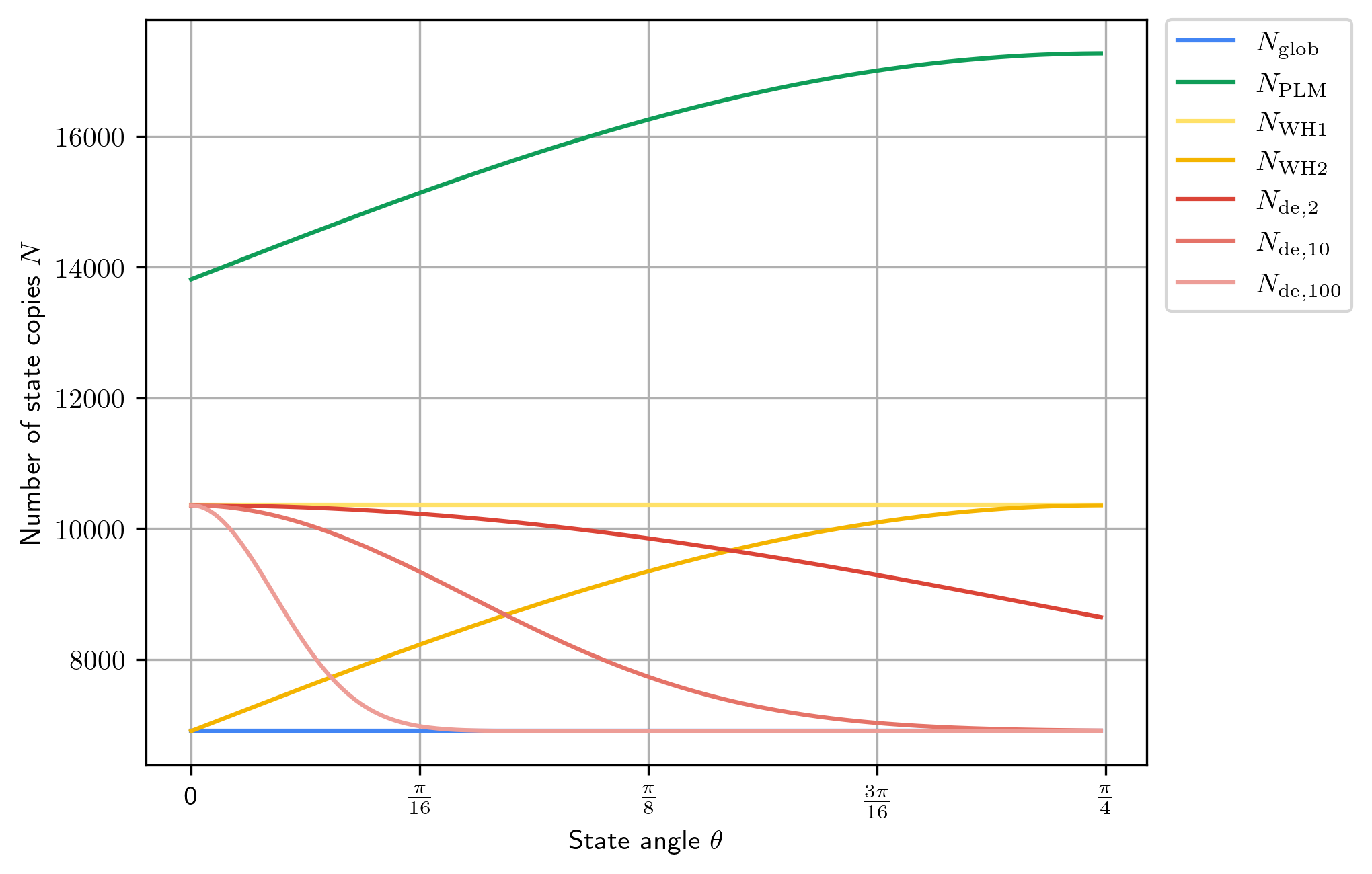}
    \caption{Comparison of the total number of state copies required to 
        verify the bipartite pure state $\ket{\psi} = \cos\theta \ket{00}+\sin\theta \ket{11}$
        for different strategies, where $\varepsilon=\delta=0.001$.
        Here, $N_{{\rm de},k}$ is the sample complexity of 
        our proposed dimension expansion strategy,
        $N_{\rm PLM}$ is the sample complexity of the optimal local strategy 
        by Pallister \emph{et al.}~\cite{pallister2018optimal},
        $N_{\rm WH1}$ and $N_{\rm WH2}$ are the sample complexities
        of the optimal one-way and two-way LOCC strategies by Wang and Hayashi~\cite{Wang_2019}, 
        and $N_{\rm glob}$ is the sample complexity of the globally optimal strategy.}
    \label{fig:arbver}
\end{figure}

Li \textit{et al.}~\cite{Li_2020} designed an efficient $(n,1,d)$ 
verification strategy $\Omega_{\rm LHZ}$ for GHZ-like states 
satisfying $\lambda_2(\Omega_{\rm LHZ}) = ((n - 1)s_0^2 + s_1^2)/(n+ (n - 1)s_0^2 + s_1^2)$.
The $(n,k,d)$-dimension expansion strategy for $\ket{\psi_{\rm GHZ}}$, 
which is deduced from the $(n,1,d^k)$ strategy 
for the $k$-th tensor product state $\ket{\psi_{\rm GHZ}}^{\otimes k}$, 
has the sample complexity
\begin{align}
    N_{{\rm de},k}(\ket{\psi_{\rm GHZ}})
&=  \frac{n +(n-1)s_0^{2k} + s_0^{2k-2}s_1^2}{n\varepsilon}
    \ln\frac{1}{\delta}.
\end{align}
One can verify that $N_{{\rm de},k}$ is monotonically decreasing in $k$; 
i.e., $k\geq k'$ implies $N_{{\rm de},k}(\ket{\psi_{\rm GHZ}})
\leq N_{{\rm de},k'}(\ket{\psi_{\rm GHZ}})$. 
Whenever $s_0<1$, indicating that the state is entangled, the dimension expansion 
strategy consistently outperforms the 
standard strategy with a \emph{net} benefit ratio of $s_0^{2k-2}$
and approaches the globally optimal efficiency when $k$ is sufficiently large.
Practically, the integer $k$ is upper bounded by $N_{{\rm de},k}$.

In Figure~\ref{fig:arbver}, the sample complexity required to verify 
the two-qubit state $\ket{\psi_\theta}=\cos\theta\ket{00}+\sin\theta\ket{11}$, 
being a special case of the GHZ-like states, is shown for different verification strategies.
We give the explicit construction of its verification strategy in Appendix~\ref{appx:dimension-expansion}.
The dimension expansion strategy derived here gives a remarkable 
improvement over the previously optimal local strategy
by Pallister \emph{et al.}~\cite{pallister2018optimal}
and optimal one-way LOCC strategy by Wang and Hayashi~\cite{Wang_2019} 
for the full range of $\theta\in(0,\pi/4)$, for the given values $\varepsilon$ and $\delta$.
Furthermore, it is evident from the figure that the dimension expansion strategy
becomes more and more advantageous as $k$ increases, eventually exceeding 
the optimal two-way LOCC strategy~\cite{Wang_2019} and
approaching the globally optimal efficiency, 
revealing the power of dimension expansion strategy.

%%%%%%%%%%%%%%%%%%%%%%%%%%%%%%%%%%%%%%%%%%%%%%%%%%%%%%%%%%%%%%%%%%%%%%%%%%%%%%%%%%%%%%%%%%%%%%%%%%%
%%%%%%%%%%%%%%%%%%%%%%%%%%%%%%%%%%%%%%%%%%%%%%%%%%%%%%%%%%%%%%%%%%%%%%%%%%%%%%%%%%%%%%%%%%%%%%%%%%%
%%%%%%%%%%%%%%%%%%%%%%%%%%%%%%%%%%%%%%%%%%%%%%%%%%%%%%%%%%%%%%%%%%%%%%%%%%%%%%%%%%%%%%%%%%%%%%%%%%%
% Pargraph 1: What we have done? In just one paragraph.
\prlsection{Conclusions.}We have proposed a theoretical framework to 
quantitatively analyze the performance boost offered by quantum memories in quantum state verification.
Our work demonstrates that memory-assisted verification strategies significantly 
outperform non-assisted ones, 
with a remarkable finding that even just two copies suffice to 
achieve the theoretical limit of verification efficiency.
This superiority lies in the extended storage capacity, 
enabling the verifier to perform powerful entangled measurements within the memory. 

% Our findings shed light on error-resistant strategies 
% and practical applications of large-scale quantum memory-assisted verification.
Many questions remain open.
Specifically, the analytic formula for two-copy verification and the optimal two-copy strategy 
for graph states might be generalized to wider scenarios 
with larger amount of quantum memories and arbitrary quantum states. 
However, deriving such solutions will likely require innovative techniques 
due to increased computational demands and higher state dimensions.

%%%%%%%%%%%%%%%%%%%%%%%%%%%%%%%%%%%%%%%%%%%%%%%%%%%%%%%%%%%%%%%%%%%%%%%%%%%%%%%%%%%%%%%%%%%%%%%%%%%
% Ack
%%%%%%%%%%%%%%%%%%%%%%%%%%%%%%%%%%%%%%%%%%%%%%%%%%%%%%%%%%%%%%%%%%%%%%%%%%%%%%%%%%%%%%%%%%%%%%%%%%%
\prlsection{Acknowledgements.}
Part of this work was done when K. W. was a researcher and S.-Y. C. was a research intern at Baidu Research.
This work was supported by 
%%% ADDED By KW (2022YFF0712800 & 2021ZD0301500) and WEIXIE (2021ZD0302901)
the National Key Research and Development Program of China (Grant No.~2022YFF0712800),
the Innovation Program for Quantum Science and Technology (Grant Nos.~2021ZD0301500 and 2021ZD0302901), and
%%% ADDED By WEIXIE
the National Natural Science Foundation of China (Grant No.~62102388).

%%%%%%%%%%%%%%%%%%%%%%%%%%%%%%%%%%%%%%%%%%%%%%%%%%%%%%%%%%%%%%%%%%%%%%%%%%%%%%%%%%%%%%%%%%%%%%%%%%%
% References
%%%%%%%%%%%%%%%%%%%%%%%%%%%%%%%%%%%%%%%%%%%%%%%%%%%%%%%%%%%%%%%%%%%%%%%%%%%%%%%%%%%%%%%%%%%%%%%%%%%
%apsrev4-2.bst 2019-01-14 (MD) hand-edited version of apsrev4-1.bst
%Control: key (0)
%Control: author (72) initials jnrlst
%Control: editor formatted (1) identically to author
%Control: production of article title (-1) disabled
%Control: page (0) single
%Control: year (1) truncated
%Control: production of eprint (0) enabled
%

% \bibliographystyle{apsrev4-2}
% \bibliography{references}

%%%%%%%%%%%%%%%%%%%%%%%%%%%%%%%%%%%%%%%%%%%%%%%%%%%%%%%%%%%%%%%%%%%%%%%%%%%%%%%%%%%%%%%%%%%%%%%%%%%
% Appendices
%%%%%%%%%%%%%%%%%%%%%%%%%%%%%%%%%%%%%%%%%%%%%%%%%%%%%%%%%%%%%%%%%%%%%%%%%%%%%%%%%%%%%%%%%%%%%%%%%%%
% \appendix
% \onecolumngrid
% https://tex.stackexchange.com/questions/488302/referencing-an-appendix-while-using-prl-style
\makeatletter
\newcommand{\appendixtitle}[1]{\gdef\@title{#1}}
\newcommand{\appendixauthor}[1]{\gdef\@author{#1}}
\newcommand{\appendixaffiliation}[1]{\gdef\@affiliation{#1}}
\newcommand{\appendixdate}[1]{\gdef\@date{#1}}
\makeatother

% Define the custom title block
\makeatletter%
\newcommand{\appendixmaketitle}{%
\begin{center}%
\vspace{0.4in}%
{\Large \@title \par}%
% {\normalsize
%   \lineskip .5em%
%   \begin{tabular}[t]{c}%
%     \@author
%   \end{tabular}\par}%
% \vspace{5pt}
% {\itshape \@affiliation \par}%
\end{center}%
\par%
}%
\makeatother%

\setcounter{secnumdepth}{2}
\appendix
\widetext
\newpage

\appendixtitle{\bf 
Supplemental Material for\\``Quantum memory assisted entangled state verification with local measurements''}
% \appendixauthor{Siyuan Chen, Wei Xie, and Kun Wang} 
% \appendixaffiliation{
% Institute for Quantum Computing, Baidu Research, Beijing 100193, China \\
% Physics Department, Jilin University \\
% School of Computer Science and Technology, University of Science and Technology of China}
\appendixmaketitle
\vspace{0.2in}

%%%%%%%%%%%%%%%%%%%%%%%%%%%%%%%%%%%%%%%%%%%%%%%%%%%%%%%%%%%%%%%%%%%%%%%%%%%%%%%%%%%%%%%%%%%%%%%%%%%
%%%%%%%%%%%%%%%%%%%%%%%%%%%%%%%%%%%%%%%%%%%%%%%%%%%%%%%%%%%%%%%%%%%%%%%%%%%%%%%%%%%%%%%%%%%%%%%%%%%
%%%%%%%%%%%%%%%%%%%%%%%%%%%%%%%%%%%%%%%%%%%%%%%%%%%%%%%%%%%%%%%%%%%%%%%%%%%%%%%%%%%%%%%%%%%%%%%%%%%
The contents of the supplementary material are structured as follows: 
In Appendix~\ref{app:two-cp_optimization}, we articulate two optimization targets within the framework of two-copy verification, specifically substantiating Theorem~\ref{theorem:optimization_target}. 
In Appendix~\ref{app:channel}, we establish connections between optimal verification protocols and optimal information-preserving channels, essential for the development of a two-copy graph state verification protocol. 
In Appendix~\ref{app:proofeqGraph}, we prove the graph state disentangled equation presented in Theorem~\ref{Lemma:graph_disentangled}, 
a crucial component in constructing state-disentangled channels and applicable to tasks 
such as distributed quantum computation and fault-tolerant quantum computation. 
In Appendix~\ref{app:graph_disentangled}, we discuss the details concerning the optimal verification strategy for graph states and 
show that this strategy could be used in fidelity estimation.
In Appendix~\ref{appx:dimension-expansion}, we give an explicit construction of verification strategies based on the dimension expansion technique.
In Appenix~\ref{appx:comparision}, we provide a detailed comparison of our work with existing quantum-memory based verification strategies, 
highlighting the esential differences among these works.

%%%%%%%%%%%%%%%%%%%%%%%%%%%%%%%%%%%%%%%%%%%%%%%%%%%%%%%%%%%%%%%%%%%%%%%%%%%%%%%%%%%%%%%%%%%%%%%%%%%
%%%%%%%%%%%%%%%%%%%%%%%%%%%%%%%%%%%%%%%%%%%%%%%%%%%%%%%%%%%%%%%%%%%%%%%%%%%%%%%%%%%%%%%%%%%%%%%%%%%
%%%%%%%%%%%%%%%%%%%%%%%%%%%%%%%%%%%%%%%%%%%%%%%%%%%%%%%%%%%%%%%%%%%%%%%%%%%%%%%%%%%%%%%%%%%%%%%%%%%
\section{Two-copy verification strategy optimization}\label{app:two-cp_optimization}

In this section, we simplified the optimization in Eq.~({\color{beamer@blendedblue}4})
of the main text (MT) with $k = 2$ and prove the main Theorem~\ref{theorem:optimization_target}.

\subsection{Reduce to fake pure states}

First of all, one can easily prove that it suffices to optimize over pure states.
Here below, we use $\sigma$ to represent a fake state. A single $\sigma$ denotes a fake state in density matrix form, and by $\ket{\sigma}$, we mean a pure fake state $\ket{\sigma} = \sqrt{1 - \varepsilon}\ket{\psi} + \sqrt{\varepsilon} \ket{\psi^\perp}$. 
\begin{lemma}\label{lemma:pure-states}
The maximal passing probability $p(\Omega)$, 
defined in Eq.~({\color{beamer@blendedblue}4}) of MT, 
can be achieved among pure states, i.e.,
\begin{align}
    p(\Omega) 
=  \max_{\substack{\ket{\sigma},\vert\sigma'\rangle  \\ 
            \vert\braket{\psi}{\sigma}\lvert^2\leq 1-\varepsilon \\
            \vert\bra{\psi}\sigma'\rangle\vert^2\leq 1-\varepsilon}}
            \tr[\Omega(\proj{\sigma}\otimes\proj{\sigma'})].
\end{align}
\end{lemma}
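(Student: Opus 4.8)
The plan is to exploit the fact that, for $k=2$, the objective $\tr[\Omega(\sigma\otimes\sigma')]$ is separately linear in each argument, so I would first reduce the joint optimization to a single-copy linear program and then invoke an extreme-point argument. Concretely, the feasible set for each copy is $S := \{\sigma \geq 0 : \tr\sigma = 1,\ \bra{\psi}\sigma\ket{\psi} \leq 1-\varepsilon\}$, which is convex and compact in the finite-dimensional real vector space of Hermitian operators. Since the maximum over the compact set $S\times S$ is attained, I would fix an optimal pair $(\sigma_\star,\sigma'_\star)$, freeze $\sigma'_\star$, and view $\sigma\mapsto \tr[\Omega(\sigma\otimes\sigma'_\star)] = \tr[A\sigma]$ with $A := \tr_2[\Omega(\mathbb{I}\otimes\sigma'_\star)]$ Hermitian. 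By Bauer's maximum principle, this linear functional attains its maximum over $S$ at an extreme point of $S$; replacing $\sigma_\star$ by such an extreme point, and then repeating the argument for the second copy, can only preserve or increase the objective.

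The crux is therefore to show that every extreme point of $S$ is a pure state. I would argue this by a perturbation/dimension count. Let $\sigma\in S$ have rank $r$ and consider a traceless Hermitian perturbation $\delta$ supported on $\mathrm{supp}(\sigma)$; since $\sigma$ is strictly positive on its support, $\sigma \pm s\delta \geq 0$ for all sufficiently small $s>0$. There are two cases. If $\bra{\psi}\sigma\ket{\psi} < 1-\varepsilon$ strictly, the only linear constraint $\delta$ must respect to keep $\sigma\pm s\delta\in S$ for small $s$ is $\tr\delta = 0$, leaving a space of admissible directions of real dimension $r^2 - 1$. If instead $\bra{\psi}\sigma\ket{\psi} = 1-\varepsilon$, I must additionally impose $\bra{\psi}\delta\ket{\psi}=0$; note that $\ket{\psi}$ cannot be orthogonal to $\mathrm{supp}(\sigma)$ (otherwise the overlap would vanish rather than equal $1-\varepsilon>0$), so this is a genuine extra linear constraint and the admissible directions form a space of dimension at least $r^2 - 2$. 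In either case, whenever $r \geq 2$ there exists a nonzero admissible $\delta$, so $\sigma = \tfrac12(\sigma+s\delta) + \tfrac12(\sigma-s\delta)$ exhibits $\sigma$ as a nontrivial convex combination of two distinct points of $S$, contradicting extremality. Hence $r=1$, i.e., the extreme point is pure, and its membership in $S$ automatically forces $\lvert\braket{\psi}{\sigma}\rvert^2 \leq 1-\varepsilon$.

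Putting these together, the optimum can be taken at $\proj{\sigma}\otimes\proj{\sigma'}$ with $\lvert\braket{\psi}{\sigma}\rvert^2,\lvert\braket{\psi}{\sigma'}\rvert^2 \leq 1-\varepsilon$, which is exactly the claimed formula. I expect the main obstacle to be the boundary case $\bra{\psi}\sigma\ket{\psi}=1-\varepsilon$: unlike for the full density-matrix set, whose extreme points are manifestly the pure states, cutting by the fidelity half-space could a priori create new extreme points lying on the cut, and one must verify through the dimension count above that these cannot have rank exceeding one. The remaining ingredients (bilinearity, compactness, Bauer's principle) are routine. A convenient explicit alternative to the two-constraint count is to diagonalize $\sigma$ and rotate weight between a chosen pair of eigenvectors $\ket{\phi_i},\ket{\phi_j}$ while holding $\bra{\psi}\sigma\ket{\psi}$ fixed, which realizes such a $\delta$ directly and makes the failure of extremality transparent.
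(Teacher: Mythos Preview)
Your proposal is correct and follows essentially the same convexity/extreme-point approach as the paper: both argue that the feasible set $S=\{\sigma\geq 0:\tr\sigma=1,\ \bra{\psi}\sigma\ket{\psi}\leq 1-\varepsilon\}$ has only pure states as extreme points, and then use (bi)linearity of the objective to reduce the maximum to pure tensor products. The only notable difference is that the paper simply asserts that the extreme points of $S$ are pure and decomposes both copies simultaneously before taking the best index pair, whereas you supply the missing justification via the rank/dimension-count perturbation argument and organize the bilinearity reduction as an alternating optimization with Bauer's principle; your treatment is therefore more rigorous on precisely the point the paper glosses over.
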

\begin{proof}
We noted that this proof will be correct even if the fake state is classical-correlated. Since the maximum condition only reach on the product states
without classical correlation.

Accoroding to Eq.~({\color{beamer@blendedblue}4}), we have
\begin{align}\label{eq:appx-tmp}
    p(\Omega) 
=  \max_{\substack{\sigma,\sigma'  \\ 
        \bra{\psi}\sigma\ket{\psi}\leq 1 - \varepsilon \\
        \bra{\psi}\sigma'\ket{\psi}\leq 1 - \varepsilon}}
        \tr[\Omega(\sigma\otimes\sigma')].
\end{align}
We prove by contradiction that Eq.~\eqref{eq:appx-tmp} can be optimized over pure states.
Assume that two mixed states $\sigma$ and $\sigma'$ achieve Eq.~\eqref{eq:appx-tmp};
i.e., $p(\Omega)=\tr[\Omega(\sigma\otimes\sigma')]$.
Notice that the set of fake states
$\cS := \{\sigma\mid\bra{\psi}\sigma\ket{\psi}\leq 1 - \varepsilon\}$ is a convex set. 
Subsequently, the set of pure states
$\cP := \{\ket{\sigma} \mid \vert\braket{\psi}{\sigma}\vert^2 \leq 1 - \varepsilon\}$ 
contain the extreme points of the set $\cS$.
Given that both $\sigma,\sigma'\in\cS$, 
it is always possible to identify two pure-state decompositions 
\begin{align}
    \sigma = \sum_j \alpha_j\proj{\sigma_j},\qquad  
    \sigma' = \sum_k \beta_j\proj{\sigma'_k},
\end{align}
such that $\sum_j\alpha_j=1$, $\sum_k\beta_k=1$, 
and $\ket{\sigma_j},\ket{\sigma'_k}\in\cP$ for all $j$ and $k$, 
i.e., they are the extreme points within the set $\cP$.
Let $j_\star$ and $k_\star$ be the two indices whose corresponding pure states 
$\ket{\sigma_{j_\star}}$ and $\ket{\sigma'_{k_\star}}$
achieve the following maximization:
\begin{align}
    \tr[\Omega(\proj{\sigma_{j_\star}}\otimes\proj{\sigma'_{k_\star}})]
= \max_{j,k}\tr[\Omega(\proj{\sigma_j}\otimes\proj{\sigma'_k})].
\end{align}
Then the passing probability was evaluated to
\begin{align}
  p(\Omega) 
= \tr[\Omega(\sigma\otimes\sigma')]
&= \sum_{jk}\alpha_j\beta_k\tr\left[\Omega(\proj{\sigma_j}\otimes\proj{\sigma'_k})\right] \\
&\leq \sum_{jk}\alpha_j\beta_k
      \tr[\Omega(\proj{\sigma_{j_\star}}\otimes\proj{\sigma'_{k_\star}})] \\
&= \tr[\Omega(\proj{\sigma_{j_\star}}\otimes\proj{\sigma'_{k_\star}})].
\end{align}
That is to say, we can always identify two pure states---$\vert\sigma_{j_\star}\rangle, \vert\sigma'_{k_\star}\rangle \in \mathcal{S}$---that lead to a passing probability 
larger than $\tr[\Omega(\sigma\otimes\sigma')]$, leading to a contradiction. We are done.
\end{proof}

\vspace{0.1in}
Thanks to Lemma~\ref{lemma:pure-states}, 
we can restrain the fake state to the tensor product form of pure states as below:
\begin{align}\label{eq:appx-fake-states}
  \ket{\sigma}\otimes\ket{\sigma'}
= \sqrt{(1-\varepsilon_r)(1-\varepsilon_r')}\ket{\psi\psi}
+ \sqrt{(1-\varepsilon_r)\varepsilon_r'}\ket{\psi\psi'^\perp}
+ \sqrt{\varepsilon_r(1-\varepsilon_r')}\ket{\psi^\perp\psi}
+ \sqrt{\varepsilon_r\varepsilon_r'}\ket{\psi^\perp\psi'^\perp},
\end{align}
where $\varepsilon_r,\varepsilon'_r\geq\varepsilon$ 
and $\ket{\psi^\perp}, \ket{\psi'^\perp}$ are pure states orthogonal to $\ket{\psi}$.
Correspondingly, the passing probability evaluates to
\begin{align}
\bra{\sigma\sigma'}\Omega\ket{\sigma\sigma'}=&\sqrt{(1-\varepsilon_r)(1-\varepsilon_r')(1-\varepsilon_r)(1-\varepsilon_r')}\bra{\psi\psi}\Omega\ket{\psi\psi} + \sqrt{(1-\varepsilon_r)(1-\varepsilon_r')\varepsilon_r(1-\varepsilon_r')}\bra{\psi\psi}\Omega\ket{\psi\psi'^\perp} \notag \\
&+ \sqrt{(1-\varepsilon_r)(1-\varepsilon_r')\varepsilon_r(1-\varepsilon_r')}\bra{\psi\psi}\Omega\ket{\psi^\perp\psi}
+ \sqrt{(1-\varepsilon_r)(1-\varepsilon_r')\varepsilon_r\varepsilon_r'}\bra{\psi\psi}\Omega\ket{\psi^\perp\psi'^\perp} \notag\\
&+ \sqrt{(1-\varepsilon_r)\varepsilon_r'(1-\varepsilon_r)(1-\varepsilon_r')}\bra{\psi\psi'^\perp}\Omega\ket{\psi\psi} 
+ \sqrt{(1-\varepsilon_r)\varepsilon_r'(1-\varepsilon_r)\varepsilon_r'}\bra{\psi\psi'^\perp}\Omega\ket{\psi\psi'^\perp} \notag \\
& +  \sqrt{(1-\varepsilon_r)\varepsilon_r'\varepsilon_r(1-\varepsilon_r')}\bra{\psi\psi'^\perp}\Omega\ket{\psi^\perp\psi} 
 + \sqrt{(1-\varepsilon_r)\varepsilon_r'\varepsilon_r\varepsilon_r'}\bra{\psi\psi'^\perp}\Omega\ket{\psi^\perp\psi'^\perp} \notag \\
&+ \sqrt{\varepsilon_r(1-\varepsilon_r')(1-\varepsilon_r)(1-\varepsilon_r')}\bra{\psi^\perp\psi}\Omega\ket{\psi\psi} 
+ \sqrt{\varepsilon_r(1-\varepsilon_r')(1-\varepsilon_r)\varepsilon_r'}\bra{\psi^\perp\psi}\Omega\ket{\psi\psi'^\perp} \notag \\
&  + \sqrt{\varepsilon_r(1-\varepsilon_r')\varepsilon_r(1-\varepsilon_r')}\bra{\psi^\perp\psi}\Omega\ket{\psi^\perp\psi} 
 + \sqrt{\varepsilon_r(1-\varepsilon_r')\varepsilon_r\varepsilon_r'}\bra{\psi^\perp\psi}\Omega\ket{\psi^\perp\psi'^\perp} \notag \\
&+ \sqrt{\varepsilon_r\varepsilon_r'(1-\varepsilon_r)(1-\varepsilon_r')}\bra{\psi^\perp\psi'^\perp}\Omega\ket{\psi\psi} 
+ \sqrt{\varepsilon_r\varepsilon_r'(1-\varepsilon_r)\varepsilon_r'}\bra{\psi^\perp\psi'^\perp}\Omega\ket{\psi\psi'^\perp} \notag \\
& + \sqrt{\varepsilon_r\varepsilon_r'\varepsilon_r(1-\varepsilon_r')}\bra{\psi^\perp\psi'^\perp}\Omega\ket{\psi^\perp\psi} 
 + \sqrt{\varepsilon_r\varepsilon_r'\varepsilon_r\varepsilon_r'}\bra{\psi^\perp\psi'^\perp}\Omega\ket{\psi^\perp\psi'^\perp} \\
 &=:p(\Omega,\varepsilon_r,\varepsilon_r',\psi,\psi').
\end{align}

Any reasonable two-copy verification strategy $\Omega$ must satisfy the following two conditions:
\begin{align}
    \Omega\ket{\psi} \otimes \ket{\psi}
&=  \ket{\psi} \otimes \ket{\psi} \label{eq:critical1},\\
\Omega 
&= \bF_{1\leftrightarrow2} \Omega \bF_{1\leftrightarrow2}. \label{eq:critical2} 
\end{align}
The first property is justifiable because, for any $\Omega$ failing to meet this condition, the inequality $N_m(\Omega) \geq 2p(1-p)1/\varepsilon^2\ln1/\delta$ is valid when $\varepsilon$ is sufficiently small~\cite{Pallisterthesis}. Here, $p = \tr[\Omega(\proj{\psi} \otimes \proj{\psi})]\neq1$. 
The quadratic nature of $\varepsilon^2$ leads to a considerably higher sampling complexity compared 
to those satisfying the first condition when $\varepsilon$ is small. 
The second condition is rationalized by the fact that the verifier can employ 
classical randomness to execute the LOCC 
strategy $\frac{1}{2}( \Omega + \bF_{1\leftrightarrow2} \Omega  \bF_{1\leftrightarrow2})$ based 
on any existing LOCC strategy $\Omega$ that might not fulfill the second condition. 

%%%%%%%%%%%%%%%%%%%%%%%%%%%%%%%%%%%%%%%%%%%%%%%%%%%%%%%%%%%%%%%%%%%%%%%%%%%%%%%%
%%%%%%%%%%%%%%%%%%%%%%%%%%%%%%%%%%%%%%%%%%%%%%%%%%%%%%%%%%%%%%%%%%%%%%%%%%%%%%%%
\subsection{Discussion on the insurance infidelity}

In this section, we exclusively discusses the existence condition and upper bound of the insurance infidelity $\varepsilon_{\max}\left(\Omega\right)$.
\begin{proposition}
\label{prop:exist_insurance}

Let $\ket{\psi}$ represent the target state and $\Omega$ denote its two-copy verification strategy, which exhibits symmetry under copy exchange. We define $\gamma_{\star}(\Omega)$ and $\xi_{\star}(\Omega)$ as the maximum eigenvalues of the operators $\mathbb{P}_{\psi} \mathbb{F}_{1\leftrightarrow2} \Omega \mathbb{P}_{\psi}$ and $\mathbb{P}_{\psi} (\mathbb{F}_{1\leftrightarrow2}/2 + \mathbb{I}_{12}) \Omega \mathbb{P}_{\psi}$, respectively, where $\mathbb{F}_{1\leftrightarrow2}$ and $\mathbb{P}_{\psi}$ are defined in Eq.~\eqref{eq:projectors} of MT. When $\varepsilon$ is sufficiently small ($\varepsilon \ll 1$) and it is guaranteed that $\xi_{\star}(\Omega) + \gamma_{\star}(\Omega)/2 < 1$, for any choice of $\ket{\psi^\perp}$ and $\ket{\psi'^\perp}$, the function:
\begin{align}
  p(\varepsilon_r',\varepsilon_r,\ket{\psi^\perp},\ket{\psi'^\perp}) =  \bra{\sigma\sigma'}\Omega\ket{\sigma\sigma'},
\end{align}
reaches its maximum at the point $(\varepsilon_r, \varepsilon'_r) = (\varepsilon, \varepsilon)$ within a local region $R = \{(\varepsilon_r, \varepsilon'_r) | \varepsilon_r, \varepsilon'_r > \varepsilon,\ \ \varepsilon_r + \varepsilon'_r < 2\varepsilon_{\max}\left(\Omega\right)\}$. Additionally, $\varepsilon_{\max}\left(\Omega\right)$, referred to as the insurance infidelity, is unrelated to $\ket{\psi^\perp}$ and $\ket{\psi'^\perp}$, and must satisfy either of the following conditions:
\begin{enumerate}
    \item If $\sqrt{\varepsilon} \ll \gamma_{\star}(\Omega)$, 
    then $\varepsilon_{\max}\left(\Omega\right) = 0.5\varepsilon + 0.5\varepsilon\left[\left(1 - \xi_{\star}(\Omega) + 0.5 \gamma_{\star}(\Omega)\right)/{\gamma_{\star}(\Omega) }\right]^2  >\varepsilon$.
    \item If $\sqrt{\varepsilon} \sim \gamma_{\star}(\Omega)$, then $\varepsilon_{\max}\left(\Omega\right) \gg \varepsilon$. 
\end{enumerate}
\end{proposition}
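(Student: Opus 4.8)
The plan is to evaluate the passing probability exactly down to the part that genuinely couples the two copies, and then to read off $\varepsilon_{\max}(\Omega)$ from where that coupling can revive the probability. By Lemma~\ref{lemma:pure-states} I may take the fake state to be the product pure state of Eq.~\eqref{eq:appx-fake-states}, parametrised by $\varepsilon_r,\varepsilon_r'$ and the directions $\ket{\psi^\perp},\ket{\psi'^\perp}$. The crucial simplification is the eigenvector property~\eqref{eq:critical1}: since $\Omega\ket{\psi}\ket{\psi}=\ket{\psi}\ket{\psi}$ and $\ket{\psi\psi}$ is orthogonal to every other vector appearing in $\ket{\sigma\sigma'}$, all cross terms with $\ket{\psi\psi}$ vanish and the probability collapses to the exact identity $p=(1-\varepsilon_r)(1-\varepsilon_r')+\bra{R}\Omega\ket{R}$, where $\ket{R}=(\mathbb{I}_{12}-\proj{\psi\psi})\ket{\sigma\sigma'}$. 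Writing $u=\sqrt{\varepsilon_r}$, $v=\sqrt{\varepsilon_r'}$ and expanding $\ket{R}$ in the one-perp vectors $\ket{\psi\psi'^\perp},\ket{\psi^\perp\psi}$ and the two-perp vector $\ket{\psi^\perp\psi'^\perp}$, the correction becomes a real quadratic form $u^2B+v^2A+2uvc$ with an $\mathcal{O}(\varepsilon^{1.5})$ remainder from the one-perp/two-perp overlaps, where $A=\bra{\psi\psi'^\perp}\Omega\ket{\psi\psi'^\perp}$, $B=\bra{\psi^\perp\psi}\Omega\ket{\psi^\perp\psi}$, and $c=\mathrm{Re}\bra{\psi\psi'^\perp}\Omega\ket{\psi^\perp\psi}$.

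The next step turns these direction-dependent numbers into the spectral data of the statement. Symmetry~\eqref{eq:critical2} together with $\mathbb{F}_{1\leftrightarrow2}^2=\mathbb{I}_{12}$ gives $\Omega\mathbb{F}_{1\leftrightarrow2}=\mathbb{F}_{1\leftrightarrow2}\Omega$, so setting $\ket{a}:=\ket{\psi\psi'^\perp}$ and $\ket{b}:=\ket{\psi\psi^\perp}$ (both in $\mathrm{range}(\mathbb{P}_\psi)$) and using $\ket{\psi^\perp\psi}=\mathbb{F}_{1\leftrightarrow2}\ket{b}$ rewrites $B=\bra{b}D\ket{b}$ and $c=\mathrm{Re}\bra{a}G\ket{b}$ with the Hermitian operators $D:=\mathbb{P}_\psi\Omega\mathbb{P}_\psi$ and $G:=\mathbb{P}_\psi\mathbb{F}_{1\leftrightarrow2}\Omega\mathbb{P}_\psi$, while $A=\bra{a}D\ket{a}$. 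Hence $A,B\le\lambda_{\max}(D)$ and the coupling is controlled by $\gamma_{\star}=\lambda_{\max}(G)$ and $\xi_{\star}=\lambda_{\max}(D+G/2)$. Because these are eigenvalue bounds, they hold uniformly over $\ket{\psi^\perp},\ket{\psi'^\perp}$; this uniformity is exactly the assertion that $\varepsilon_{\max}(\Omega)$ does not depend on the perpendicular directions.

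With this in hand I would analyse the constrained maximisation of $p$ over $R$. Along the diagonal $\varepsilon_r=\varepsilon_r'$ the correction reduces to $2t^2\bra{w}(D+G)\ket{w}$ for a single vector $\ket{w}$, and the estimate $\lambda_{\max}(D+G)\le\lambda_{\max}(D+G/2)+\tfrac12\lambda_{\max}(G)=\xi_{\star}+\gamma_{\star}/2<1$ shows $p$ strictly decreases; this is precisely the role of the hypothesis $\xi_{\star}+\gamma_{\star}/2<1$, which, together with the analogous first-order checks $c\le 1-A,\,1-B$ on the edges, makes the corner $(\varepsilon,\varepsilon)$ a local maximum. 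To obtain $\varepsilon_{\max}$ I would then track the nearest point at which $p$ recovers its corner value $p(\sqrt{\varepsilon},\sqrt{\varepsilon})$: solving $p(u,v)=p(\sqrt{\varepsilon},\sqrt{\varepsilon})$ and extremising the resulting conic over the perpendicular data — which by the previous paragraph is governed solely by $\gamma_{\star}$ and $\xi_{\star}$ — gives $\varepsilon_{\max}=\tfrac12\varepsilon(1+\kappa^2)$ with $\kappa=(1-\xi_{\star}+\gamma_{\star}/2)/\gamma_{\star}>1$. This closed form is legitimate only when the quadratic cross term $\sim\varepsilon\gamma_{\star}$ dominates the $\mathcal{O}(\varepsilon^{1.5})$ remainder, i.e.\ when $\gamma_{\star}\gg\sqrt{\varepsilon}$ (Case~1); in the borderline regime $\gamma_{\star}\sim\sqrt{\varepsilon}$ (Case~2) the remainder is comparable, no closed form survives, but the same balance still places the recovery point at $\varepsilon_r+\varepsilon_r'\gg\varepsilon$, giving $\varepsilon_{\max}\gg\varepsilon$.

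The main obstacle is precisely this last step. The form $1-p\approx u^2(1-B)+v^2(1-A)-2uvc$ need not be positive definite, so the recovery point is the outcome of a genuine joint optimisation over the two independent perpendicular directions and over the direction of departure from the corner, with the further complication that the $\sqrt{\varepsilon_r(1-\varepsilon_r)}$ factors and the $\mathcal{O}(\varepsilon^{1.5})$ and $\mathcal{O}(\varepsilon^2)$ remainders must be controlled uniformly so that they cannot open a second, nearer escape inside $R$. Reducing this coupled optimisation to the scalar quantities $\gamma_{\star},\xi_{\star}$ — and thereby pinning the exact constant in $\varepsilon_{\max}$ and establishing its independence of $\ket{\psi^\perp},\ket{\psi'^\perp}$ — is where the real effort lies, and the dichotomy $\gamma_{\star}$ versus $\sqrt{\varepsilon}$ in the statement is simply the criterion for whether the remainder may be discarded.
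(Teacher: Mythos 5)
Your proposal retraces the paper's own proof almost step for step: the same reduction to pure product fake states via Lemma~\ref{lemma:pure-states} and Eq.~\eqref{eq:appx-fake-states}, the same square-root change of variables, the same first-derivative test at the corner $(\sqrt{\varepsilon},\sqrt{\varepsilon})$, the same dichotomy $\gamma_{\star}\gg\sqrt{\varepsilon}$ versus $\gamma_{\star}\sim\sqrt{\varepsilon}$, and the same spectral identification of $\gamma_{\star}$ and $\xi_{\star}$ at the end; the only structural variation is that you extract $\varepsilon_{\max}(\Omega)$ from the level set $p(u,v)=p(\sqrt{\varepsilon},\sqrt{\varepsilon})$, whereas the paper takes the largest quarter-disk contained in the wedge where both partial derivatives of $p_{\rm lead}$ are negative and then intersects over the perpendicular directions (both routes produce $\varepsilon_{\max}=\tfrac12\varepsilon(1+\kappa^2)$ with the same $\kappa$). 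The genuine problem is the step you flag in your last paragraph and never execute: reducing the coupling $2uv\,c$, with $c=\mathrm{Re}\bra{a}G\ket{b}$, $\ket{a}=\ket{\psi}\otimes\ket{\psi'^{\perp}}$, $\ket{b}=\ket{\psi}\otimes\ket{\psi^{\perp}}$ and $G=\mathbb{P}_{\psi}\mathbb{F}_{1\leftrightarrow2}\Omega\mathbb{P}_{\psi}$, to the scalar $\gamma_{\star}=\lambda_{\max}(G)$. That reduction is not a routine estimate: for a Hermitian $G$ that is not positive semidefinite, the maximum of $\mathrm{Re}\bra{a}G\ket{b}$ over the two independent directions is the operator norm $\|G\|$, which equals $-\lambda_{\min}(G)$ whenever that exceeds $\lambda_{\max}(G)$; already the admissible choice $\ket{\psi'^{\perp}}=-\ket{\psi^{\perp}}$ along a negative eigenvector of $G$ gives $c=-\lambda_{\min}(G)$. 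So the uniform bound $c\le\gamma_{\star}$ that your plan needs is false in general, and without it neither the corner condition nor the constant $\kappa$ follows from the hypothesis $\xi_{\star}(\Omega)+\gamma_{\star}(\Omega)/2<1$. As written, the proof does not close.

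It is worth seeing how the paper gets past this point, because the comparison is instructive. The paper never confronts the mixed matrix element at all: its leading-order expansion writes the coefficient of $\sqrt{\varepsilon_r\varepsilon_r'}$ as the \emph{decoupled} sum $\bra{b}G\ket{b}+\bra{a}G\ket{a}$ (its $B+B'$), i.e.\ diagonal elements of $G$ in each perpendicular direction separately; the maximisation over $\ket{\psi^{\perp}}$ and $\ket{\psi'^{\perp}}$ then factorises and the threshold $\xi_{\star}+\gamma_{\star}/2$ drops out immediately, after which the wedge construction and the intersection over directions are elementary. But your expansion is the exact one: the decoupled sum coincides with the true cross term $2\mathrm{Re}\bra{a}G\ket{b}$ only when the two perpendicular directions coincide, and as an upper bound it is equivalent to $G\ge0$ on the range of $\mathbb{P}_{\psi}$ --- precisely the condition that would also rescue your argument. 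So your extra care has in fact exposed the soft spot of the paper's own proof rather than a defect unique to yours. Either proof becomes airtight by adding the hypothesis $G\ge0$ (which holds, with $G=0$, for the tensor-product and graph-state strategies the paper actually uses, since there $\Omega\mathbb{P}_{\psi}=0$ or $\mathbb{P}_{\psi}\mathbb{F}_{1\leftrightarrow2}\Omega\mathbb{P}_{\psi}=0$), or by restating the proposition with $\|\mathbb{P}_{\psi}\mathbb{F}_{1\leftrightarrow2}\Omega\mathbb{P}_{\psi}\|$ in place of the maximal eigenvalue $\gamma_{\star}$.
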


\begin{proof}
Given the sufficiently small nature of $\varepsilon$, we initially approximate $\varepsilon_r$ and $\varepsilon'_r$ as approximately equal to $\varepsilon$, resulting in the simplified expression for the passing probability:
\begin{align}
    p = 1 - \varepsilon_r - \varepsilon_r' + \bra{\psi \psi^\perp} \Omega \ket{\psi \psi^\perp}\varepsilon_r + \bra{\psi \psi'^\perp} \Omega \ket{\psi \psi'^\perp} \varepsilon_r'+ (\bra{\psi^\perp \psi} \Omega \ket{\psi \psi^\perp} + \bra{\psi'^\perp\psi } \Omega \ket{\psi \psi'^\perp}) \sqrt{\varepsilon_r \varepsilon_r'} + \cO(\varepsilon^{1.5}).
\end{align}
The leading orders dominate the behavior of the function $p$ in the vicinity of the $(\varepsilon, \varepsilon)$ region. Therefore, our task is to demonstrate that the leading term reaches a local maximum at the point $(\varepsilon, \varepsilon)$ under the constraint $\varepsilon_r, \varepsilon_r' > \varepsilon$. To facilitate this analysis, we introduce the variable transformation $(x, x') = (\sqrt{\varepsilon_r}, \sqrt{\varepsilon_r'})$, after which the leading term undergoes a transformation to:
\begin{align}
    p_{\rm lead} = 1 - (1-R)x^2 - (1- R') x'^2 + (B+B')xx',
    \label{eq:plead}
\end{align}
where $x, x' > \sqrt{\varepsilon}$ and
\begin{align}
    R &= \bra{\psi \psi^\perp} \Omega \ket{\psi \psi^\perp},~ R' = \bra{\psi \psi'^\perp} \Omega \ket{\psi \psi'^\perp}, \\
    B &=  \bra{\psi^\perp \psi} \Omega \ket{\psi \psi^\perp},~B'= \bra{\psi'^\perp\psi}\Omega \ket{\psi \psi'^\perp}.
\end{align}
We first noticed that:
\begin{align}
    &\frac{\partial p_{\rm lead}}{\partial x} = -2(1- R) x +(B+B')x', \\
    &\frac{\partial p_{\rm lead}}{\partial x'} = -2(1- R') x' +(B+B')x.
    \label{eq:plead_deri}
\end{align}
To achieve a local maximum at $(\sqrt{\varepsilon},\sqrt{\varepsilon})$ under the constraint $x, x' > \sqrt{\varepsilon}$, both derivatives at the point $x = x' = \sqrt{\varepsilon}$ must be less than zero for arbitrary $\ket{\psi^\perp}$ and $\ket{\psi'^\perp}$. This implies that:
\begin{align}
    \forall \ket{\psi^\perp},\ket{\psi'^\perp},\quad  1  > \frac{B}{2} + R + \frac{B'}{2},\quad 1  > \frac{B}{2} + R' + \frac{B'}{2}.
\end{align}
Subsequently, we establish two critical values for the operator $\Omega$ with respect to the quantum state $\ket{\psi}$
\begin{align}
    &\gamma_{\star}(\Omega) =  \max_{\ket{\psi^\perp}} \bra{\psi^\perp \psi} \Omega \ket{\psi \psi^\perp}, \\
    & \xi_{\star}(\Omega) =  \max_{\ket{\psi^\perp}} \left( \frac{1}{2}  \bra{\psi^\perp \psi} \Omega \ket{\psi \psi^\perp} + \bra{\psi \psi^\perp} \Omega \ket{\psi \psi^\perp} \right).
\end{align}
Utilizing these values, the local maximum condition is equivalent to the assertion that:
\begin{align}
    1 &> \max_{\ket{\psi^\perp}} \left( \frac{1}{2}  \bra{\psi^\perp \psi} \Omega \ket{\psi \psi^\perp} + \bra{\psi \psi^\perp} \Omega \ket{\psi \psi^\perp} \right) + \frac{1}{2} \max_{\ket{\psi'^\perp}} \left(\bra{\psi'^\perp \psi} \Omega \ket{\psi \psi'^\perp}\right) \\
    &=  \xi_{\star}(\Omega) + \frac{1}{2} \gamma_{\star}(\Omega).
\end{align}
In order to delineate the range of this local maximum, we initially assume that $\gamma_{\star}(\Omega) \gg \sqrt{\varepsilon}$. Subsequently, we designate the selections of $\psi^\perp$ and $\psi'^\perp$ and find the domain where the $p_{\rm lead}$ always decreases as both variables $x$ and $x'$ increased.
This region is delimited by two linear constraints:
\begin{align}
    \frac{\partial p_{\rm lead}}{\partial x} =-2(1- R) x +(B+B')x' < 0, \quad \frac{\partial p_{\rm lead}}{\partial x'} =-2(1- R) x' +(B+B')x < 0.
\end{align}
\begin{figure}[!hbtp]
    \centering
    \includegraphics[width=0.5\linewidth]{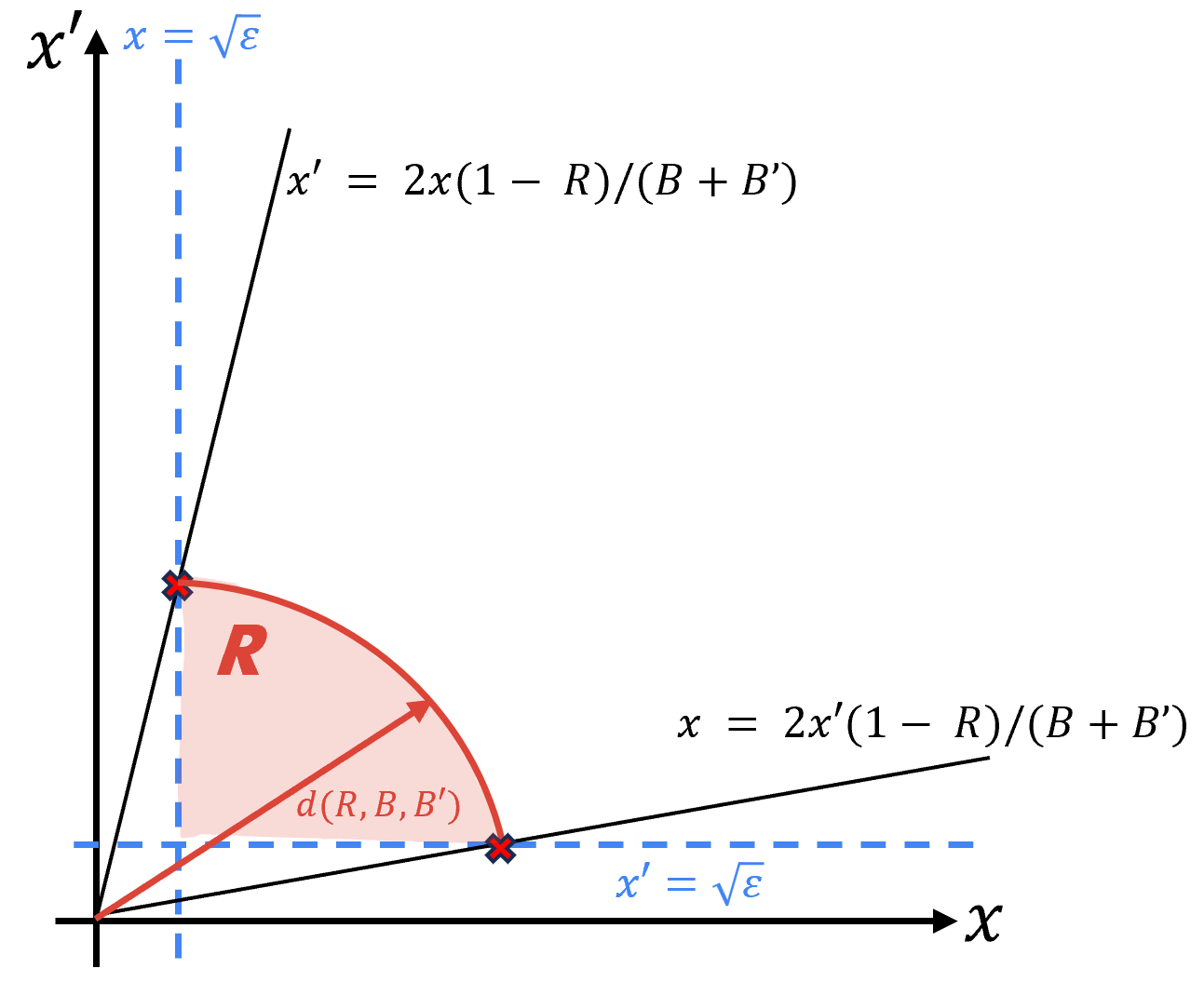}
    \caption{This figure show the region $R$ inside which $p_{lead}$ reach local maximum at point  $(\sqrt{\varepsilon},\sqrt{\varepsilon})$. The insurance infidelity the could be calculated from the intersection of line $x' = 2x(1- R)/(B+B')$ and $x = \sqrt{\varepsilon}$ .  }
    \label{fig:insurance_inf}
\end{figure}
The local maximum condition ensures that $2(1 - R) > B + B'$. Consequently, every point within the set $R(R,B,B') = \{(x,x')|x,x'>\sqrt{\varepsilon},\ \ x^2+x'^2 < d(R,B,B') \}$ should decrease as both $(x,x')$ increase, as depicted in Figure \ref{fig:insurance_inf}. Here, $d(R,B,B')$ is defined as follows:
\begin{align}
    d(R,B,B') &= \varepsilon + \varepsilon\left(1 + 2\frac{1 - (R + \frac{B}{2}) - \frac{B'}{2}}{B + B'}\right)^2 \\
    &> \varepsilon + \varepsilon\left(1 + 2\frac{1 - \max_{\ket{\psi^\perp}}(R + \frac{B}{2}) - \max_{\ket{\psi'^\perp}}\frac{B'}{2}}{\max_{\ket{\psi^\perp}}B + \max_{\ket{\psi'^\perp}}B'}\right)^2 \\
    &= \varepsilon + \varepsilon\left(1 + \frac{1 - \xi_{\star}(\Omega) - \frac{1}{2} \gamma_{\star}(\Omega)}{\gamma_{\star}(\Omega) }\right)^2 \\
    & =\varepsilon + \varepsilon\left(\frac{1 - \xi_{\star}(\Omega) + \frac{1}{2} \gamma_{\star}(\Omega)}{\gamma_{\star}(\Omega) }\right)^2.
    \label{eq:upperbound_of_d}
\end{align}
Hence, within the region $R(R,B,B')$, the function $p$ attains its maximum at the point $(x,x') = (\sqrt{\varepsilon}, \sqrt{\varepsilon})$. Given an arbitrary selection of $\psi^\perp$ and $\psi'^\perp$, their intersection is determined as follows:
\begin{align}
   R &= \bigcap_{\forall \psi^\perp, \psi'^\perp} R(R,B,B') \\ 
   &= \{(x,x')|x,x'>\sqrt{\varepsilon},\ \ x^2+x'^2 < \min_{\psi^\perp, \psi'^\perp}d(R,B,B') \}\\
   & = \{(\varepsilon_r, \varepsilon'_r) | \varepsilon_r, \varepsilon_r' > \varepsilon,\ \ \varepsilon_r + \varepsilon_r' < 2\varepsilon_{\max}\left(\Omega\right)\}.
\end{align}
Here, $\varepsilon_{\max}\left(\Omega\right) = \min_{\psi^\perp, \psi'^\perp} d(R,B,B')/2$. The upper bound of $d$ in Eq.~\eqref{eq:upperbound_of_d} provides the upper limit for $\varepsilon_{\max}\left(\Omega\right)$
\begin{align}
    \varepsilon_{\max}\left(\Omega\right) 
> \frac{1}{2} \varepsilon + \frac{1}{2}\varepsilon\left(\frac{1 - \xi_{\star}(\Omega) + \frac{1}{2} \gamma_{\star}(\Omega)}{\gamma_{\star}(\Omega) }\right)^2 > \varepsilon.
\end{align}
In the last inequality, we invoke the local maximum condition once more, expressed as $1 > \xi_{\star}(\Omega) + \frac{1}{2} \gamma_{\star}(\Omega)$.

For strategies that satisfy $\gamma_{\star}(\Omega) \sim \sqrt{\varepsilon} $, this upper bound is not valid. Other terms in the function $p$, such as $\bra{ \psi^\perp_m \psi_m^\perp} \Omega \ket{\psi \psi^\perp_m}$, must be considered. However, in this case, one can demonstrate that $| \varepsilon_{\max}\left(\Omega\right) - \varepsilon | \gg \varepsilon$ by recalculating the leading terms near $\varepsilon$:
\begin{align}
    p = 1 - \varepsilon_r - \varepsilon_r' + \bra{\psi \psi^\perp} \Omega \ket{\psi \psi^\perp}\varepsilon_r + \bra{\psi \psi'^\perp} \Omega \ket{\psi \psi'^\perp} \varepsilon_r' + \cO(\varepsilon^{1.5}).
\end{align}
Given the projective construction, we have $\bra{\psi\psi'^\perp}\Omega\ket{\psi\psi'^\perp}\leq 1$. Consequently, $(\varepsilon,\varepsilon)$ is the maximum in the region satisfying $|\varepsilon_r - \varepsilon| \sim \varepsilon$. This implies that $|\varepsilon_{\max}\left(\Omega\right) - \varepsilon| \gg \varepsilon$.

We can further simplify the expression of $\gamma_{\star}(\Omega)$:
\begin{align}
    \gamma_{\star}(\Omega) 
= \max_{\ket{\psi^\perp}}\bra{\psi \psi^\perp} \bF_{1\leftrightarrow2} \Omega  \ket{\psi \psi^\perp} 
=\max_{\ket{\Phi}}\bra{\Phi} \mathbb{P}_{\psi}\bF_{1\leftrightarrow2} \Omega  \mathbb{P}_{\psi} \ket{ \Phi},
\end{align}
where $\mathbb{P}_{\psi}=\proj{\psi} \otimes (\mathbb{I} - \proj{\psi})$.
Then, $\gamma_{\star}(\Omega)$ is the maximum eigenvalue of the operator $\mathbb{P}_{\psi} \bF_{1\leftrightarrow2} \Omega  \mathbb{P}_{\psi} $. Similarly,  $\xi_{\star}(\Omega)$ corresponds to the maximum eigenvalue of operator $\mathbb{P}_{\psi} (\bF_{1\leftrightarrow2}/2 + \mathbb{I}_{12}) \Omega  \mathbb{P}_{\psi} $. 

\end{proof}

%%%%%%%%%%%%%%%%%%%%%%%%%%%%%%%%%%%%%%%%%%%%%%%%%%%%%%%%%%%%%%%%%%%%%%%%%%%%%%%%
%%%%%%%%%%%%%%%%%%%%%%%%%%%%%%%%%%%%%%%%%%%%%%%%%%%%%%%%%%%%%%%%%%%%%%%%%%%%%%%%
\subsection{Proof of Theorem~\ref{theorem:optimization_target} in MT}

Now we prove Theorem~\ref{theorem:optimization_target} in MT.

\begin{theorem}[Refined version of Theorem~\ref{theorem:optimization_target} in the main text]
\label{theorem:optimization_target-app}
Let $\Omega$ be an arbitrary two-copy verification strategy which is symmetric under copy exchange, 
we define $\lambda_{\star}(\Omega)$ as the maximum eigenvalue 
of the operator $\Omega_\star := 2\bP_\psi\bP_s\Omega \bP_s\bP_\psi$,
where $\mathbb{P}_s$ and $\bP_\psi$ are defined in Eq.~\eqref{eq:projectors} of MT. 
When $\varepsilon$ is sufficiently small ($\varepsilon \ll 1$) and
the local maximum condition in Proposition~\ref{prop:exist_insurance} is satisfied with insurance infidelity $\varepsilon_{\max}\left(\Omega\right)$. Then
\begin{align}
  p(\Omega) 
= \max_{\substack{\ket{\psi^\perp},\ket{\psi'^\perp}\\ 
        \varepsilon_r,\varepsilon_r' \in [\varepsilon,\varepsilon_{\max}\left(\Omega\right)]}}      \bra{\sigma\sigma'}\Omega\ket{\sigma\sigma'} 
=  1 -2(1 - \lambda_{\star}(\Omega))\varepsilon + \cO(\varepsilon^{1.5}),
\end{align}
\end{theorem}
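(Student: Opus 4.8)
The plan is to reduce the continuous optimization defining $p(\Omega)$ to a finite-dimensional eigenvalue problem for $\Omega_\star$, in four stages. First I would invoke Lemma~\ref{lemma:pure-states} to restrict to product pure fake states $\ket{\sigma}\otimes\ket{\sigma'}$, writing $\ket{\sigma}=\sqrt{1-\varepsilon_r}\ket{\psi}+\sqrt{\varepsilon_r}\ket{\psi^\perp}$ and $\ket{\sigma'}=\sqrt{1-\varepsilon_r'}\ket{\psi}+\sqrt{\varepsilon_r'}\ket{\psi'^\perp}$ with $\varepsilon_r,\varepsilon_r'\ge\varepsilon$. Proposition~\ref{prop:exist_insurance} then guarantees that, within the admissible window $\varepsilon_r,\varepsilon_r'\in[\varepsilon,\varepsilon_{\max}(\Omega)]$ and for every choice of $\ket{\psi^\perp},\ket{\psi'^\perp}$, the passing probability attains its maximum at the corner $\varepsilon_r=\varepsilon_r'=\varepsilon$. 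This collapses the two infidelity parameters, leaving only the orthogonal directions to optimize.

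Second, I would Taylor-expand $\bra{\sigma\sigma'}\Omega\ket{\sigma\sigma'}$ about $\varepsilon_r=\varepsilon_r'=\varepsilon$. The defining property $\Omega\ket{\psi}\otimes\ket{\psi}=\ket{\psi}\otimes\ket{\psi}$ (together with its adjoint) annihilates every cross term pairing $\ket{\psi\psi}$ with a singly- or doubly-excited basis vector, so the only survivors are the $(1-\varepsilon)^2$ term and the order-$\varepsilon$ block spanned by $\ket{\psi\psi'^\perp}$ and $\ket{\psi^\perp\psi}$. Collecting these gives $p(\Omega)=1-2\varepsilon+\varepsilon\,\langle w|\Omega|w\rangle+\cO(\varepsilon^{3/2})$, where $\ket{w}:=\ket{\psi}\otimes\ket{\psi'^\perp}+\ket{\psi^\perp}\otimes\ket{\psi}$ lies in the single-excitation subspace. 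The target identity is thus equivalent to $\max_{\ket{\psi^\perp},\ket{\psi'^\perp}}\langle w|\Omega|w\rangle=2\lambda_{\star}(\Omega)$.

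Third, I would exploit the copy-exchange symmetry $\bF_{1\leftrightarrow2}\Omega\bF_{1\leftrightarrow2}=\Omega$, which forces $\Omega$ to commute with $\bF_{1\leftrightarrow2}$ and hence with $\bP_s$. Writing $\ket{w}=\bP_s\ket{w}+(\mathbb{I}-\bP_s)\ket{w}$, the identity $\bP_s\bF_{1\leftrightarrow2}=\bP_s$ simplifies the symmetric part to $\bP_s\ket{w}=\bP_s(\ket{\psi}\otimes\ket{c})$ with $\ket{c}:=\ket{\psi^\perp}+\ket{\psi'^\perp}$, so that $\langle\bP_s w|\Omega|\bP_s w\rangle=\tfrac12\bra{\psi c}\Omega_\star\ket{\psi c}\le\tfrac12\lambda_{\star}\|c\|^2\le2\lambda_{\star}$, both inequalities saturating when $\ket{\psi^\perp}=\ket{\psi'^\perp}$ is the top eigenvector of $\Omega_\star$. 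Taking the two fake copies equal in this way already furnishes the lower bound $\max\langle w|\Omega|w\rangle\ge2\lambda_{\star}$.

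The hard part is the matching upper bound: I must show that letting the two infidelity directions differ — in particular exciting the antisymmetric component $(\mathbb{I}-\bP_s)\ket{w}$ — cannot raise $\langle w|\Omega|w\rangle$ above $2\lambda_{\star}$. Because $\Omega$ preserves the symmetric and antisymmetric sectors, $\langle w|\Omega|w\rangle$ splits additively across them, and the antisymmetric piece has to be controlled by relating it to the operators whose top eigenvalues are $\gamma_{\star}(\Omega)$ and $\xi_{\star}(\Omega)$ in Proposition~\ref{prop:exist_insurance}; the hypotheses $\lambda_{\star}<1$ together with the local-maximum/insurance condition are precisely what should guarantee that the symmetric eigenvector of $\Omega_\star$ realizes the global maximum rather than the antisymmetric sector. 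Pinning down this sector bound — i.e. establishing that the symmetric sector governs the worst-case passing probability to leading order — is the main obstacle. Once it is in place, the sample-complexity formula~\eqref{eq:tensorNm} follows immediately by inserting $p(\Omega)$ into $N_m=2\ln\delta/\ln p(\Omega)$ and expanding the logarithm for small $\varepsilon$.
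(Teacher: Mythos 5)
Your first three stages are correct and track the paper's own proof closely: the paper likewise restricts to pure product fake states (Lemma~\ref{lemma:pure-states}), uses Proposition~\ref{prop:exist_insurance} to pin the maximum to the corner $(\varepsilon_r,\varepsilon_r')=(\varepsilon,\varepsilon)$, expands there, and identifies the surviving order-$\varepsilon$ coefficient with the top eigenvalue of $\Omega_\star$; your symmetric-sector lower bound $\max\langle w|\Omega|w\rangle\geq 2\lambda_\star(\Omega)$ is also right. But the step you defer --- that the antisymmetric component of $\ket{w}$ cannot push $\langle w|\Omega|w\rangle$ above $2\lambda_\star(\Omega)$ --- is a genuine gap, and in fact it cannot be closed from the stated hypotheses. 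Consider $\Omega=\proj{\psi\psi}+\proj{\chi}$ with $\ket{\chi}=(\ket{\psi\psi_0^\perp}-\ket{\psi_0^\perp\psi})/\sqrt{2}$ for a fixed $\ket{\psi_0^\perp}$ orthogonal to $\ket{\psi}$. This $\Omega$ is a projector, commutes with $\bF_{1\leftrightarrow2}$, and fixes $\ket{\psi}\otimes\ket{\psi}$; since $\bP_s\ket{\chi}=0$ and $\bP_\psi\ket{\psi\psi}=0$ one gets $\Omega_\star=0$, hence $\lambda_\star(\Omega)=0$, while a short computation gives $\gamma_\star(\Omega)=0$ and $\xi_\star(\Omega)=1/4$, so the local-maximum condition $\xi_\star+\gamma_\star/2<1$ of Proposition~\ref{prop:exist_insurance} holds. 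Yet taking $\ket{\psi^\perp}=\ket{\psi_0^\perp}$, $\ket{\psi'^\perp}=-\ket{\psi_0^\perp}$ and $\varepsilon_r=\varepsilon_r'=\varepsilon$ yields
\begin{align}
\bra{\sigma\sigma'}\Omega\ket{\sigma\sigma'}
=(1-\varepsilon)^2+2\varepsilon(1-\varepsilon)
=1-\varepsilon^2,
\end{align}
which for small $\varepsilon$ exceeds the claimed value $1-2(1-\lambda_\star)\varepsilon+\cO(\varepsilon^{1.5})=1-2\varepsilon+\cO(\varepsilon^{1.5})$. In your language, the antisymmetric doubly projected operator $2\bP_\psi\bP_a\Omega\bP_a\bP_\psi$, with $\bP_a=(\mathbb{I}_{12}-\bF_{1\leftrightarrow2})/2$, has top eigenvalue $1>\lambda_\star(\Omega)$, and the antisymmetric sector wins.

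You should also be aware that the paper's own proof does not supply the missing sector bound; it sidesteps it by an unjustified algebraic substitution. The expansion it inherits from Proposition~\ref{prop:exist_insurance} writes the interference term as $(B+B')\sqrt{\varepsilon_r\varepsilon_r'}$ with $B=\bra{\psi^\perp\psi}\Omega\ket{\psi\psi^\perp}$ and $B'=\bra{\psi'^\perp\psi}\Omega\ket{\psi\psi'^\perp}$, whereas the genuine cross term arising from $\bra{\sigma\sigma'}\Omega\ket{\sigma\sigma'}$ is $2\,\mathrm{Re}\bra{\psi^\perp\psi}\Omega\ket{\psi\psi'^\perp}\sqrt{\varepsilon_r\varepsilon_r'}$; the two coincide only when $\ket{\psi'^\perp}=\ket{\psi^\perp}$, which is precisely the diagonal ansatz your lower bound uses. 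Your $\ket{w}$-expansion is the correct one, and it exposes the flaw: Theorem~\ref{theorem:optimization_target-app} as stated requires an additional hypothesis, e.g.\ that $2\bP_\psi\bP_a\Omega\bP_a\bP_\psi$ has top eigenvalue at most $\lambda_\star(\Omega)$, so that the symmetric sector dominates. That condition does hold for the paper's applications --- the graph-state strategy satisfies $\Omega_g\bP_\psi=0$ and the tensor-product strategy satisfies $\bP_\psi\bF_{1\leftrightarrow2}(\Omega_l\otimes\Omega_l)\bP_\psi=0$, so both sectors vanish or coincide --- but it does not follow from the insurance condition, and without it neither your argument nor the paper's can be completed.
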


\begin{proof}
We establish an additional critical maximum value for the operator $\Omega$ and the quantum state $\ket{\psi}$:
\begin{align}
    &\lambda_{\star}(\Omega) = \max_{\ket{\psi^\perp}}\left(\bra{\psi \psi^\perp} \Omega \ket{\psi \psi^\perp} + \bra{\psi^\perp \psi} \Omega \ket{\psi \psi^\perp}\right). 
\end{align}
According to Proposition~\ref{prop:exist_insurance}, the existence of insurance infidelity guarantees that $\xi_{\star}(\Omega) + \frac{1}{2} \gamma_{\star}(\Omega) < 1$. Consequently, $\lambda_{\star}(\Omega) \leq \xi_{\star}(\Omega) + \frac{\gamma_{\star}(\Omega)}{2} < 1$.

Given the insurance infidelity $\varepsilon_{\max}\left(\Omega\right)$ and the set $R$ defined in Proposition~\ref{prop:exist_insurance}, we observe that the set $S = \{(\varepsilon_r, \varepsilon_r')|\varepsilon_r, \varepsilon_r' \in [\varepsilon, \varepsilon_{\max}\left(\Omega\right)]\}$ satisfies $S \subset R$. Therefore, $p(\Omega)$, being the maximum value within the region $S$ with respect to variables $\psi^\perp$, $\psi'^\perp$, $\varepsilon_r$, and $\varepsilon'_r$, is attained solely at the constraint $(\varepsilon_r, \varepsilon'_r)=(\varepsilon, \varepsilon)$.

Further optimization over $\psi$ and $\psi^\perp$ is as follows:
\begin{align}
    p_{\max} &= 1 - \max_{\ket{\psi^\perp}, \ket{\psi'^\perp}}[(1-R-B) + (1 - R' -B')]\varepsilon + \cO(\varepsilon^{1.5}) \\
    &= 1 -2(1 - \lambda_{\star}(\Omega))\varepsilon + \cO(\varepsilon^{1.5}).
\end{align}

Again, given that $[\Omega,\bF_{1 \leftrightarrow 2}] = 0$, we can further simplify the expression of $\lambda_{\star}(\Omega)$:
\begin{align}
    \lambda_{\star}(\Omega) &= \max_{\ket{\psi^\perp}}\bra{\psi \psi^\perp} (\bF_{1\leftrightarrow2}+\mathbb{I}_{12})\Omega \ket{\psi \psi^\perp} \\
    &=\max_{\ket{\Phi}}\bra{\Phi} \mathbb{P}_{\psi}(\bF_{1\leftrightarrow2}+\mathbb{I}_{12})\Omega \mathbb{P}_{\psi} \ket{ \Phi}\\
    &= \max_{\ket{\Phi}}\bra{\Phi} 2\mathbb{P}_{\psi}\mathbb{P}_s\Omega \mathbb{P}_s\mathbb{P}_{\psi}  \ket{ \Phi},
\end{align}
where
\begin{align}
\mathbb{P}_{\psi} = \proj{\psi} \otimes (\mathbb{I} - \proj{\psi}),\qquad
\mathbb{P}_s = \frac{1}{2}(\bF_{1\leftrightarrow2} + \mathbb{I}_{12}).
\end{align}
Then, $\lambda_{\star}(\Omega)$ is the maximum eigenvalue of the operator $\Omega_{\star} = 2\mathbb{P_{\psi}}\mathbb{P}_s\Omega \mathbb{P}_s\mathbb{P}_{\psi}  $. 
\end{proof}

%%%%%%%%%%%%%%%%%%%%%%%%%%%%%%%%%%%%%%%%%%%%%%%%%%%%%%%%%%%%%%%%%%%%%%%%%%%%%%%%%%%%%%%%%%%%%%%%%%%
%%%%%%%%%%%%%%%%%%%%%%%%%%%%%%%%%%%%%%%%%%%%%%%%%%%%%%%%%%%%%%%%%%%%%%%%%%%%%%%%%%%%%%%%%%%%%%%%%%%
\subsection{Demonstrative example: The simple tensor product case}

From Theorem~\ref{theorem:optimization_target-app}, 
we know that to verify an arbitrary target state $\ket{\psi}$, 
we need to achieve the following objectives:
(a) Construct families of local projective measurements that 
unconditionally accept $\ket{\psi} \otimes \ket{\psi}$ with certainty 
and exist ensurance infidelity $\varepsilon_{\max}\left(\Omega\right)$, where $\Omega$ is the corresponding strategy;
(b) Minimize $\lambda_{\star}(\Omega)$
while maintaining $\varepsilon_{\max}\left(\Omega\right)$ at a suitable value.

To benchmark the optimization tasks described above, 
we consider the strategy $\Omega = \Omega_l \otimes \Omega_l$, 
which is simply a tensor product of two single-copy strategies $\Omega_l$. 
The operator $\Omega_{\star}$ can be calculated as below:
\begin{align}
    \Omega_{\star} &= \frac{1}{2}\mathbb{P_{\psi}}(\bF_{1\leftrightarrow2} + \mathbb{I}_{12}) \Omega_l \otimes \Omega_l (\bF_{1\leftrightarrow2} + \mathbb{I}_{12})\mathbb{P}_{\psi} \\
    &= \mathbb{P_{\psi}} \Omega_l \otimes \Omega_l \mathbb{P_{\psi}} + \mathbb{P_{\psi}}\frac{\bF_{1\leftrightarrow2} \Omega_l \otimes \Omega_l + \Omega_l \otimes \Omega_l\bF_{1\leftrightarrow2}}{2} \mathbb{P_{\psi}}  \\
    & = \mathbb{P_{\psi}} \Omega_l \otimes \Omega_l \mathbb{P_{\psi}}  \\
    &= \proj{\psi} \otimes [(\mathbb{I} - \proj{\psi})\Omega_l (\mathbb{I} - \proj{\psi})],
\end{align}
where in the third equality we use the fact that
$\mathbb{P_{\psi}} \bF_{1\leftrightarrow2} \Omega_l \otimes \Omega_l \mathbb{P_{\psi}} = 0$.
Then $\lambda_{\star}(\Omega) = \lambda_2(\Omega_l)$. 
This reduces to the standard single-copy verification efficiency as expected. 
Calculations also show that $\gamma_{\star}(\Omega_l \otimes \Omega_l) = 0$, 
indicating that $\varepsilon_{\max}\left(\Omega\right) \gg \varepsilon$.
In the following appendix, we construct a non-trivial two-copy strategy $\Omega$ 
for graph states, which satisfies that 
$\lambda_{\star}(\Omega) = 0$, $\gamma_{\star}(\Omega)=0$, and $\varepsilon_{\max}\left(\Omega\right) > 1 - \varepsilon$.

%%%%%%%%%%%%%%%%%%%%%%%%%%%%%%%%%%%%%%%%%%%%%%%%%%%%%%%%%%%%%%%%%%%%%%%%%%%%%%%%%%%%%%%%%%%%%%%%%%%
%%%%%%%%%%%%%%%%%%%%%%%%%%%%%%%%%%%%%%%%%%%%%%%%%%%%%%%%%%%%%%%%%%%%%%%%%%%%%%%%%%%%%%%%%%%%%%%%%%%
%%%%%%%%%%%%%%%%%%%%%%%%%%%%%%%%%%%%%%%%%%%%%%%%%%%%%%%%%%%%%%%%%%%%%%%%%%%%%%%%%%%%%%%%%%%%%%%%%%%
\section{Verification and information-preserving channel}\label{app:channel}

To construct the two-copy verification strategy, we consider
the case where the verifiers first implement the local operation and classical communication (LOCC) channel $\Lambda$. 
This channel treats the second copies as if they were an ideal graph state
and utilizes this entanglement resource to implement a series of non-local gates to the first copy. 
These gates are designed to perform unitary rotations, transforming an identical graph state into the specific state $\ket{0\cdots 0}$. Following this channel, everyone measures their first copies on the computational basis $\{\proj{0},\proj{1}\}$ and passed the test if the results are all $0$.
For simplicity, we use $\ket{\bm{0}}_n=\ket{0\cdots 0}$ to denote the basis state of $n$ -qubits. 
For the expected state $\ket{G}\ox\ket{G}$, it holds $\Lambda(\proj{G}\ox\proj{G})=\proj{\bm{0}}_n\ox\proj{\bm{0}}_n$.
To assess the efficiency for fake states, In this Appendix, 
we reformulate the optimization tasks in terms of information-preserving channels and establish the relation between channels 
and measurement operators as $\Omega_g = \Lambda^\dagger(\proj{\bm{0}}_n\ox\proj{\bm{0}}_n)$.
We need the following lemma, which follows directly from~\cite[Theorem 2]{Chitambar_2014}.

\begin{lemma}
    Any LOCC measurement strategy can be decomposed and consequently implemented through a LOCC channel within the same Hilbert space, followed by a measurement in the computational basis with a specific selection of binomial measurement results that yield the ``pass'' outcome.
\end{lemma}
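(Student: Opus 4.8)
The plan is to read an LOCC measurement strategy as an LOCC \emph{instrument} and then to convert its terminal classical read-out into a computational-basis measurement, invoking the canonical form of LOCC operations established in \cite[Theorem~2]{Chitambar_2014}. Concretely, a binary LOCC test $\{T,\mathbb{I}-T\}$ is produced by a finite-round protocol in which the parties alternate local instruments with classical communication; collecting all branches yields a family of completely positive maps $\{\mathcal{M}_m\}_m$ whose sum is a trace-preserving LOCC channel, the global index $m$ recording the entire transcript of local outcomes and messages. The ``pass'' effect is then $T=\sum_{m\in\mathcal{B}}\mathcal{M}_m^\dagger(\mathbb{I})$ for a distinguished subset $\mathcal{B}$ of transcripts declared to pass, so the whole content of the lemma is to realize this $T$ as $\Lambda^\dagger$ of a projector diagonal in the computational basis.

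First I would apply \cite[Theorem~2]{Chitambar_2014} to place the protocol in its tree/normal form, so that each round is a local instrument classically controlled by the previously communicated outcomes. The key step is to replace the final act of \emph{reading and discarding} the outcome $m$ by \emph{coherently writing} it into a designated block of computational-basis labels of the output: because every local branch is classically controlled, this write-in is itself a local, classically controlled isometry, so the modified protocol is still LOCC, and since all branches are retained it is trace-preserving. Calling the resulting channel $\Lambda$, a computational-basis measurement of the output returns exactly the transcript $m$, and declaring ``pass'' precisely when $m\in\mathcal{B}$ reproduces the original decision. This yields $T=\Lambda^\dagger\big(\sum_{b\in\mathcal{B}}\proj{b}\big)$, an LOCC channel followed by a computational-basis measurement with a specified pass set, as claimed; the identity $\Omega_g=\Lambda^\dagger(\proj{\bm{0}}_n\otimes\proj{\bm{0}}_n)$ used in the sequel is then the special case $\mathcal{B}=\{\bm{0}\}$.

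The main obstacle is the ``same Hilbert space'' requirement: the register storing the transcript must fit inside the local spaces already present rather than in fresh ancillas. I would settle this by coarse-graining the transcript, which Theorem~2 permits through merging of rounds, so that only a sufficient statistic for the pass/fail decision need be recorded, an amount the available local dimensions accommodate (for our verification tasks the second stored copy supplies exactly this room). A secondary subtlety is that LOCC is not topologically closed, so some care is required if the protocol uses unboundedly many rounds; it suffices to restrict to the finite-round strategies relevant here, for which the normal form of \cite{Chitambar_2014} applies verbatim and $\Lambda$ is manifestly LOCC. Once these bookkeeping points are handled, the claim follows directly from the cited structural theorem.
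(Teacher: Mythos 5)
Your overall route is the same as the paper's: the paper in fact gives no argument at all beyond asserting that the lemma ``follows directly from''~\cite[Theorem 2]{Chitambar_2014}, and your transcript/branch decomposition, retention of all branches for trace preservation, and the target identity $T=\sum_{m\in\mathcal{B}}\mathcal{M}_m^\dagger(\1)=\Lambda^\dagger\bigl(\sum_{b\in\mathcal{B}}\ketbra{b}{b}\bigr)$ are precisely the details that citation is supposed to supply. However, one step of your construction fails as literally written: replacing the final read-out by a ``local, classically controlled \emph{isometry}'' that writes the transcript into computational-basis labels \emph{within the same Hilbert space}. An isometry of a finite-dimensional space into itself is a unitary, and a unitary cannot imprint the transcript in general. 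Concretely, consider the one-round strategy in which one party flips a fair coin and declares pass or fail without touching the state: the two branches are $\mathcal{M}_0(\rho)=\mathcal{M}_1(\rho)=\rho/2$, both with full-rank output. Your write-in would need a unitary $V_0$ carrying the whole space into the span of the pass labels and a unitary $V_1$ carrying the whole space into the orthogonal complement of that span; since the image of a unitary is the whole space, this forces the pass projector to be simultaneously $\1$ and $0$, a contradiction. The obstruction persists whenever some branch has a Kraus operator of too large rank, and coarse-graining the transcript does not remove it.

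The repair is to keep exactly the discarding you tried to avoid. After the final round every party knows the full transcript $m$, so each party $a$ applies the classically controlled trace-and-replace channel $\rho_a\mapsto\tr[\rho_a]\,\ketbra{f(m)}{f(m)}$, where $f(m)=0$ if $m\in\mathcal{B}$ and $f(m)=1$ otherwise. This map is local given the shared classical record, keeps the total map $\Lambda$ trace preserving and LOCC, stays inside the same Hilbert space, and a one-line Kraus computation gives $\Lambda^\dagger(\ketbra{0\cdots0}{0\cdots0})=\sum_{m\in\mathcal{B}}\sum_j M_{m,j}^\dagger M_{m,j}=T$, where $\{M_{m,j}\}_j$ are the Kraus operators of the branch $\mathcal{M}_m$. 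It also dissolves your dimension-counting worry: only a single pass/fail bit, identical for all parties, is ever recorded, so no room beyond local dimension two is needed and no appeal to the second stored copy is necessary. This is in fact what the paper's own constructions do---the Kraus operators $M_i=\ketbra{0}{v_i}\ox U_i$ of its one-way example are measure-and-replace operators, not isometries.
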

One could set arbitrary binary string to the binomial measurement results with a ``pass'' outcome. However, the following theorem states that for a specific choice, $\{0\cdots0\}$, this strategy could formulate all the semi-optimal one-way strategies~\cite{Yu_2019}. 
\begin{theorem}
    Any semi-optimal one-way strategy~\cite{Yu_2019} can be constructed as 
    a one-way LOCC channel followed by a binomial passing choice represented as $0\cdots 0$.
\end{theorem}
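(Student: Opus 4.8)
The plan is to prove the statement constructively: for an arbitrary semi-optimal one-way strategy with test operator $\Omega$, I would build an explicit one-way LOCC channel $\Lambda$ whose adjoint returns $\Omega$ when evaluated on the single projector $\proj{\bm{0}}_n$, i.e.\ $\Omega=\Lambda^\dagger(\proj{\bm{0}}_n)$, which is exactly what it means to implement $\Omega$ by $\Lambda$ followed by a computational-basis measurement that passes only on $0\cdots0$. First I would recall the canonical branching form of such a strategy from~\cite{Yu_2019}: it is a classical mixture over measurement settings $\ell$ chosen with probability $p_\ell$, where in setting $\ell$ the first party measures in an orthonormal basis $\{\ket{u_a^{(\ell)}}\}_a$, broadcasts the outcome $a$, and the remaining parties then sequentially confirm, each via a rank-one test, the conditional state $\ket{\psi_a^{(\ell)}}$ that $\ket{\psi}$ would produce. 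The resulting test projector is a sum of rank-one product projectors $P_\ell=\sum_a\proj{u_a^{(\ell)}}\otimes\proj{\psi_a^{(\ell)}}$, and $\Omega=\sum_\ell p_\ell P_\ell$ accepts $\ket{\psi}$ with certainty because $\ket{\psi}=\sum_a\sqrt{q_a}\,\ket{u_a^{(\ell)}}\otimes\ket{\psi_a^{(\ell)}}$ lies in the range of every $P_\ell$.

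With this form in hand, I would realize setting $\ell$ by the channel $\Lambda_\ell$ defined as follows: the first party measures $\{\proj{u_a^{(\ell)}}\}_a$, applies a local unitary $W_a$ with $W_a\ket{u_a^{(\ell)}}=\ket{0}$, and broadcasts $a$; conditioned on this message the remaining parties apply a one-way LOCC unitary $V_a$ with $V_a\ket{\psi_a^{(\ell)}}=\ket{\bm{0}}_{n-1}$, which itself is obtained by recursing the same construction over the $n-1$ remaining parties so that all communication stays forward. Its Kraus operators are $K_a=(W_a\otimes V_a)(\proj{u_a^{(\ell)}}\otimes\mathbb{I})$, and the short computation $K_a^\dagger\proj{\bm{0}}_n K_a=\proj{u_a^{(\ell)}}\otimes\proj{\psi_a^{(\ell)}}$ gives $\Lambda_\ell^\dagger(\proj{\bm{0}}_n)=P_\ell$. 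Folding the classical choice of setting into the channel as $\Lambda=\sum_\ell p_\ell\Lambda_\ell$ then yields $\Lambda^\dagger(\proj{\bm{0}}_n)=\sum_\ell p_\ell P_\ell=\Omega$, completing the construction.

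I expect the genuine obstacle to be the two sharpenings over the preceding decomposition lemma, namely reducing its general pass set to the singleton $\{0\cdots0\}$ and keeping the channel one-way; both rest on the rank-one branching structure that semi-optimality guarantees. It is precisely because each branch confirms a one-dimensional conditional state that the accepting configuration of every branch can be unitarily steered onto the common vector $\ket{\bm{0}}_n$, collapsing the generic pass set to a single string, and it is because the branches are labelled by classical outcomes passed from earlier to later parties that the steering unitaries $V_a$ can be implemented with no backward communication. The remaining work is routine bookkeeping: checking that $W_a$ and $V_a$ exist (any two unit vectors are unitarily related), that each $\Lambda_\ell$ is CPTP and one-way LOCC implementable (a measurement followed by outcome-conditioned unitaries with forward broadcasting), and that the multipartite recursion composes the per-party steps into one channel whose adjoint reproduces $\Omega$ exactly.
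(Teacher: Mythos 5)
Your construction, specialized to the bipartite setting that the theorem actually addresses, is correct and is essentially the paper's own proof: since $W_a\proj{u_a^{(\ell)}}=\ketbra{0}{u_a^{(\ell)}}$, your Kraus operators $K_a=(W_a\otimes V_a)(\proj{u_a^{(\ell)}}\otimes\mathbb{I})=\ketbra{0}{u_a^{(\ell)}}\otimes V_a$ are exactly the paper's $M_i = \ketbra{0}{v_i}\otimes U_i$ (Alice measures her basis, replaces her register by $\ket{0}$ and broadcasts; Bob unitarily rotates his conditional state $\braket{v_i}{\psi}$ onto $\ket{0}$), and accepting only the all-zeros outcome gives $\Lambda^\dagger(\proj{\bm{0}}_n)=\Omega$ just as in the paper. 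Folding the classical mixture over settings $\ell$ into the channel is likewise unproblematic.

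One genuine flaw, though it lies outside what the statement requires: your recursive multipartite extension does not work as written. For $n\geq 3$ parties the conditional state $\ket{\psi_a^{(\ell)}}$ of the remaining parties is generically entangled, and there is then no one-way LOCC operation that is simultaneously a \emph{unitary} $V_a$ with $V_a\ket{\psi_a^{(\ell)}}=\ket{\bm{0}}_{n-1}$: an LOCC-implementable unitary channel is necessarily a tensor product of local unitaries, which cannot map an entangled state to a product state. What your recursion actually produces is a genuine channel with many Kraus operators, whose adjoint applied to $\proj{\bm{0}}_{n-1}$ is the nested, rank-greater-than-one one-way test operator for $\ket{\psi_a^{(\ell)}}$ rather than the rank-one projector $\proj{\psi_a^{(\ell)}}$; correspondingly, the ``branching form'' $P_\ell=\sum_a\proj{u_a^{(\ell)}}\otimes\proj{\psi_a^{(\ell)}}$ that you take as the starting point is itself not one-way-LOCC-implementable beyond two parties. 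Since the semi-optimal one-way strategies of Ref.~\cite{Yu_2019}, and hence the theorem, concern bipartite states, this does not invalidate your proof of the statement, but the recursive claim should be deleted or restated at the level of channels rather than unitaries.
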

\begin{proof}
    For a semi-optimal one-way strategy with target state $\ket{\psi}$, Alice chooses a measurement $\proj{v_i}$ with results $i=0,\cdots,n$. Subsequently, Bob performs measurements on $\proj{u_{t|i}}$ where $\ket{u_{0|i}} = \braket{v_i}{\psi}$. In accordance with this, we define a unitary matrix $U_i$ such that $\ket{0} = U_i \ket{u_{0|i}}$. The one-way LOCC channel can be expressed as
    \begin{align}
        \Lambda(\rho) &=\sum_i M_i \rho M_i^\dagger ,\\
        M_i &= \ket{0} \bra{v_i} \otimes U_i.
    \end{align}
    Subsequently, if Alice and Bob apply this channel first and then both measure on $\proj{0}, \proj{1}$ with the pass results represented by $\ket{00}$, they will get the same passing probability for any fake state $\sigma$.
\end{proof}
Thus we consider all the strategies that set the "pass" binomial measurement results as $\{0\cdots0\}$ and gives the channel optimization task below:
\begin{theorem}[Channel optimization] \label{theo:chanequ3}
  Fix the choice of "pass" binomial measurement results as $\{0\cdots 0\}$. Let's assume that $n$ independent parties share a state $\ket{\psi}$. A LOCC channel $\Lambda$ is optimal for verification if and only if it satisfies the following condition:

    \begin{enumerate}
        \item $\Lambda(\proj{\psi}) = \proj{0\cdots0}$.
        \item Any other LOCC channel $\Lambda'$ satisfied the first condition will cancel more information on the difference between $\sigma$ and $\ket{\psi}$ compared to $\Lambda$. In other words, 
\begin{align}
    \tr[\proj{\psi} \sigma] &\leq  \tr[\Lambda(\proj{\psi}) \Lambda(\sigma)] \\
    &\leq \tr[\Lambda'(\proj{\psi}) \Lambda'(\sigma)] .
\end{align}
    \end{enumerate}

This information-preserving channel, along with the verification strategy $\{\Omega,\mathbb{I} - \Omega\}$ constructed by this channel, then satisfies:
\begin{enumerate}
    \item $\Omega = \Lambda^\dagger(\proj{0\cdots0})$.
    \item $M_i\ket{\psi} = c_i \ket{0\cdots 0}$. Here $M_i$ is the Kraus operator of channel $\Lambda$, $c_i$ is a constant parameter.
    \item $\Lambda^\dagger(\proj{0\cdots0}) \ket{\psi} =\sum_i c_i M_i^\dagger \ket{0\cdots 0}= \ket{\psi}$.
\end{enumerate}

\end{theorem}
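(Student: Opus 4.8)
The plan is to separate the theorem into its two logical halves: the operator identity together with the three structural properties of the induced effect $\Omega$, which follow from Kraus-operator bookkeeping, and the optimality ``if and only if,'' which I would ground in the monotonicity of fidelity under quantum channels. I would prove the structural properties first, since they are essentially forced by condition~1, and then address optimality.

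First I would establish Property~1, $\Omega = \Lambda^\dagger(\proj{0\cdots0})$, directly from the definition of the protocol. The probability that a fake state $\sigma$ passes is the probability of the all-zero computational outcome after applying $\Lambda$, i.e. $\tr[\proj{0\cdots0}\,\Lambda(\sigma)]$. By the defining relation of the adjoint (Heisenberg-picture) channel, $\tr[\proj{0\cdots0}\,\Lambda(\sigma)] = \tr[\Lambda^\dagger(\proj{0\cdots0})\,\sigma]$ for every input, so the unique effect reproducing this passing probability for all $\sigma$ is $\Omega = \Lambda^\dagger(\proj{0\cdots0})$. For Property~2 I would expand condition~1 as $\Lambda(\proj{\psi}) = \sum_i M_i\ket{\psi}\bra{\psi}M_i^\dagger = \proj{0\cdots0}$, which writes a sum of positive rank-one operators as a single rank-one projector; testing against any vector orthogonal to $\ket{0\cdots0}$ forces each summand to be supported on $\mathrm{span}\{\ket{0\cdots0}\}$, hence $M_i\ket{\psi} = c_i\ket{0\cdots0}$.

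For Property~3 I would first substitute Property~2 into $\Lambda^\dagger(\proj{0\cdots0})\ket{\psi} = \sum_i M_i^\dagger\ket{0\cdots0}\bra{0\cdots0}M_i\ket{\psi}$, using $\bra{0\cdots0}M_i\ket{\psi} = c_i$, to obtain the first equality $\sum_i c_i M_i^\dagger\ket{0\cdots0}$. For the second equality I would argue that $\Omega = \Lambda^\dagger(\proj{0\cdots0})$ is a genuine effect: trace preservation of $\Lambda$ gives $\Lambda^\dagger(\mathbb{I}) = \mathbb{I}$, whence $0 \le \Omega \le \mathbb{I}$, while $\bra{\psi}\Omega\ket{\psi} = \tr[\proj{0\cdots0}\,\Lambda(\proj{\psi})] = 1$. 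Since $\mathbb{I} - \Omega \ge 0$ has vanishing expectation in $\ket{\psi}$, it annihilates $\ket{\psi}$, so $\Omega\ket{\psi} = \ket{\psi}$, which is exactly the claim.

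For the optimality characterization I would use condition~1 to rewrite the passing probability as $\tr[\Lambda(\proj{\psi})\Lambda(\sigma)]$ and exploit that both $\proj{\psi}$ and $\Lambda(\proj{\psi}) = \proj{0\cdots0}$ are pure: the quantity is then a squared fidelity, and the universal lower bound $\tr[\proj{\psi}\sigma] \le \tr[\Lambda(\proj{\psi})\Lambda(\sigma)]$ follows from the data-processing inequality for fidelity. Optimality means the passing probability of fake states is driven as close as possible to this floor, i.e. $\Lambda$ attains the minimum of the middle quantity over all channels obeying condition~1; condition~1 itself is forced because a channel failing $\Lambda(\proj{\psi}) = \proj{0\cdots0}$ does not accept $\ket{\psi}$ with certainty and so incurs the disadvantageous $\cO(\varepsilon^2)$ scaling established earlier for the acceptance condition. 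The main obstacle I anticipate is precisely this optimality half: making it rigorous rather than definitional requires fixing the exact class of LOCC channels over which the minimization in condition~2 runs and verifying that fidelity monotonicity is the correct operational floor, i.e. that no valid channel can outperform the information-preserving one. By comparison, the support argument for Property~2 and the expectation-value argument for Property~3 are routine.
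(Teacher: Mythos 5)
Your proposal is correct, and its overall skeleton matches the paper's proof: Property~1 via the adjoint (Heisenberg-picture) channel, and Property~2 by observing that $\sum_i M_i\proj{\psi}M_i^\dagger=\proj{0\cdots0}$ forces each $M_i\ket{\psi}$ into $\mathrm{span}\{\ket{0\cdots0}\}$ (the paper phrases this as purity/extremality of $\proj{0\cdots0}$ in the convex set of states; your orthogonal-vector test is the same fact, stated more elementarily). Where you genuinely diverge is Property~3 and the optimality discussion. For the final equality $\Omega\ket{\psi}=\ket{\psi}$, the paper stays inside Kraus bookkeeping: it substitutes $c_iM_i^\dagger\ket{0\cdots0}=M_i^\dagger M_i\ket{\psi}$ and invokes trace preservation $\sum_i M_i^\dagger M_i=\mathbb{I}$; you instead argue abstractly that $\Omega=\Lambda^\dagger(\proj{0\cdots0})$ is an effect ($0\le\Omega\le\mathbb{I}$ by unitality of $\Lambda^\dagger$) with $\bra{\psi}\Omega\ket{\psi}=1$, so the positive operator $\mathbb{I}-\Omega$ annihilates $\ket{\psi}$. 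Both are valid; your route is slightly cleaner and does not reuse Property~2 for the last step, while the paper's route makes the intermediate expression $\sum_i c_iM_i^\dagger\ket{0\cdots0}$ appear naturally. On optimality, you add something the paper omits: the first inequality $\tr[\proj{\psi}\sigma]\le\tr[\Lambda(\proj{\psi})\Lambda(\sigma)]$ is indeed an instance of fidelity data processing (both operators being pure), so it is a theorem rather than an assumption; the paper simply treats all of condition~2 as given and reduces optimality to the definitional statement that $p(\Lambda')\ge p(\Lambda)$ for competitors satisfying condition~1. Your honest flag that this half is largely definitional --- and that making it rigorous requires pinning down the class of LOCC channels and the role of the $\cO(\varepsilon^2)$ penalty for channels violating condition~1 --- accurately reflects the looseness present in the paper's own treatment.
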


\begin{proof}
    We consider the strategy in which verifiers manipulate channel $\Lambda$ first and pass with all qubits in the result $\proj{0\cdots0}$. For any fake state $\sigma$, the passing probability is expressed as:
\begin{align}
    p(\Lambda) &= \bra{0\cdots0} \Lambda(\sigma) \ket{0\cdots 0}\\
    &=\tr[\Lambda(\sigma) \Lambda(\proj{\psi\cdots \psi})]  = \tr[\sigma \Lambda^\dagger(\proj{0\cdots0})] .
    \label{channelprob}
\end{align}
Consequently, we have derived the first conclusion that $\Omega = \Lambda^\dagger(\proj{0\cdots0})$. For any other channels, due to the second condition in the theorem, it must satisfy:
\begin{align}
    p(\Lambda') = \tr[\Lambda'(\sigma) \Lambda'(\proj{\psi\cdots \psi})] \geq p(\Lambda).
\end{align}
A larger passing probability implies that the $\Omega'$ generated by $\Lambda'$ will have less power to discern the fake state compared to $\Lambda$, showcasing the optimality of the channel construction within this fixed passing binomial choice. We suppose $M_i \ket{\psi} = c_i \ket{\Phi_i}$. Then the first condition becomes:
\begin{align}
    |c_i|^2 \sum_{i} \proj{\Phi_i} = \proj{0\cdots 0}.
\end{align}
Given that $\proj{0\cdots 0}$ is a pure state and lies at the boundary of the convex set, it implies that $\ket{\Phi_i} = \ket{0\cdots 0}$ must be satisfied, which proves the second conclusion. Regarding the last conclusion, we prove it with the calculations below:
\begin{align}
    \Lambda^\dagger(\proj{0\cdots0}) \ket{\psi} &= \sum_i M_i^\dagger \proj{0 \cdots 0} M_i \ket{\psi} \\
    &=\sum_i c_i M_i^\dagger \ket{0\cdots 0} \\
    &=\sum_i M_i^\dagger M_i \ket{\psi} = \ket{\psi}.
\end{align}
In the last equality, we use the trace one condition on the channel where $\sum_i M_i^\dagger M_i = \mathbb{I}$.
\end{proof}

%%%%%%%%%%%%%%%%%%%%%%%%%%%%%%%%%%%%%%%%%%%%%%%%%%%%%%%%%%%%%%%%%%%%%%%%%%%%%%%%%%%%%%%%%%%%%%%%%%%
%%%%%%%%%%%%%%%%%%%%%%%%%%%%%%%%%%%%%%%%%%%%%%%%%%%%%%%%%%%%%%%%%%%%%%%%%%%%%%%%%%%%%%%%%%%%%%%%%%%
\subsection{Demonstrative example: Bell state}

As a demonstrative example, 
we show that the single-copy optimal verification strategy 
for the Bell state~\cite{pallister2018optimal} can be reformulated in terms of quantum channels. 
Specifically, the optimal strategy has the following form~\cite{pallister2018optimal}
\begin{align}
    \Omega = \frac{1}{3} (P_{ZZ}^+ + P_{YY}^- + P_{XX}^+).
\end{align}
We construct the Karus operators according to this operator:
\begin{align}
    \Lambda(\rho) &= \sum_{i = 0}^{5} M_i \rho M_i^\dagger, \\
    M_0 &= \frac{1}{\sqrt{3}}\ket{0}\bra{0} \otimes \mathbb{I} \ , \ M_1 = \frac{1}{\sqrt{3}} \ket{0}\bra{1} \otimes X ,\\
    M_2 &= \frac{1}{\sqrt{3}}\ket{0}\bra{+} \otimes H \ , \ M_3 = \frac{1}{\sqrt{3}} \ket{0}\bra{-} \otimes XH ,  \\
     M_4 &= \frac{1}{\sqrt{3}}\ket{0}\bra{+i} \otimes S^* \ , \ M_5 = \frac{1}{\sqrt{3}} \ket{0}\bra{-i} \otimes X S^* .
\end{align}
It is easy to check that
$\sum_i M_i^\dagger M_i = \mathbb{I}$, $M_i \ket{\Phi} = \ket{00}/\sqrt{6}$, and
\begin{align}
    \Omega = \sum_{i = 0}^{i = 5} M_i^\dagger \proj{00} M_i \equiv \Lambda^\dagger(\proj{00}).
\end{align}
We observe that
\begin{align}
    p = \tr[\proj{00} \Lambda(\sigma)] = \tr[\Lambda(\proj{\psi})\Lambda(\sigma)] \geq \tr[\proj{\psi}\sigma].
\end{align}
The inequality is satisfied if and only if $\Lambda$ is a unitary channel. 
This represents the minimum passing probability for all measurement strategies 
and thus yields the globally optimal entangled 
measurement strategy $\{\proj{\psi}, \mathbb{I} -\proj{\psi}\}$, 
which may not always be realizable if only local operations and classical communication are allowed.

\section{Non-local gates through graph state entanglement}
\label{app:proofeqGraph} 
In this Appendix, we show that graph states can function as an entanglement resource to locally implement non-local control-$Z$ gates. In the following, we use $CZ_{AB}$ and $C_{AB}$ to denote control-$Z$ gate and control-$X$ gate with control qubit $A$ and target qubit $B$. We first prove the theorem below:
\begin{theorem}[Graph state disentangled equation]\label{Lemma:graph_disentangled}
Through local interactions $A_g$ between qubits $O_i'$ in graph state  $\ket{G}$ and auxiliary qubits $O_i$ in state $\ket{\omega}$, the graph state $\ket{G}$ associated with graph $g = (V,E)$ can operate as an entanglement resource, yielding a non-local unitary transformation $B_g$ on auxiliary qubits subject to local Pauli corrections denoted as $L_g(a)$ and $Q_g(a)$. Here $a$ is a binary string that represents different measurement results of qubits $O_i'$ on the computational basis. This non-local unitary matrix $B_g$ can transform one identical graph state to $\ket{0\cdots0}$:
\begin{align}    
A_g \ket{\omega}_O\otimes \ket{G}_{O'} &=\frac{1}{\sqrt{2^{|V|}}}\sum_{a} L_g(a) B_g  \ket{\omega}_O \otimes \ket{a}_{O'}, \label{eq:landr_graph_disentangled}
\end{align}
where 
\begin{align}   
A_g &= \prod_{i \in V } H_{O_i} C_{O_iO_i'}, \\
L_g(a) &= \prod_{(m,n) \in E} (-1)^{a_ma_n}X_{O_m} ^{a_n}X_{O_n}^{a_m}, \\
B_g &= \left(\prod_{i \in V} H_{O_i}\right) \times \left(\prod_{(m,n) \in E}CZ_{O_mO_n} \right).
\end{align}
Inversely, it holds that
\begin{align}
    A_g \ket{G}_{O} \otimes \ket{\omega}_{O'} 
=   \frac{1}{\sqrt{2^{|V|}}}\sum_{a} L_g(a)Q_g(a)B_g 
    \ket{\omega}_{O} \otimes \ket{a}_{O'} , \label{eq:randr_graph_disentangled}
\end{align}
where
\begin{align}
    Q_g(a) = \prod_{i \in V} Z_{O_i}^{a_i}.
\end{align}
\end{theorem}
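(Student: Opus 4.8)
The plan is to prove both operator identities by direct expansion in the computational basis, exploiting that every object in sight ($A_g$, $B_g$, $L_g(a)$, $Q_g(a)$ and $\ket{G}$ itself) is Clifford and has an explicit action on basis states. Writing $n:=|V|$, I would first record $\ket{G}=\prod_{(m,n)\in E}CZ_{mn}\ket{+}^{\otimes n}=\frac{1}{\sqrt{2^n}}\sum_a(-1)^{q(a)}\ket{a}$, where $q(a):=\sum_{(m,n)\in E}a_ma_n$ is the edge-phase exponent, and expand the auxiliary register as $\ket{\omega}=\sum_b\omega_b\ket{b}$. Both claimed equalities then reduce to matching, for every pair $(a,c)$, the coefficient of $\ket{c}_O\otimes\ket{a}_{O'}$ on the two sides.

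For Eq.~\eqref{eq:landr_graph_disentangled} I would push $A_g=(\prod_iH_{O_i})(\prod_iC_{O_iO_i'})$ through the state layer by layer. The controlled-$X$ layer sends $\ket{b}_O\ket{a}_{O'}\mapsto\ket{b}_O\ket{a\oplus b}_{O'}$, after which the Hadamard layer replaces $\ket{b}_O$ by $\frac{1}{\sqrt{2^n}}\sum_c(-1)^{b\cdot c}\ket{c}_O$, with $b\cdot c:=\sum_jb_jc_j$. Reindexing $a\mapsto a\oplus b$ on the $O'$ register, the only nontrivial input is the mod-$2$ identity $q(a\oplus b)\equiv q(a)+q(b)+\sum_{(m,n)\in E}(a_mb_n+a_nb_m)$, obtained by expanding each $(a_m\oplus b_m)(a_n\oplus b_n)$ modulo $2$.

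On the right-hand side I would compute $B_g\ket{\omega}$ directly: the $CZ$ layer contributes $(-1)^{q(b)}$ and the Hadamard layer the Fourier phase, giving $B_g\ket{\omega}=\frac{1}{\sqrt{2^n}}\sum_{b,c}\omega_b(-1)^{q(b)}(-1)^{b\cdot c}\ket{c}$. The structural fact that makes everything line up is $L_g(a)=(-1)^{q(a)}X^{c(a)}$: the product of $X$'s flips qubit $j$ exactly $\sum_{k\sim j}a_k=c_j(a)$ times, so the Pauli-$X$ byproduct is precisely the parity code $c(a)$ of the outcome string, and the remaining signs collect into $(-1)^{q(a)}$. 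Applying $X^{c(a)}$ to $\ket{c}$ and using $b\cdot c(a)\equiv\sum_{(m,n)\in E}(a_mb_n+a_nb_m)$ reproduces the left-hand coefficient exactly.

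For the inverse identity Eq.~\eqref{eq:randr_graph_disentangled} the graph state now sits on $O$, so the Hadamard layer of $A_g$ acts on the graph-state register as well; this is the one place that demands care and is the main obstacle. The extra Hadamard produces an additional phase $(-1)^{a\cdot c}$ coupling the measured string $a$ to the output $c$, which is exactly the action of $Q_g(a)=Z^a$, so the same term-by-term comparison goes through provided one also checks $a\cdot c(a)\equiv 0\pmod 2$ — true because each edge contributes $2a_ma_n$ — ensuring the inserted $Z^a$ does not disturb the $X$-byproduct matching. I would close by noting that this computation is just the operator form of $CZ$ gate teleportation through the graph-state entanglement, with $L_g(a)$ and $Q_g(a)$ the Pauli byproducts of the teleported gates.
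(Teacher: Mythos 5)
Your proposal is correct and follows essentially the same route as the paper's own proof: a direct expansion of both sides in the computational basis, using the graph-state expansion $\ket{G}=\tfrac{1}{\sqrt{2^n}}\sum_a(-1)^{q(a)}\ket{a}$, the CNOT/Hadamard actions, and the mod-2 identity $q(a\oplus b)\equiv q(a)+q(b)+\sum_{(m,n)\in E}(a_mb_n+a_nb_m)$, with the Pauli byproducts absorbed by relabeling the output string. Your explicit isolation of $L_g(a)=(-1)^{q(a)}X^{c(a)}$, $b\cdot c(a)\equiv\sum_{(m,n)\in E}(a_mb_n+a_nb_m)$, and $a\cdot c(a)\equiv 0\pmod 2$ merely names the bookkeeping steps that the paper carries out implicitly in its coefficient-matching calculation.
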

\begin{proof}
To prove Theorem~\ref{Lemma:graph_disentangled}, it suffices to demonstrate the correctness of two disentangled equations below. 
\begin{align}
    \bra{a}_{O'} (\prod_{i \in V } H_{O_i} C_{O_iO_i'}) &\times (\prod_{(m,n) \in E} CZ_{O_m'O_n'} ) \ket{+}_{O'} \notag \\
    &=\frac{1}{\sqrt{2^{|V|}}} (\prod_{(m,n) \in E} (-1)^{a_ma_n}X_{O_m} ^{a_n}X_{O_n}^{a_m}) \times  (\prod_{i \in V} H_{O_i}) \times (\prod_{(m,n) \in E}CZ_{O_mO_n} ), \label{eq:rdisteq-graph-pre} \\
     \bra{a}_{O'} (\prod_{i \in V } H_{O_i} C_{O_iO_i'}) &\times (\prod_{(m,n) \in E} CZ_{O_mO_n} ) \ket{+}_{O} \notag \\
    &=\frac{1}{\sqrt{2^{|V|}}} (\prod_{(m,n) \in E} (-1)^{a_ma_n}X_{O_m} ^{a_n}X_{O_n}^{a_m}) \times  (\prod_{i \in V} Z_{O_i}^{a_i}) \times  (\prod_{i \in V} H_{O_i}) \times (\prod_{(m,n) \in E}CZ_{O_mO_n} ) \mathbb{I}_{O'O} .\label{eq:ldisteq-graph-pre}
\end{align}
We decomposed the state $\ket{\omega}$ in the computational basis: $\ket{\omega} = \sum_p \lambda_p \ket{p_0 \cdots p_N}$. For Eq.~\eqref{eq:rdisteq-graph-pre}, we calculate the expression below:
\begin{align}
   2^{|V|} \text{LHS~} \ket{\omega}_O &=\sqrt{2^{|V|}} \sum_{p,q} \bra{a}_{O'} \prod_{i \in V} H_{O_i} C_{O_iO_i'} \prod_{(m,n) \in E} (-1)^{q_mq_n} \lambda_p \ket{p_0 \cdots p_N}_O \otimes \ket{q_0 \cdots q_N}_{O'} \\
    &= \sum_{p,q,u} \lambda_p (-1)^{\sum_{i = 0}^N u_i p_i} \prod_{(m,n) \in E} (-1)^{q_mq_n}  \braket{a}{q_0 + p_0,\cdots,q_N+p_N} \ket{u_0,\cdots,u_N}_O \\
    &= \sum_{p,u} \lambda_p (-1)^{\sum_{i = 0}^N u_i p_i} \prod_{(m,n) \in E} (-1)^{(a_m+p_m)(a_n+p_n)}  \ket{u_0,\cdots,u_N}_O \\
    &= \sum_{p,u} \lambda_p (-1)^{\sum_{i = 0}^N u_i p_i} \prod_{(m,n) \in E} (-1)^{a_m \cdot a_n} \prod_{(m,n) \in E} (-1)^{p_m \cdot p_n} \prod_{(m,n) \in E} (-1)^{a_m \cdot p_n + p_m \cdot a_n}  \ket{u_0,\cdots,u_N}_O .
\end{align}
\begin{align}
   2^{|V|} \text{RHS~}  \ket{\omega}_O&= \sqrt{2^{|V|}} \sum_p \lambda_p \prod_{(m,n) \in E} (-1)^{a_ma_n} ( \prod_{(m,n) \in E} X_{O_m}^{a_n} X_{O_n}^{a_m}) (\prod_{i \in V} H_{O_i}) \times (\prod_{(m,n) \in E}CZ_{O_mO_n} ) \ket{p_0\cdots p_N}_O \\
   &=\sum_{u,p} \lambda_p \prod_{(m,n) \in E} (-1)^{a_ma_n} ( \prod_{(m,n) \in E} X_{O_m}^{a_n} X_{O_n}^{a_m})  (-1)^{\sum_{i = 0}^N u_i p_i} \times \prod_{(m,n) \in E}(-1)^{p_mp_n}  \ket{u_0\cdots u_N}_O \\
   &=\sum_{u,p} \lambda_p (-1)^{\sum_{i = 0}^N u_i p_i}  \prod_{(m,n) \in E} (-1)^{a_ma_n}\prod_{(m,n) \in E}(-1)^{p_mp_n}  ( \prod_{(m,n) \in E} X_{O_m}^{a_n} X_{O_n}^{a_m})   \ket{u_0\cdots u_N}_O \\
    &= \sum_{p,u'} \lambda_p (-1)^{\sum_{i = 0}^N u_i' p_i} \prod_{(m,n) \in E} (-1)^{a_m \cdot a_n} \prod_{(m,n) \in E} (-1)^{p_m \cdot p_n} \prod_{(m,n) \in E} (-1)^{a_m \cdot p_n + p_m \cdot a_n}  \ket{u_0',\cdots,u_N'}_O.
\end{align}
This coincidence proves Eq.~\eqref{eq:rdisteq-graph-pre}.
For Eq.~\eqref{eq:ldisteq-graph-pre}, the same calculation proceeds as follows:
\begin{align}
   2^{|V|} \text{LHS~} \ket{\omega}_{O'}&=\sqrt{2^{|V|}} \sum_{p,q} \bra{a}' \prod_{i \in V} H_{O_i} C_{O_iO_i'} \prod_{(m,n) \in E} (-1)^{q_mq_n} \lambda_p \ket{q_0 \cdots q_N}_O \otimes \ket{p_0 \cdots p_N}_{O'} \\
    &= \sum_{p,q,u} \lambda_p (-1)^{\sum_{i = 0}^N u_i q_i} \prod_{(m,n) \in E} (-1)^{q_mq_n}  \braket{a}{q_0 + p_0,\cdots,q_N+p_N} \ket{u_0,\cdots,u_N}_{O} \\
    &= \sum_{p,u} \lambda_p (-1)^{\sum_{i = 0}^N u_i (a_i+p_i)} \prod_{(m,n) \in E} (-1)^{(a_m+p_m)(a_n+p_n)}  \ket{u_0,\cdots,u_N}_O \\
    &= \sum_{p,u} \lambda_p (-1)^{\sum_{i = 0}^N u_i p_i} \!\!\!\! \!\!\!\!  \prod_{(m,n) \in E}\!\!\!\!  (-1)^{a_m \cdot a_n} \!\!\!\! \!\!\!\! \prod_{(m,n) \in E} \!\!\!\! (-1)^{p_m \cdot p_n} (-1)^{\sum_{i = 0}^N u_i a_i}\!\!\!\! \!\!\!\! \prod_{(m,n) \in E} \!\!\!\! (-1)^{a_m \cdot p_n + p_m \cdot a_n}  \ket{u_0,\cdots,u_N}_O .
\end{align}
\begin{align}
   2^{|V|} \text{RHS~}  \ket{\omega}_{O'}
   &=\sum_{u,p} \lambda_p \!\!\!\! \prod_{(m,n) \in E}\!\!\!\!  (-1)^{a_ma_n} (\!\!\!\!  \prod_{(m,n) \in E}\!\!\!\!  X_{O_m}^{a_n} X_{O_n}^{a_m}) (-1)^{\sum_{i = 0}^N u_i a_i} (-1)^{\sum_{i = 0}^N u_i p_i} \times \!\!\!\! \prod_{(m,n) \in E}\!\!\!\! (-1)^{p_mp_n}  \ket{u_0\cdots u_N}_O \\
   &=\sum_{u,p} \lambda_p (-1)^{\sum_{i = 0}^N u_i p_i}  \!\!\!\! \prod_{(m,n) \in E} \!\!\!\! (-1)^{a_ma_n}\!\!\!\! \prod_{(m,n) \in E}\!\!\!\! (-1)^{p_mp_n}   (-1)^{\sum_{i = 0}^N u_i a_i} ( \!\!\!\! \prod_{(m,n) \in E}\!\!\!\!  X_{O_m}^{a_n} X_{O_n}^{a_m})   \ket{u_0\cdots u_N}_O \\
    &= \sum_{p,u'} \lambda_p (-1)^{\sum_{i = 0}^N u_i' p_i} \!\!\!\! \!\!\!\!  \prod_{(m,n) \in E}\!\!\!\!  (-1)^{a_m \cdot a_n} \!\!\!\! \!\!\!\! \prod_{(m,n) \in E}\!\!\!\!  (-1)^{p_m \cdot p_n}  (-1)^{\sum_{i = 0}^N u_i' a_i} \!\!\!\! \prod_{(m,n) \in E}\!\!\!\!  (-1)^{a_m \cdot p_n + p_m \cdot a_n}  \ket{u_0',\cdots,u_N'}_O.
\end{align}
This proves the Eq.~\eqref{eq:ldisteq-graph-pre}.  
\end{proof}
It is noteworthy that $A_g$ represents local operations with respect to different verifiers because two-qubit gates $C_{O_iO_i'}$ can be locally realized in each verifier's quantum memory. However, the operator $B_g$ in Theorem~\ref{Lemma:graph_disentangled} involves non-local operations-Control-Z gates between qubits at different parties. This nonlocality arises from the consumption of the entanglement resource of the graph state $\ket{G}$. 

Numerous pertinent observations merit discussion. 

Firstly, in the context of two qubits, the equation presented in Theorem~\ref{Lemma:graph_disentangled} converges to the optimal local implementation of the CNOT gate, as elucidated in prior research~\cite{PhysRevA.62.052317}. Consequently, we anticipate that a singular entangled graph state, coupled with adjacent edge communication, will prove efficacious for the local implementation of control-Z gates between graph edges. This enables the restoration of a set of control-$Z$ gates in the entangled state, which could be generated through Ising interactions, and facilitates rapid implementation of those gates on multiple remote computers via entanglement distribution and local gates. This methodology holds particular promise for applications in distributed quantum computation.
\begin{figure}[!htbp]
    \centering
    \includegraphics[width=0.8\linewidth]{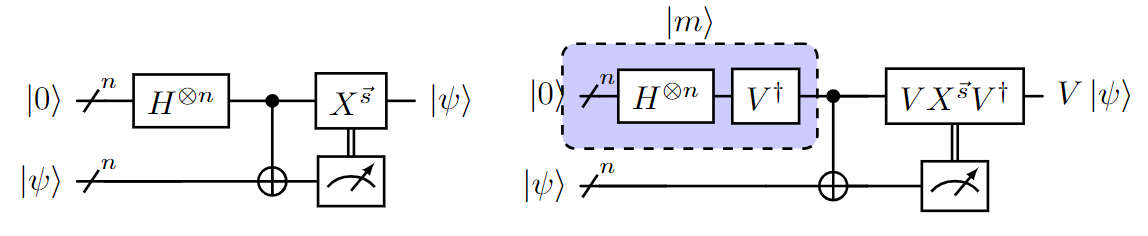}
    \caption{The one-bit gate teleportation circuit guarantees that \( VX^{\vec{s}}V^{\dagger} \) remains a local gate for any \(\vec{s}\), and that \( V^\dagger \) commutes with \(\text{CNOT}\) gates. Consequently, \( V^\dagger \) must be a Clifford gate and a phase gate, modulo some local unitary gates.} 
    \label{fig:teleportation}
\end{figure}

Secondly, how to use quantum state as resource to generate certain quantum gate has been discussed in the region of fault-tolerant quantum computation.
Proof above can be well understood in the one-bit teleportation construction in Figure~\ref{fig:teleportation}~\cite{PhysRevA.62.052316}, where $\ket{m}$ represents our graph states and $V$ is the gate it locally implements. 
Any phase gate $V$ can be decomposed into a composition of $\{U, CZ, CCZ, \cdots\}$.
However, any presence of $C^kZ$ will lead to a $C^{k-1}Z$ correction on $VX^{\vec{s}}V^{\dagger}$,
thus turning the verification strategy into a non-local one.
We can then conclude that only states equivalent to graph states under local unitary transformations can be used as a non-local gate implementation resource.
Given that any stabilizer state can be transformed into a graph state through local Clifford (LC) operations~\cite{PhysRevA.69.022316},
our strategy can be generalized to all stabilizer states.

 It is valuable to determine how to locally use non-stabilizer states as a "gate resource" in the future research.
One might use the recursive construction of teleportation circuits mentioned by Zhou \emph{et al.} \cite{PhysRevA.62.052316}. 
Exploring qudit teleportation circuits might be another direction~\cite{de2021efficient}.

Lastly, the presented theorem introduces an efficient two-copy verification strategy 
for graph states, as shown in the next section.

%%%%%%%%%%%%%%%%%%%%%%%%%%%%%%%%%%%%%%%%%%%%%%%%%%%%%%%%%%%%%%%%%%%%%%%%%%%%%%%%%%%%%%%%%%%%%%%%%%%
\section{Two-copy verification for graph states}\label{app:graph_disentangled}
In the two-copy case, the optimization tasks can also be reframed from the channel perspective. For simplicity, we use $\ket{\textbf{0}}$ to denote the vector $\ket{0\cdots0}$ in the first or second copy space. We consider the channel that regards $\proj{\psi}$ as an entanglement resource to implement a nonlocal unitary transformation on the second copy, this unitary transformation rotates $\ket{\psi}$ to $\ket{0\cdots 0}$ and $\ket{\psi^\perp}$ to another basis orthogonal to $\ket{0\cdots0}$. To formulate a $(n, d, 2)$ strategy for general graph states, we begin by constructing the Kraus operators, outlined below:
\begin{align}
\Lambda(\rho) &= \sum_{i=b_1\cdots b_{n}} M_i \rho M_i^\dagger  ,\notag\\
    M_{b_1\cdots b_n} &=[L_g^\dagger(\bm{b}) \otimes \ket{\textbf{0}}_n\bra{\bm{b}} ] \times A_g.
\end{align}
Consequently, the measurement operator takes the following form according to Theorem~\ref{theo:chanequ3}.
\begin{align}
     \Omega_g &=\sum_{\bm{b}} A_g^\dagger [L_g(\bm{b}) \proj{\textbf{0}}_n L_{g}^\dagger(\bm{b})] \otimes \proj{\bm{b}}  A_g \\
    &= \sum_{b_1,\cdots,b_{n} \in \{0,1\}} A_g^\dagger \bigotimes_{i = 1}^{n} [\proj{\cP(\!\!\!\!\!\! \sum_{\substack{j \\ (b_j,b_i) \in E}} \!\!\!\!\!\!b_j)}_{O_i} \otimes \proj{b_i}_{O_i'}] A_g \\
    & = \sum_{b_1,\cdots,b_{n} \in \{0,1\}} A_g^\dagger \bigotimes_{i = 0}^{N-1} [\proj{c_i(\bm{b})}_{O_i} \otimes \proj{b_i}_{O_i'}] A_g. 
\end{align}
Here, $\cP$ denotes a parity projection on $0$ or $1$, and $c_i(\bm{b})$ is a newly generated string according to string $\bm{b} = (b_1,\cdots,b_n)$ and graph $(V,E)$. We term $c(\bm{b})$ the \textit{graph parity string} of $\bm{b}$ with respect to graph $G$, indicating that $c_i(\bm{b})$ at a specific vertex, $i$ corresponds to the parity projection of the summation of all $b_j$ at the adjacent vertices. It is worth noting that $A_g^\dagger$ represents a basis transformation from the computational basis ${\ket{00},\ket{01},\ket{10},\ket{11}}_{OO'}$ to four maximally entangled states:
\begin{align}
\ket{\Phi_{00}}_{OO'} = \frac{\ket{00}+\ket{11}}{\sqrt{2}},\ \ket{\Phi_{01}}_{OO'} = \frac{\ket{01}+\ket{10}}{\sqrt{2}},\ \ket{\Phi_{10}}_{OO'} = \frac{\ket{00}-\ket{11}}{\sqrt{2}},\ \ket{\Phi_{11}}_{OO'} = \frac{\ket{01}-\ket{10}}{\sqrt{2}}.
\end{align}
The corresponding measurement operator is then given by:
\begin{align}
    \Omega_g = \sum_{\bm{b} \in \{0,1\}^n} \bigotimes_{j = 1}^{n} \proj{\Phi_{c_j(\bm{b}) b_j}}_{O_jO_j'}.
\end{align}
This operator indeed satisfy the symmetric condition of Eq.~\eqref{eq:critical2} 
for the symmertic property of basis ${\ket{\Phi_{00}},\ket{\Phi_{01}},\ket{\Phi_{10}},\ket{\Phi_{11}}}$ themselves.
What's more, we can prove that $\Omega_g$ already attained the global-optimal lower bounds.
To execute the strategy $\Omega_g$ , each party can measure their two qubits in the entanglement basis ${\ket{\Phi_{00}},\ket{\Phi_{01}},\ket{\Phi_{10}},\ket{\Phi_{11}}}$ with corresponding measurement outcomes ${00,01,10,11}$. Subsequently, they separate their first and second digits to form the strings $a$ and $b$, respectively, and verify whether $b$ constitutes the graph parity string of $a$ with respect to graph $G$. 

If we define the state $\ket{s_{inv}} = \ket{G}\otimes\ket{\omega}$, applying Eq.~\eqref{eq:randr_graph_disentangled}, we can deduce:
\begin{align}
  M_a \ket{s_{inv}} 
= \frac{1}{\sqrt{2^{|V|}}} (Q_g(a) \times B_g)\otimes \mathbb{I} \ket{\omega} \otimes\ket{0\cdots 0}
= \frac{1}{\sqrt{2^{|V|}}}V(a)\otimes \mathbb{I} \ket{\omega} \otimes\ket{0\cdots 0}.
\end{align}
While the unitary operator $V(a)$ is associated with $a$, it consistently rotates the graph state into $\ket{0\cdots 0}$ due to the stabilizing property of $Q(a)$ for the state $\ket{0\cdots 0}$. In this case, we calculate that:
\begin{align}
     \Omega_g \ket{s_{inv}}= \Lambda^\dagger (\proj{\textbf{0}}_n\otimes \proj{\textbf{0}}_n) \ket{s_{inv}} &= \sum_i M_i^\dagger \proj{\textbf{0}}_n\otimes \proj{\textbf{0}}_n M_i \ket{s_{inv}} \\
    &= \sum_i M_i^\dagger \ket{\textbf{0}}_n \otimes \ket{\textbf{0}}_n \times[\bra{\textbf{0}}\frac{1}{\sqrt{2^{|V|}}}V(a) \ket{\omega}] \\
    &= \braket{G}{\omega}\sum_i \frac{1}{\sqrt{2^{|V|}}} M_i^\dagger \ket{\textbf{0}}_n \otimes \ket{\textbf{0}}_n  \\
    &= \braket{G}{\omega} \times \ket{G} \otimes \ket{G}.
\end{align}
The second equality relies on the fact that $V(a)$ unitarily transforms $\ket{\psi}$ to $\ket{\textbf{0}}_n$. If we define the state $\ket{s} = \ket{\omega}\otimes\ket{G}$, where $\ket{G}$ is the graph state, employing Eq.~\eqref{eq:landr_graph_disentangled}, it becomes evident that:
\begin{align}
    M_a \ket{s} &=\frac{1}{\sqrt{2^{|V|}}} B_g \ket{\omega} \otimes\ket{0\cdots 0}.
\end{align}
Here, $B_g$ represents a unitary transformation that maps the graph state to $\ket{0\cdots 0}$. Similarly

\begin{align}
    \Omega_g \ket{s}= \Lambda^\dagger (\proj{\textbf{0}}_n\otimes \proj{\textbf{0}}_n) \ket{s} &= \sum_i M_i^\dagger \proj{\textbf{0}}_n\otimes \proj{\textbf{0}}_n M_i \ket{s} \\
    &= \sum_i M_i^\dagger \ket{\textbf{0}}_n \otimes \ket{\textbf{0}}_n [\bra{\textbf{0}}\frac{1}{\sqrt{2^{|V|}}}B_g \ket{\omega}] \\
    &= \braket{G}{\omega}\sum_i \frac{1}{\sqrt{2^{|V|}}} M_i^\dagger \ket{\textbf{0}}_n \otimes \ket{\textbf{0}}_n  \\
    &= \braket{G}{\omega} \ket{G} \otimes \ket{G}'.
\end{align}
We refer to $\Lambda$ as the self-disentangled channel of state $\ket{\psi}$. If we substitute $\ket{\omega} = \ket{\psi} = \ket{G}$, we will observe that Eq.~\eqref{eq:critical1} is satisfied. If we substitute $\ket{\omega} = \ket{\psi^\perp}$, we see that:
\begin{align}
    \Omega_g  \mathbb{P}_s \ket{\psi\psi^\perp} =\frac{1}{2} \Omega_g  (\ket{\psi\psi^\perp} + \ket{\psi^\perp\psi}) = 0.
\end{align}

Now, we can calculate the operator $\Omega_{\star}$ in Theorem~\ref{theorem:optimization_target-app}. 
First, we demonstrate that $\Omega_g \mathbb{P}_s \mathbb{P}_\psi = 0$:
\begin{align}
  \Omega_g \mathbb{P}_s\mathbb{P}_\psi 
= \Omega_g \sum_j \mathbb{P}_s \proj{\psi\psi^\perp_j} 
= \sum_j \braket{\psi}{\psi^\perp_j} \ket{\psi \psi} \bra{\psi\psi^\perp_j} 
= 0.
\end{align}
Then we conclude that  $ \Omega_{\star} = 2\mathbb{P}_\psi\mathbb{P}_s\Omega_g\mathbb{P}_s \mathbb{P}_\psi = 0$.  Such a strategy corresponds to a scenario where $\lambda_{\star}(\Omega_g) = 0$. 
Similarly, we can show that $\Omega_g \mathbb{P}_{\psi} = 0 $ and conclude that $\xi_{\star}(\Omega_g) = \gamma_{\star}(\Omega_g) = 0$. This means that $\varepsilon_{\max}\left(\Omega\right) \gg \varepsilon$.  
Thus, this implies that the verification efficiency is:
\begin{align}
    N_m(\Omega_g) = \frac{1}{\varepsilon (1 - \lambda_{\star}(\Omega_g)) + O(\varepsilon^{1.5})} \ln \frac{1}{\delta}= \frac{1}{\varepsilon + O(\varepsilon^{1.5}) } \ln \frac{1}{\delta}.
\end{align}
When $\varepsilon$ is sufficiently small, the efficiency converges to the globally optimal efficiency. 

We can directly calculate the passing probability of strategy $\Omega_g$ as
\begin{align}\label{eq:pass_prob_graph}
    p(\Omega_g)
= \bra{\sigma\sigma'}\Omega_g\ket{\sigma\sigma'}= (1 - \varepsilon_r)(1 - \varepsilon_r') + \varepsilon_r \varepsilon_r' \bra{\psi^\perp\psi'^\perp}\Omega_g\ket{\psi^\perp\psi'^\perp}
\approx 1 - \varepsilon_r - \varepsilon_r'.
\end{align}
This equation also illustrates that $\Omega_g$ is already a global-optimal strategy itself. 
To evaluate $\varepsilon_{\max}\left(\Omega\right)$, we note that:
\begin{align}
    \frac{dp}{d \varepsilon_r} = -1 + \varepsilon_r'+\varepsilon_r'\bra{\psi^\perp\psi'^\perp}\Omega_g\ket{\psi^\perp\psi'^\perp},\\
    \frac{dp}{d \varepsilon_r'} = -1 + \varepsilon_r +\varepsilon_r\bra{\psi^\perp\psi'^\perp}\Omega_g\ket{\psi^\perp\psi'^\perp}.
\end{align}
Then the
\begin{align}
    \frac{dp}{d\varepsilon_r} -\frac{dp}{d\varepsilon_r'} = -(1 + \bra{\psi^\perp\psi'^\perp}\Omega_g\ket{\psi^\perp\psi'^\perp})(\varepsilon_r - \varepsilon_r').
\end{align}
This implies that $p$ reaches its maximum when $\varepsilon_r = \varepsilon_r'$. 
Subsequently, we simplify the function as follows:
\begin{align}
    p(\varepsilon_r) = 1 - 2\varepsilon_r + \varepsilon_r^2 [1 + \bra{\psi^\perp\psi'^\perp}\Omega_g\ket{\psi^\perp\psi'^\perp}],
\end{align}
To confirm that this function reaches the maximum at $\varepsilon_r = \varepsilon$, we find another solution $\varepsilon_{\max}\left(\Omega\right)$ such that $p(\varepsilon) = p(\varepsilon_{\max}\left(\Omega\right))$ is satisfied:
\begin{align}
    \varepsilon_{\max}\left(\Omega\right) 
= \frac{2}{1 + \bra{\psi^\perp\psi'^\perp}\Omega_g\ket{\psi^\perp\psi'^\perp}} 
- \varepsilon > 1 - \varepsilon.
\end{align}
This provides a lower bound for $\varepsilon_{\max}\left(\Omega\right)$. For a sufficiently small $\varepsilon$, it is effective to verify that $\varepsilon_{r} > \varepsilon_{\max}\left(\Omega\right)$. 

%%%%%%%%%%%%%%%%%%%%%%%%%%%%%%%%%%%%%%%%%%%%%%%%%%%%%%%%%%%%%%%%%%%%%%%%%%%%%%%%%%%%%%%%%%%%%%%%%%%
%%%%%%%%%%%%%%%%%%%%%%%%%%%%%%%%%%%%%%%%%%%%%%%%%%%%%%%%%%%%%%%%%%%%%%%%%%%%%%%%%%%%%%%%%%%%%%%%%%%
\subsection{Demonstrative example: Two-copy Bell state verification} 

Here, we consider a straightforward scenario where two copies of the Bell states $O_0,O_1$ and $O_0',O_1'$ are distributed to parties $0$ and $1$. A Bell state is equivalent to a graph state up to a unitary transformation $H_{O_0}H_{O_0'}$ applied to the Bell state. Thus, the Bell state can be efficiently verified by the two-copy verification operator, as illustrated in Table~\ref{tab:LGS2_all_parity_string}:
\begin{align}
    \Omega_g 
&= \proj{\Phi_{00}}_{O_0O_0'} \otimes \proj{\Phi_{00}}_{O_1O_1'} + \proj{\Phi_{11}}_{O_0O_0'} \otimes \proj{\Phi_{11}}_{O_1O_1'} \notag \\
&\qquad + \proj{\Phi_{01}}_{O_0O_0'} \otimes \proj{\Phi_{10}}_{O_1O_1'} + \proj{\Phi_{10}}_{O_0O_0'} \otimes \proj{\Phi_{01}}_{O_1O_1'}.
\end{align}
\begin{table}[!hbtp]
\centering
\setlength{\tabcolsep}{8pt}
\setlength\heavyrulewidth{0.3ex}
\renewcommand{\arraystretch}{1.5}
\begin{tabular}{@{}ccc@{}}
    \toprule
        \textbf{ Code $(a_0,a_1)$} & \textbf{Code $(b_0,b_1)$}& \textbf{Passing Measurement} \\\hline
        $(0,0)$ & $(0,0)$& $ \proj{\Phi_{00}} \otimes \proj{\Phi_{00}}$ \\
       $(0,1)$  & $(1,0)$& $ \proj{\Phi_{10}} \otimes \proj{\Phi_{01}}$ \\
        $(1,0)$ & $(0,1)$& $ \proj{\Phi_{01}} \otimes \proj{\Phi_{10}}$ \\
         $(1,1)$ & $(1,1)$& $ \proj{\Phi_{11}} \otimes \proj{\Phi_{11}}$ \\
    \bottomrule
\end{tabular}
\caption{The table presents all four graph parity codes for a two-qubit linear graph state. 
Each parity code corresponds to a projective measurement on the Bell basis. 
The weighted sum of these passing measurement operators lead to our strategy operator $\Omega_g$.}
\label{tab:LGS2_all_parity_string}
\end{table}

Furthermore, we notice that
\begin{align}
 H_{O_0}H_{O_0'} \ket{\Phi_{01}}_{O_0O_0'} = \ket{\Phi_{10}}_{O_0O_0'}, \\
 H_{O_0}H_{O_0'} \ket{\Phi_{10}}_{O_0O_0'} = \ket{\Phi_{01}}_{O_0O_0'} ,\\
  H_{O_0}H_{O_0'} \ket{\Phi_{00}}_{O_0O_0'} = \ket{\Phi_{00}}_{O_0O_0'} ,\\
   H_{O_0}H_{O_0'} \ket{\Phi_{11}}_{O_0O_0'} = \ket{\Phi_{11}}_{O_0O_0'}.
\end{align}
Therefore, the two-copy verification strategy of Bell state is given by:
  \begin{align}
    \Omega_{\rm Bell} 
&= H_{O_0}H_{O_0'} \Omega_g H_{O_0}H_{O_0'}\\
&= \proj{\Phi_{00}}_{O_0O_0'} \otimes \proj{\Phi_{00}}_{O_1O_1'} + \proj{\Phi_{11}}_{O_0O_0'} \otimes \proj{\Phi_{11}}_{O_1O_1'} \notag \\
&\quad + \proj{\Phi_{01}}_{O_0O_0'} \otimes \proj{\Phi_{01}}_{O_1O_1'} + \proj{\Phi_{10}}_{O_0O_0'} \otimes \proj{\Phi_{10}}_{O_1O_1'}.
\end{align}
It should be noted that this measurement operator in the $16 \times 16$ linear space 
may possess multiple unit eigenvalues. 
To numerically gauge its efficiency, we perform the Schmidt decomposition of the operator:
 \begin{align}
    \Omega_{\rm Bell} - \proj{\psi}\otimes \proj{\psi} &= \sum_{i} \Lambda_i M_{O_0,O_1}^{(i)} \otimes M_{O_0',O_1'}^{(i)}  \\
    &= \begin{pmatrix}
    0.5&0&0&-0.5\\
    0&0 &0&0\\
    0&0&0 &0\\
    -0.5&0&0&0.5\\
    \end{pmatrix}_{O_0,O_1} \otimes 
    \begin{pmatrix}
    0.5&0&0&-0.5\\
    0&0 &0&0\\
    0&0&0 &0\\
    -0.5&0&0&0.5\\
    \end{pmatrix}_{O_0',O_1'} \notag \\
    &\qquad+ \begin{pmatrix}
    0&0&0&0\\
    0&0 &\frac{1}{\sqrt{2}}&0\\
    0&\frac{1}{\sqrt{2}}&0 &0\\
    0&0&0&0\\
    \end{pmatrix}_{O_0,O_1} \otimes 
    \begin{pmatrix}
    0&0&0&0\\
    0&0 &\frac{1}{\sqrt{2}}&0\\
    0&\frac{1}{\sqrt{2}}&0 &0\\
    0&0&0&0\\
    \end{pmatrix}_{O_0',O_1'} \notag \\
    &\qquad+ \begin{pmatrix}
    0&0&0&0\\
    0&\frac{1}{\sqrt{2}} &0&0\\
    0&0&\frac{1}{\sqrt{2}} &0\\
    0&0&0&0\\
    \end{pmatrix}_{O_0,O_1}\otimes 
    \begin{pmatrix}
    0&0&0&0\\
    0&\frac{1}{\sqrt{2}} &0&0\\
    0&0&\frac{1}{\sqrt{2}} &0\\
    0&0&0&0\\
    \end{pmatrix}_{O_0',O_1'}.
\end{align}
We observe that all three matrices satisfy $M^{(i)} \ket{\psi} = 0$. This indicates that the term $\Omega \ket{\psi} \otimes \ket{\psi^{\perp}}$ will vanish during the calculation. Now, let's consider the fake state $\ket{\sigma} \otimes \ket{\sigma'}$ according to Lemma~\ref{lemma:pure-states} and calculate:
\begin{align}
    p(\Omega_{\rm Bell}, \ket{\sigma,\sigma'}) 
&= |\braket{\psi}{\sigma}\braket{\psi}{\sigma'}| + \bra{\sigma\sigma'}\left(\sum_{i} 
    \Lambda_i M_{O_0,O_1}^{(i)} \otimes M_{O_0',O_1'}^{(i)}\right)\ket{\sigma\sigma'} \\
&= (1 - \varepsilon_r)\times (1 - \varepsilon_r') + \varepsilon_r \varepsilon_r' 
    \bra{\psi^\perp \psi'^\perp}
    \left(\sum_{i} \Lambda_i M_{O_0,O_1}^{(i)} \otimes M_{O_0',O_1'}^{(i)}\right)
    \ket{\psi^\perp \psi'^\perp}  \\
    &\leq (1 - \varepsilon_r)\times (1 - \varepsilon_r') + \varepsilon_r \varepsilon_r'.
\end{align}
When we only consider the fake state near the target state, where $\varepsilon_r$ is sufficiently small, the linear term predominates. In this case, we can conclude that
\begin{align}
  p(\Omega_{\rm Bell},\ket{\sigma,\sigma'}) \leq (1 - \varepsilon_r)\times (1 - \varepsilon_r') + \varepsilon_r \varepsilon_r' \leq (1 - \varepsilon)^2 + \varepsilon^2 . 
\end{align}
The condition that the fake state is near the target state is critical. Specifically, we observe that setting $\varepsilon_r = \varepsilon_r' = 1$ leads to $p(\Omega,\ket{\sigma,\sigma'}) = 1$, which is certainly greater than $(1 - \varepsilon)^2 + \varepsilon^2$. However, in this extreme case, the fake state $\ket{\sigma\sigma'}$ is already orthogonal to the target state $\ket{\psi}\otimes \ket{\psi}$. Therefore, this fake state can be easily verified with a standard single-copy strategy. After obtaining $p(\Omega_{\rm Bell})$, 
we can draw conclusions according to Eq.~\eqref{eq:Nm}:
\begin{align}
   N_m(\Omega_{\rm Bell}) 
&= \frac{2\ln\delta}{\ln\left[(1 -2\varepsilon + \varepsilon^2) + \varepsilon^2\!\right]}.
\end{align}
When both $\varepsilon$ and $\delta$ are sufficiently small, the required number of copies scales as $1/\varepsilon\ln1/\delta$, approaching the optimal strategy.

%%%%%%%%%%%%%%%%%%%%%%%%%%%%%%%%%%%%%%%%%%%%%%%%%%%%%%%%%%%%%%%%%%%%%%%%%%%%%%%%%%%%%%%%%%%%%%%%%%%
%%%%%%%%%%%%%%%%%%%%%%%%%%%%%%%%%%%%%%%%%%%%%%%%%%%%%%%%%%%%%%%%%%%%%%%%%%%%%%%%%%%%%%%%%%%%%%%%%%%
%%%%%%%%%%%%%%%%%%%%%%%%%%%%%%%%%%%%%%%%%%%%%%%%%%%%%%%%%%%%%%%%%%%%%%%%%%%%%%%%%%%%%%%%%%%%%%%%%%%
\subsection{Application: Graph state Fidelity estimation}\label{appx:fidelity-estimation}

For quantum devices that generate quantum states with independent and identical distribution, we can regard the fake state as two copies of  $\sigma \otimes\sigma'$.
Supposed that:
\begin{align}
    \sigma = \sum_i p_i \proj{\sigma_i},\quad \sigma' = \sum_j p_j' \proj{\sigma_j'}.
\end{align}
Here $\ket{\sigma_i}$ and $\ket{\sigma_j'}$ are both pure states that satisfy:
\begin{align}
    \ket{\sigma_i} = \sqrt{1 - \varepsilon_{ri}} \ket{G} + \sqrt{\varepsilon_{ri}} \ket{G_i^\perp}, \\
    \ket{\sigma_j'} = \sqrt{1 - \varepsilon'_{rj}} \ket{G} + \sqrt{\varepsilon'_{rj}} \ket{G_j'^{\perp}}.
\end{align}
Then the fidelity of these two copies satisfied:
\begin{align}
    F := \bra{G} \sigma \ket{G} = \sum_i p_i (1 - \varepsilon_{ri}), \quad F' := \bra{G} \sigma' \ket{G} = \sum_j p_j' (1 - \varepsilon_{rj}').
\end{align}
The passing rate of strategy $\Omega_g$ can be evaluated via
\begin{align}
  p_s 
= \tr[\Omega_g(\sigma \otimes \sigma')] 
= \sum_i \sum_j p_i p_j' \bra{\sigma_i \sigma_j'}\Omega_g\ket{\sigma_i \sigma_j'}.
\end{align}
According to Eq.~\eqref{eq:pass_prob_graph}, 
\begin{align}
    \bra{\sigma_i\sigma_j'}\Omega_g\ket{\sigma_i\sigma_j'}= (1 - \varepsilon_{ri})(1 - \varepsilon_{rj}') + A_{ij}\varepsilon_{ri}\varepsilon_{rj}',
\end{align}
where $A_{ij} := \bra{G_i^\perp G'^\perp_j}\Omega_g\ket{G_i^\perp G'^\perp_j}$. 
Then $p_s$ satisfies:
\begin{align}
    p_s &= \sum_i \sum_j p_i p_j' [(1 - \varepsilon_{ri})(1 - \varepsilon_{rj}') + A_{ij}\varepsilon_{ri}\varepsilon_{rj}'] \\
    &=\left[\sum_i p_i(1 - \varepsilon_{ri})\right]\times \left[\sum_j p_j'(1 - \varepsilon_{rj}')\right] + \sum_i \sum_j p_i p_j' A_{ij}\varepsilon_{ri}\varepsilon_{rj}' \\
    &= F \times F' + \sum_i \sum_j p_i p_j' A_{ij}\varepsilon_{ri}\varepsilon_{rj}'.
\end{align}
When the infidelity $\varepsilon$ is sufficiently small, we have $(1 - F) \sim \cO(\varepsilon)$ and $(1 - F') \sim \cO(\varepsilon)$. 
Because $p_i,p_j' \in [0,1]$, it holds that for all $i$ and $j$,
\begin{align}
    p_i\varepsilon_{ri} \sim \cO(\varepsilon) ,\quad p_j'\varepsilon_{rj}' \sim \cO(\varepsilon).
\end{align}
Thus, we can conclude that
\begin{align}
p_s = \bra{G} \sigma \ket{G} \cdot \bra{G} \sigma' \ket{G} + \cO(\varepsilon^2).
\end{align}

%%%%%%%%%%%%%%%%%%%%%%%%%%%%%%%%%%%%%%%%%%%%%%%%%%%%%%%%%%%%%%%%%%%%%%%%%%%%%%%%%%%%%%%%%%%%%%%%%%%
%%%%%%%%%%%%%%%%%%%%%%%%%%%%%%%%%%%%%%%%%%%%%%%%%%%%%%%%%%%%%%%%%%%%%%%%%%%%%%%%%%%%%%%%%%%%%%%%%%%
%%%%%%%%%%%%%%%%%%%%%%%%%%%%%%%%%%%%%%%%%%%%%%%%%%%%%%%%%%%%%%%%%%%%%%%%%%%%%%%%%%%%%%%%%%%%%%%%%%%
\section{Illustrative example for dimension expansion strategy}\label{appx:dimension-expansion}

In this section, we present the explicit dimension expansion construction of a $(2,2,2)$-strategy applicable to the two-qubit state $\ket{\psi_\theta}=\cos\theta\ket{00}_{AB}+\sin\theta\ket{11}_{AB}$ distributed between two parties, 
denoted as $A$ and $B$, representing a specific instance of GHZ-like states. A more generalized approach can be formulated similarly, drawing from the efficient $(n,1,d)$ verification strategy proposed by Li \emph{et al.} 
for GHZ-like qudit states~\cite{Li_2020}.
According to the dimension expansion method outlined in the main text, this task is equivalent to identifying a $(2,1,2^2)$ strategy applicable to the GHZ-like qudit state of the form:
\begin{align}
    \ket{\Psi_\theta} 
:= \cos^2 \! \theta \ket{\textbf{0}}_A \otimes \ket{\textbf{0}}_B 
+ \cos\theta \sin\theta (\ket{\textbf{1}}_A \otimes \ket{\textbf{1}}_B 
+ \ket{\textbf{2}}_A \otimes \ket{\textbf{2}}_B) 
+ \sin^2 \! \theta \ket{\textbf{3}}_A \otimes \ket{\textbf{3}}_B.
\end{align}
This state corresponds to the $2$-th tensor product state $\ket{\psi_{ \theta}}^{\otimes 2}$ via the following identification: $\ket{00}\to\ket{\textbf{0}}, \ket{01}\to\ket{\textbf{1}},\ket{10}\to\ket{\textbf{2}},\ket{11}\to\ket{\textbf{3}}$.

In a previous study, various straightforward and efficient protocols were proposed for verifying bipartite 
qudit pure states such as $\ket{\Psi_\theta}$. 
Here, we employ a strategy based on a two-way LOCC (Local Operations and Classical Communication) strategy, 
denoted as $\Omega_{\rm IV}$ in~\cite[Eq. (48)]{Li_2020}. 
It is worth noting that alternative qudit measurement strategies offering higher efficiency can be chosen, 
leading to a different $(2,2,2)$ multi-copy strategy requiring fewer copies.

In the qudit verification strategy $\Omega_{\rm IV}$, five measurement bases are initially defined based on the five mutually unbiased bases (MUBs) for the four-dimensional space. Through the identification between qubit states and qudit states, these bases can be explicitly expressed as follows. It is important to note that the projective measurements onto the latter four sets of bases necessitate local interactions between two qubits in each party.
\begin{subequations}
\begin{align*}
    B_0 &: \left\{\ket{00},\ket{01},\ket{10},\ket{11}\right\} ,\\
    B_1 &: \left\{\frac{\ket{00}+\ket{01}+\ket{10}+\ket{11}}{2},\frac{\ket{00}+\ket{01}-\ket{10}-\ket{11}}{2},\frac{\ket{00}-\ket{01}-\ket{10}+\ket{11}}{2},\frac{\ket{00}-\ket{01}+\ket{10}-\ket{11}}{2}\right\}, \\
    B_2 &: \left\{\frac{\ket{00}-\ket{01}-i\ket{10}-i\ket{11}}{2},\frac{\ket{00}-\ket{01}+i\ket{10}+i\ket{11}}{2},\frac{\ket{00}+\ket{01}+i\ket{10}-i\ket{11}}{2},\frac{\ket{00}+\ket{01}-i\ket{10}+i\ket{11}}{2}\right\}, \\
    B_3 &: \left\{\frac{\ket{00}-i\ket{01}-i\ket{10}-\ket{11}}{2},\frac{\ket{00}-i\ket{01}+i\ket{10}+\ket{11}}{2},\frac{\ket{00}+i\ket{01}+i\ket{10}-\ket{11}}{2},\frac{\ket{00}+i\ket{01}-i\ket{10}+\ket{11}}{2}\right\}, \\
    B_4 &: \left\{\frac{\ket{00}-i\ket{01}-\ket{10}-i\ket{11}}{2},\frac{\ket{00}-i\ket{01}+\ket{10}+i\ket{11}}{2},\frac{\ket{00}+i\ket{01}-\ket{10}+i\ket{11}}{2},\frac{\ket{00}+i\ket{01}+\ket{10}-i\ket{11}}{2}\right\}. 
\end{align*}
\end{subequations}

In the procedure of the $(2,2,2)$ multi-copy strategy derived from the qudit verification strategy $\Omega_{\rm IV}$, both parties have an equal probability of initiating a test. If the test commences with party $A$, it holds a probability $p_0$ of measuring its two qubits on the first MUBs $B_0$. Additionally, party $A$ possesses a probability of $(1 - p_0)/4$ for measurement on the remaining $d^k$ mutually unbiased bases. Subsequently, party $A$ communicates its measurement choice and results $\ket{u_{li}}$ to party $B$, where $\ket{u_{li}}$ denotes the $i$-th basis of the $l$-th set of MUBs. Party $B$ then performs a measurement on the basis containing the reduced state 
$\ket{v_{li}} = \braket{u_{li}}{\Psi_\theta}/|\braket{u_{li}}{\Psi_\theta}|^2$ 
based on the message from Alice and passes the test if the result matches this reduced state.

Ref.~\cite{Li_2020} showed that setting $p_0 = (s_0^2 + s_1^2)/(2 + s_0^2 + s_1^2)$, where $s_0$ and $s_1$ represent the largest and second largest terms of the coefficient set $\{\cos^2\theta, \cos\theta \sin\theta , \sin^2\theta\}$, respectively, achieves the optimal strategy for verifying $\ket{\Psi_\theta}$. 
When $\theta \in (0, \pi/4)$, it holds that $s_0 = \cos^2 \theta$ and $s_1 = \sin \theta \cos \theta$. 
Consequently, the second largest eigenvalues of the verification strategy for $\ket{\Psi_\theta}$
is given by $\lambda_2(\Omega_{\rm IV})=\cos^2 \theta/(2+\cos^2 \theta)$.
Utilizing the result from the main text, we can conclude that for small values of $\varepsilon$, 
the number of copies required to achieve a certain worst-case failure probability $\delta$ is upper bounded by
\begin{align}\label{eq:dmext_N}
    N_{{\rm de},2}(\Omega_{\rm IV}) = \frac{2 + \cos^2 \theta}{2\varepsilon}\ln\frac{1}{\delta}.
\end{align}

%%%%%%%%%%%%%%%%%%%%%%%%%%%%%%%%%%%%%%%%%%%%%%%%%%%%%%%%%%%%%%%
%%%%%%%%%%%%%%%%%%%%%%%%%%%%%%%%%%%%%%%%%%%%%%%%%%%%%%%%%%%%
%%%%%%%%%%%%%%%%%%%%%%%%%%%%%%%%%%%%%%%%%%%%%%%%%%%%%%%%%%%%%%%
\section{Comparation with related works}
\label{appx:comparision}

In this appendix, we compare our results with existing quantum-memory based quantum state verification studies and Bell sampling based works.
We first briefly summarize and then describe in detail the essential differences. 

\textit{Difference with quantum-memory based quantum state verification works.}
Briefly, compared to the Error Number Gates (ENG) strategy~\cite{miguel2022collective}, 
our two-copy strategy involves simpler Bell measurements and can be applied to a broader range of states. 
Unlike the non-demolition strategy~\cite{Liu_2021}, our task setting does not require transferring qubits between different verifiers.
This underscores the novelty and practicability of our multi-copy verification tasks. 
Compared to the stabilizer state verification strategies based on non-adaptive measurements~\cite{pallister2018optimal},
Our graph state verification strategy, for a $n$-qubit graph state, either (i) reduce the number of measurement settings from $2^n - 1$ to $1$ while realize a constant factor sample complexity improvement, or (ii) reduce the number of measurement settings from $n - 1$ to $1$ while improve the sample complexity from $\mathcal{O}(n)$ to $\mathcal{O}(1)$.

\textit{Difference with Bell sampling works.}
The idea of sampling in the Bell basis to learn about the properties of quantum states already has found many applications in quantum computing, 
including but not limited to learning and testing stabilizer states~\cite{montanaro2017learning,Gross-2021-Commun.Math.Phys.}, measuring magic~\cite{haug2023scalable},
and Bell sampling as a universal model of quantum computation~\cite{PhysRevLett.133.020601}.
Our two-copy verification strategy is similar to Bell sampling protocols. 
The novelty of our approach lies in being the first to demonstrate that Bell sampling of stabilizer states is 
actually a globally optimal strategy in the quantum state verification task setting.

In constructing the Bell measurement strategy, as mentioned in Appendices~\ref{app:channel},~\ref{app:proofeqGraph} and~\ref{app:graph_disentangled}, 
we first regard the quantum verification protocol as finding a channel that maximally preserves 
the distinguishability between two states. Then, we use the one-bit teleportation protocol to 
construct such an information-preserving channel. 
This scheme is illuminating and might provide a better understanding of why Bell measurements behave better.

\subsection{Comparision with ENG operations based quantum state verification}

In the first strategy proposed by Miguel-Ramiro et al.~\cite{miguel2022collective}, Error Number Gates (ENG) are used to encode noise from noisy Bell states into a \(d\)-dimensional maximally entangled state. Specifically, to collectively certify two noisy Bell states, they require ancilla qudits \(A_a\) and \(B_a\) in the state \(\ket{\Phi} = \frac{1}{2}(\ket{00}+\ket{11}+\ket{22}+\ket{33})_{A_a B_a}\). For each noisy Bell state, for example, the first one (\(A_1, B_1\)), the ENG operation is \(CX_{A_1 \to A_a} \otimes CX_{B_1 \to B_a}\), where \(CX_{C \to T}\) is a hyper-control-\(X\) gate that realizes \(\ket{u}_T = \ket{u+1 \mod 4}\) if the control qubit is in the state \(\ket{1}_C\). After the ENG encoding, the ancilla qudits are properly measured to complete the certification.

The first significant difference is that our strategy does not require perfectly entangled states. 
In our two-copy verification strategy, the verifiers need to conduct two-qubit Bell measurement, 
which is a different in experimental settings.

In~\cite{miguel2022collective}, they also mentioned the generalization that the \(d\)-dimensional maximally entangled state is not necessary 
because it can always be obtained by directly embedding noisy copies of the initial ensemble. 
For this generalized case, the difference between our work and theirs is summarized as follows.

For our two-copy verification strategy for graph states, which certainly covers the noisy Bell state, only qubit control-\(X\) gates between locally restored qubits are required to implement the transversal Bell measurements. However, the ENG operation for the embedding case requires Toffoli gates to equivalently realize the hyper control-\(X\) gate. In experiments, Toffoli gates are more complex compared to \(CNOT\) gates. This concludes that our two-copy strategy is simpler and can be applied to more general quantum states compared to the ENG operations in~\cite{miguel2022collective}.
Taking the two-copy case as an example, the ancilla state \(\ket{\Phi}\) defined earlier now becomes two embedded Bell states \(Q_0Q_1\) and \(P_0P_1\) through the equivalence \(\ket{2m+n}_{A_a} \to \ket{mn}_{Q_1Q_0}\).

\begin{align}
    \ket{\Phi} = \frac{1}{2}(\ket{00}_{Q_1Q_0}\ket{00}_{P_1P_0}+\ket{01}_{Q_1Q_0}\ket{01}_{P_1P_0}+\ket{10}_{Q_1Q_0}\ket{10}_{P_1P_0}+\ket{11}_{Q_1Q_0}\ket{11}_{P_1P_0}).
\end{align}
Similarly, in the embedding case, the gate \(CX_{C \to T_1T_0}\) realizes the linear transformation below if the control qubit \(C\) is in the state \(\ket{1}\):
\begin{align}
   \ket{00}_{T_1T_0} \to \ket{01}_{T_1T_0},\ket{01}_{T_1T_0} \to \ket{10}_{T_1T_0},\ket{10}_{T_1T_0}\to \ket{11}_{T_1T_0},\ket{11}_{T_1T_0} \to \ket{00}_{T_1T_0},
\end{align}
Writing such a linear transformation using qubit gates actually gives \(CX_{C;T_0} \circ CCX_{C,T_0;T_1}\), as shown in Fig.~\ref{fig:comp_1}.

\begin{figure}[!htbp]
    \centering
    \begin{quantikz}[row sep=0.3cm, column sep=0.3cm]
  \lstick{$C$}             \qw     &\ctrl{1}  &\qw    \\
  \lstick{$T(d = 4)$} \qw & \gate{X_4}  &\qw    
\end{quantikz}
\hspace{1cm}
    \begin{quantikz}[row sep=0.3cm, column sep=0.3cm]
  \lstick{$C$} & \ctrl{2}    &\ctrl{1}  &\qw    \\
  \lstick{$T_0$} & \control{} &  \targ{} &\qw \\
  \lstick{$T_1$} & \targ{} &  \qw    & \qw     
\end{quantikz}
    \caption{ In the embedding setting, the equivalent qubit gates implement the hyper control-\(X\) gate with \(d = 4\).}
    \label{fig:comp_1}
\end{figure}
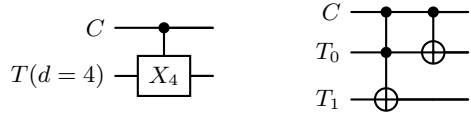

In~\cite{miguel2022collective}, it was also stated that embedding without ENG operations can approach the global optimal strategy. 
They used \(m\)-copies of noisy Bell states and directly measure the amplitude index. 
This is similar to our dimension expansion method, which treats the multi-copy qubit state as a qudit state 
and then applies a certain efficient local verification strategy for qudits.

\subsection{Comparision with non-demolition measurements based quantum state verification}

In the non-demolition measurements based quantum state verification originally proposed by Liu \emph{et al.}~\cite{Liu_2021}, 
the verifier first entangles the noisy state with an ancilla system, followed by a measurement on the ancilla. 
The measurement of ancilla qubits does not destroy the noisy copy, allowing them to be used in subsequent tests.

Taking the noisy Bell state as an example, an ancilla qubit in state \(\ket{0}\) is prepared and then interacts with the noisy Bell state according to the first circuit in Figure \ref{fig:comp_2}. After this interaction, we measure the ancilla qubit in the \(Z\) basis. If the result is \(\ket{0}\), we use a fresh ancilla qubit again and measure it in the \(Z\) basis after implementing the second circuit. The verification will pass if both results are \(\ket{0}\). As shown in reference~\cite{Liu_2021}, this strategy achieves global optimality.

\begin{figure}[!htbp]
    \centering
    \begin{quantikz}[row sep=0.3cm, column sep=0.3cm]
  \lstick[2]{noisy state}&\ctrl{2}  &\qw      & \qw \\
                         & \qw      &\ctrl{1} & \qw \\
  \lstick{$\ket{0}$}     &\targ{}   &\targ{}  & \qw
\end{quantikz}
\hspace{1cm}
       \begin{quantikz}[row sep=0.3cm, column sep=0.3cm]
  \lstick[2]{noisy state}&\gate{H} &\ctrl{2}  &\qw      &\gate{H} & \qw  \\
                         &\gate{H} & \qw      &\ctrl{1} &\gate{H} & \qw \\
  \lstick{$\ket{0}$}     &\qw      &\targ{}   &\targ{}  &\qw      & \qw
\end{quantikz}
    \caption{Circuits for the experimental realization of  
    quantum verification of noisy Bell states using non-demolition measurements. 
    The target Bell state is input on the first two qubits, and $\ket{0}$ represents the ancilla qubit.}
    \label{fig:comp_2}
\end{figure}
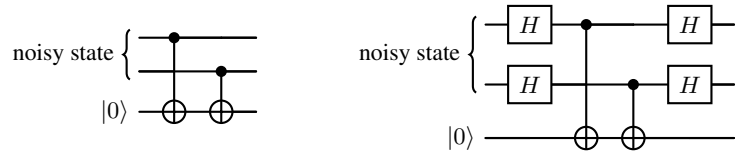

To compare with our memory-assisted scheme, we note that the ancilla qubits should interact with every qubit in the noisy state, as shown in Figure \ref{fig:comp_2}. This task setting differs from our assumption that verifiers can only implement local gates and exchange classical information. From an experimental point of view, it is also more resource-intensive to transfer ancilla qubits among multiple verifiers.

\subsection{Comparision with non-adaptive measurements based stabilizer state verification}

It is well-known that stabilizer states can be verified efficiently non-adaptively, 
as shown in the Theorems 6 and 7 in the work by Pallister \emph{et al.}~\cite{pallister2018optimal}. 
To justify the improvement through using Bell measurements, we noted that there are two critical factors to adjust the efficiency of a verification strategy. One, as mentioned by you, is the sample complexity $N(\Omega)$, and the other is the number of measurement settings. In fact, research~\cite{PhysRevA.104.062439} has shown the minimum number of measurement settings required to 
non-adaptively verify bipartite pure states.

For the first stabilizer verification strategy by Pallister \emph{et al.}~\cite[Theorem 6]{pallister2018optimal},
if $n$ is the number of qubits,
one randomly chooses one stabilizer from the $2^n - 1$ stabilizers of the stabilizer state $\ket{\psi}$ and projects onto the positive eigenspace of this stabilizer. This strategy has an efficiency of ${(2^n - 1)}/{2^{n - 1}} \times \varepsilon^{-1} \ln {\delta}^{-1}$. Indeed, our strategy only has a constant sample complexity improvement because ${(2^n - 1)}/{2^{n - 1}} < 2$ for all $n$. However, this strategy requires $2^n - 1$ measurement settings, which becomes particularly challenging when it is difficult or slow to switch measurement settings, as is the case in many practical scenarios. In our strategy, only one measurement setting is required.

The second strategy by Pallister \emph{et al.}~\cite[Theorem 7]{pallister2018optimal} improves the number of measurement settings. 
It chooses one stabilizer from $n$ stabilizer generators of $\ket{\psi}$ and projects onto the positive eigenspace of this generator.
This strategy requires only $n$ measurement settings.
However, the sample complexity now becomes $N = n \varepsilon^{-1} \ln \delta^{-1}$.
This complexity scales linearly with the number of qubits $n$, thus concluding a $\mathcal{O}(n)$ improvement in sample complexity for our strategy.

To the best of our knowledge, there is still no strategy that can use constant measurement settings and constant sample complexity to verify stabilizer states. We thus expect the existence of quantum memory or collective measurements can help to find a strategy with both better sampling complexity and fewer measurement settings.

\subsection{Comparision with quantum state verification in the adversarial scenario}

In the standard quantum state verification~\cite{pallister2018optimal}, 
the adversary can only prepare \emph{independent} multipartite entangled states and send to the local verifiers.
After receiving a quantum state, the verifiers conduct local measurments and make decision based on the measurement outcomes.
From the quantum memory perspective, neither the adversary nor the verifiers have access to quantum memory.

In quantum state verification in the adversarial scenario~\cite{Zhu_2019_adver},
the adversary is much more powerful and can produce an arbitrarily correlated or even entangled state
among many state copies, which is applicable to the case of \emph{nonindependent} sources. 
After receiving the multipartite quantum state in many copies, the verifiers conduct local measurments and make decision based on the measurement outcomes.
Since the verifiers have no quantum memory, they have to conduct measurements qudit by qudit
but cannot conduct measurements across many qudits.
From the quantum memory perspective, the adversary possesses a quantum memory but the verifiers do not have quantum memory.

Our work generalizes the standard quantum state verification and is converse to the quantum state verification in the adversarial scenario:
In the verification task under our consideration, the adversary does not possess quantum memory but the verifiers have access to quantum memory.
Intuitively, we can expect that the verifiers can acomplish the verification tasks more efficiently since they
can store the quantum states and measure them collectively to make decision.
The main contribution of our work is that we rigorously justify this intuition by establishing 
a mathematical framework and proposing various strategy construction techniques to accomplish this task.

In the most general case, we can consider quantum state verification where both the adversary and 
the verifiers have access to different amount of quantum memory, quantified by the number of qudits that they can 
store before processing. This corresponds to quantum state verification in the experimental and practical situations.
We leave this interesting problem as future work.

\end{document}